\documentclass[onecolumn,draftclsnofoot]{IEEEtran}

\usepackage{amsmath,cite,amsfonts,amssymb,psfrag}
\usepackage{graphicx}
\usepackage{rotating}
\usepackage{citesort}

\newtheorem{theorem}{Theorem}

\newtheorem{remark}{Remark}
\newtheorem{definition}{Definition}
\newtheorem{proposition}{Proposition}
\newtheorem{lemma}{Lemma}
\newtheorem{corollary}{Corollary}
\newtheorem{proof}{Proof}

\begin{document}
\title{Coding With Action-dependent Side Information and Additional
  Reconstruction Requirements}

\author{
\authorblockN{Kittipong Kittichokechai, Tobias J. Oechtering, and Mikael Skoglund\\}
\authorblockA{KTH Royal Institute of Technology, Stockholm, Sweden\\
School of Electrical Engineering and the ACCESS Linnaeus Center\\
Email: kki@kth.se, oech@kth.se, and skoglund@ee.kth.se
}
}
\maketitle
\begin{abstract}
  Constrained lossy source coding and channel coding with side
  information problems which extend the classic Wyner--Ziv and
  Gel'fand--Pinsker problems are considered. Inspired by applications
  in sensor networking and control, we first consider lossy source
  coding with two-sided partial side information where the
  quality/availability of the side information can be influenced by a
  cost-constrained action sequence. A decoder reconstructs a source
  sequence subject to the distortion constraint, and at the same time,
  an encoder is additionally required to be able to estimate the
  decoder's reconstruction. Next, we consider the channel coding
  ``dual'' where the channel state is assumed to depend on the action
  sequence, and the decoder is required to decode both the transmitted
  message and channel input reliably.

  Implications on the fundamental limits of communication in discrete
  memoryless systems due to the additional reconstruction constraints
  are investigated. Single-letter expressions for the
  rate-distortion-cost function and channel capacity for the
  respective source and channel coding problems are derived. The dual
  relation between the two problems is discussed. Additionally, based
  on the two-stage coding structure and the additional reconstruction
  constraint of the channel coding problem, we discuss and give an
  interpretation of the \emph{two-stage coding} condition which
  appears in the channel capacity expression. Besides the rate
  constraint on the message, this condition is a necessary and
  sufficient condition for reliable transmission of the channel input
  sequence over the channel in our ``two-stage'' communication
  problem. It is also shown in one example that there exists a case
  where the two-stage coding condition can be active in computing the
  capacity, and it thus can actively restrict the set of capacity
  achieving input distributions.
\end{abstract}

\section{Introduction}

The problems of source coding with side information and channel coding
with state information have received considerable attention due to
their broad set of applications, e.g., in high-definition television
where the noisy analog version of the TV signal is the side
information at the receiver, in cognitive radio where the secondary
user has knowledge of the message to be transmitted by the primary
user, or in digital watermarking where the host signal plays a role of
state information available at the transmitter
\cite{Cover2002},\cite{Keshet2007}. In \cite{WynerZiv} Wyner and Ziv
considered rate-distortion coding for a source with side information
available at the receiver, while the problem of coding for channels
with noncausal state information available at the transmitter was
solved by Gel'fand and Pinsker in \cite{GelfandPinsker}.  In practice,
the transmitter and/or the receiver may not have full knowledge of the
channel state information. Heegard and El Gamal in \cite{Heegard1983}
studied the channel with rate-limited noncausal state information
available at the encoder and/or the decoder. Further, Cover and Chiang
in \cite{Cover2002} provided a unifying framework to characterize
channel capacity and rate-distortion functions for systems with
two-sided partial state information, and they also discuss aspects of
duality between the source and channel coding problems.

In this work we consider source and channel coding with two-sided
partial side/state information where the side/state information can be
influenced by other nodes in the system. Such side/state information
is termed as \emph{action-dependent} side/state information
\cite{Weissman2010},\cite{Permuter2011}. Weissman studied first a
problem of coding for a channel with action-dependent state
\cite{Weissman2010}, and the source coding dual was investigated by
Permuter and Weissman \cite{Permuter2011} where a node in the system
can take action to influence the quality/availability of the side
information. This novel action-dependent coding framework introduces
new interesting features to the general system model, involving
cost-constrained communication and interaction among nodes, and is
therefore highly relevant to many applications including sensor
networking and control, and multistage coding for memories
\cite{Weissman2010},\cite{Permuter2011}. Additional work on coding
with action includes \cite{Asnani2011} where it is natural to consider
action probing as a means for channel state acquisition, and in
\cite{Chia2011},\cite{Ahmadi2011c} where the problem of source coding
with action-dependent side information is extended to the
multi-terminal case.

In addition, we are interested in the recently introduced problem of
lossy source coding with side information under the additional
requirement that the sender should be able to locally produce an exact
copy of the receiver's reconstruction. This requirement was introduced
and termed the \emph{common reconstruction} (CR) constraint by
Steinberg \cite{Steinberg2009}. The general case of additional
reconstruction subject to the distortion constraint was later studied
in \cite{Lapidoth2011}. The channel coding dual is also investigated
in the context of information embedding by Sumszyk and Steinberg in
\cite{Sumszyk2009} where the decoder is interested in decoding both an
embedded message and a stegotext signal. There, it is shown that if
the objective is to decode only the message and the stegotext (channel
input signal), then decoding the message and the channel state first
and then re-encoding the channel input is suboptimal. As with
action-dependent coding, also the framework of additional
reconstruction requirements provides new useful features of
simultaneous signal transmission in the general system model. Recent
works on common reconstruction in multi-terminal information theoretic
problems include \cite{Timo2010pre}, \cite{Ahmadi2011pre}. Some
closely related works on additional signal reconstruction include
\cite{Kim2008}, \cite{Willems}.

In the present work we unify the problems of action-dependence
and common reconstruction constraints by studying   source
and channel coding with action-dependent partial side information
known noncausally at the encoder and the decoder, and with additional
reconstruction constraints.
The constrained source coding problem is an extension of Wyner--Ziv
lossy source coding where the encoder is additionally required to
estimate the decoder's reconstruction reliably and the available
two-sided partial side information depends on a cost-constrained
action sequence. This setting captures the problem of simultaneously
controlling the quality of the decoder's reconstruction via the
action-dependent side information, and monitoring the resulting
performance via common reconstruction. As a motivating example,
consider a closed-loop control system. Assuming that there exists a
coding scheme which satisfies the CR constraint, an observer/encoder
having knowledge about the reconstruction at a controller/decoder
will have the possibility to compensate for possible impact of state
reconstruction distortion and thus achieve better control performance
in future time instants. The unified system modeled with both
action-dependent side information and the CR constraint can also be
viewed as a resource-efficient system, i.e., the quality of side
information can be adjusted on demand and the control objective can be
achieved more efficiently due to the knowledge of the controller's
reconstruction at the observer.
On the other hand, the constrained channel coding dual is an extension
of the Gel'fand-Pinsker problem where the channel state is allowed to
depend on an action sequence and the decoder is additionally required
to reconstruct the channel input signal reliably. This setting
captures the idea of simultaneously transmitting the message and the
channel input sequence reliably over the channel. To be consistent
with the terminology used in \cite{Sumszyk2009}, we refer to the
reconstruction constraint as the \emph{reversible input} (RI)
constraint. This setup is for example relevant in a data storage
problem where a user is interested in both decoding the embedded
message and in tracing what has been written in the previous
stages. It may also be relevant in a wireless networking scenario
where knowing the channel input signal can enable interference
mitigation at some node in the network.

In this work, we characterize fundamental limits of discrete
memoryless systems, and discuss the implication of additional
reconstruction constraints. An investigation on the dual relationship
between the problems is also of interest. We note that different kinds
of duality between various source and channel coding problems with
side information (SI) have been recognized earlier. For example,
several works have discussed duality between the Wyner--Ziv and
Gel'fand--Pinsker problems
\cite{Cover2002},\cite{Pradhan2003},\cite{Barron2003},\cite{Gupta2011}. Our
definition of duality simply follows the notion of ``formula'' duality
in \cite{Cover2002}. Although it is not based on a strict definition
like in other work, it is appealing that one might be able to
anticipate the optimal solution of a new problem from its dual
problem.
\footnotetext[1]{After submission, we got aware of two recent works \cite{Choudhuri2012pre},\cite{Zaidi2012pre} in which a similar two-stage coding condition appears in a similar fashion as an extra constraint resulted from the additional reconstruction requirements in the two-stage communication setting.}

Our source and channel coding problems are ``dually'' formulated,
i.e., an encoder in one problem has the same functionality as a
decoder in the other problem. However, there are some fundamental
differences in their operational structure. As we will show, the
source coding setup requires causal processing at the encoder for
compressing the source using action-dependent side information, while
in the case of channel coding the channel decoder can observe the
channel output and the channel state information noncausally. In
addition, the channel coding scenario requires sequential two-stage
processing at the encoder in generating an action-dependent state
sequence and then a channel input sequence. When we impose an
additional constraint on decoding a signal generated in the later
stage (channel input $X^{n}$) at the decoder, an extra condition,
apart from the rate constraint, is needed. This leads us to the
conclusion that formula duality between our problems does not hold. We
term the new condition which appears in the channel coding problem the
\emph{two-stage coding condition}\footnotemark[1] since it arises essentially from the
two-stage operational structure of the setting that requires the
channel input reconstruction. In addition to the rate constraint, we
show that the two-stage coding condition is a necessary and sufficient
condition for reliable transmission of the channel input signal over
the channel in our two-stage communication problem. We also discuss
different aspects of the presence of the two-stage coding condition in
the channel capacity problem, based on operational, source coding, and
channel coding perspectives. Finally, we show in one of the examples
that there exists a case where the two-stage coding condition can be
active when computing the capacity, and it can thus actively restrict
the set of capacity achieving input distributions. The material in
this paper was presented in part in \cite{Kittichokechai2010a},
\cite{Kittichokechai2010b}, \cite{Kittichokechai2011a}, and
\cite{Kittichokechai2011c}.

The remaining parts of the paper are organized as follows. In Section
\ref{sec:sourcecoding1} we formulate the problem of source coding with
action-dependent two-sided partial SI and CR constraint. We derive a
closed-form expression for the rate-distortion-cost function. Other
related results and a binary example illustrating an implication of
common reconstruction constraint on the rate-distortion-cost tradeoff
are given. The channel coding dual is presented in Section
\ref{sec:channelcoding2}, where the channel capacity is found in a
form with the two-stage coding condition. In this section we also
present other related results as well as an example showing that the
two-stage coding condition can be active in some cases. We discuss the
presence of the two-stage coding condition as well as the dual
relations among the related problems in Section
\ref{sec:discussion}. The conclusion is provided in Section
\ref{sec:conclusion}.

\textit{Notation}: We denote the discrete random variables, their
corresponding realizations or deterministic values, and their
alphabets by the upper case, lower case, and calligraphic letters,
respectively. The term $X_{m}^{n}$ denotes the sequence
$\{X_{m},\ldots,X_{n}\}$ when $m\leq n$, and the empty set
otherwise. Also, we use the shorthand notation $X^{n}$ for
$X_{1}^{n}$. The term $X^{n\setminus i}$ denotes the set
$\{X_{1},\ldots,X_{i-1},X_{i+1},\ldots,X_{n}\}$. Cardinality of the
set $\mathcal{X}$ is denoted by $|\mathcal{X}|$. Finally, we use
$X-Y-Z$ to denote a Markov chain formed by the joint distribution of
$(X,Y,Z)$ that is factorized as
$P_{X,Y,Z}(x,y,z)=P_{X,Y}(x,y)P_{Z|Y}(z|y)$ or
$P_{X,Y,Z}(x,y,z)=P_{X|Y}(x|y)P_{Y,Z}(y,z)$.
\section{Source Coding with Action-dependent Side Information and CR
  Constraint}\label{sec:sourcecoding1}

In this section we study source coding with action-dependent side
information and CR constraint as depicted in
Fig.~\ref{fig:SystemModel1}. The side information is generated based
on the source and cost-constrained action sequences, and are given at
both encoder and decoder. The decoder reconstructs the source sequence
subject to the distortion constraint. Meanwhile, the encoder is
required to locally produce an exact copy of the decoder's
reconstruction. This scenario captures the idea of simultaneously
controlling the quality of the decoder's reconstruction via
action-dependent side information, and monitoring the decoder's
reconstruction via common reconstruction. Our setup can be considered
as a combination of Permuter and Weissman's source coding with side
information ``vending machine'' \cite{Permuter2011} and Steinberg's
coding and common reconstruction \cite{Steinberg2009}.

In the following, we present the problem formulation, characterize the
main result which is the rate-distortion-cost function of the setting,
and also present some other related results. Finally, a binary example
is given to illustrate an implication of the common reconstruction on
the rate-distortion-cost tradeoff.

\subsection{Problem Formulation and Main Results}
\begin{figure}[]
    \centering
    \psfrag{x}[][][0.8]{$X^{n}$}
    \psfrag{enc}[][][0.8]{Encoder}
    \psfrag{w}[][][0.8]{$w$, rate $R$}
    \psfrag{act}[][][0.8]{Action}
    \psfrag{en}[][][0.8]{encoder}
    \psfrag{Aw}[][][0.8]{$A^{n}$}
    \psfrag{Ps1s2xa}[][][0.8]{$P_{S_{e},S_{d}|X,A}$}
    \psfrag{s1s2}[][][0.8]{$S^{n}=(S_{e}^{n},S_{d}^{n})$}
    \psfrag{s}[][][0.8]{$S^{n}$}
    \psfrag{s1}[][][0.8]{$S_{e}^{n}$}
    \psfrag{s2}[][][0.8]{$S_{d}^{n}$}
    \psfrag{k1}[][][0.8]{$l_{e}$}
    \psfrag{k2}[][][0.8]{$l_{d}$}
    \psfrag{dec}[][][0.8]{Decoder}
    \psfrag{de}[][][0.8]{decoder}
    \psfrag{xtil}[][][0.8]{$\tilde{X}^{n}, E[d(X^{n},\tilde{X}^{n})]\lesssim D$}
    \psfrag{xhat}[][][0.8]{$\hat{X}^{n}= \psi(X^{n},S_{e}^{n},A^{n}), \mathrm{Pr}(\hat{X}^{n} \neq \tilde{X}^{n})\rightarrow 0$}
    \includegraphics[width=10cm]{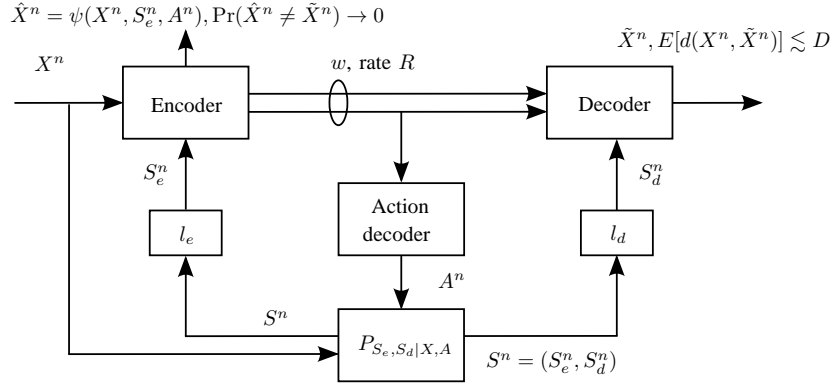}
     \caption{Rate distortion with action-dependent partial side information and CR constraint.}\label{fig:SystemModel1}
\end{figure}

We consider finite alphabets for the source, action, side information,
and reconstruction sets, i.e., $\mathcal{X}$, $\mathcal{A}$,
$\mathcal{S}_{e}$, $\mathcal{S}_{d}$, and $\hat{\mathcal{X}}$ are
finite. Let $X^{n}$ be a source sequence of length $n$ with
i.i.d.~elements according to $P_{X}$. Given a source sequence $X^{n}$,
an encoder generates an index representing the source sequence and
sends it over a noise-free, rate-limited link to an action decoder and
a source decoder. An action sequence is then selected based on the
index. With input $(X^{n},A^{n})$ whose current symbols do not depend
on the previous channel output, the side information
$(S_{e}^{n},S_{d}^{n})$ is generated as an output of the memoryless
channel with transition
probability \[P_{S_{e}^{n},S_{d}^{n}|X^{n},A^{n}}(s_{e}^{n},s_{d}^{n}|x^{n},a^{n})=\prod_{i=1}^{n}P_{S_{e},S_{d}|X,A}(s_{e,i},s_{d,i}|x_{i},a_{i}).\]

The side information is then mapped to the partial side information
for the encoder and the decoder by the mappings
$l_{e}^{(n)}(S_{e}^{n},S_{d}^{n})=S_{e}^{n}$ and
$l_{d}^{(n)}(S_{e}^{n},S_{d}^{n})=S_{d}^{n}$. Next, the encoder uses
knowledge about $S_{e}^{n}$ to generate another index and sends it to
the source decoder. Given the indices and the side information
$S_{d}^{n}$ the source decoder reconstructs the source sequence as
$\tilde{X}^{n}$. On the other hand, the encoder also estimates the
decoder's reconstruction as $\hat{X}^{n}$.

\begin{definition}
  An $(|\mathcal{W}^{(n)}|,n)$-code for a memoryless source with
  partially known two-sided action-dependent side information and a CR constraint consists of the following functions:\\
  an encoder 1 \[f_{1}^{(n)}: \mathcal{X}^{n} \rightarrow
  \mathcal{W}_1^{(n)}, \mathcal{W}_1^{(n)} =
  \{1,2,\ldots,|\mathcal{W}_1^{(n)}| \},\] an action
  decoder \[g_{a}^{(n)}: \mathcal{W}_1^{(n)} \rightarrow
  \mathcal{A}^{n},\] an encoder 2 \[f_{2}^{(n)}: \mathcal{X}^{n}\times
  \mathcal{S}_e^{n} \rightarrow \mathcal{W}_2^{(n)},
  \mathcal{W}_2^{(n)} = \{1,2,\ldots,|\mathcal{W}_2^{(n)}| \},\] a
  source decoder \[g^{(n)}: \mathcal{W}_1^{(n)}\times
  \mathcal{W}_2^{(n)}\times \mathcal{S}_{d}^{n} \rightarrow
  \hat{\mathcal{X}}^{n},\] and a CR mapper \[\psi^{(n)}:
  \mathcal{X}^{n}\times \mathcal{S}_e^{n} \times \mathcal{A}^{n}
  \rightarrow \mathcal{\hat{X}}^{n},\] where $|\mathcal{W}^{(n)}| =
  |\mathcal{W}_1^{(n)}| \cdot |\mathcal{W}_2^{(n)}|$.

  Let $d: \mathcal{X} \times \hat{\mathcal{X}} \rightarrow [0,\infty)$
  and $\Lambda: \mathcal{A} \rightarrow [0,\infty)$ be the bounded
  single-letter distortion and cost measures. The average distortion
  between a length-$n$ source sequence and its reconstruction at the
  decoder, and the average cost are defined as
\begin{align*}
     &E\left[d^{(n)}\big(X^{n},\tilde{X}^{n}\big)\right] \triangleq \frac{1}{n}E\left[\sum_{i=1}^{n}d(X_{i},\tilde{X}_{i})\right], \\
     & \qquad E\left[\Lambda^{(n)}(A^{n})\right] \triangleq \frac{1}{n}E\left[\sum_{i=1}^{n}\Lambda(A_{i})\right],
\end{align*}
where $d^{(n)}(\cdot)$ and $\Lambda^{(n)}(\cdot)$ are the distortion and cost functions, respectively.

The average probability of error in estimating the decoder's reconstruction sequence is defined by
\[P^{(n)}_{\text{CR}} = \mathrm{Pr}\big(\psi^{(n)}(X^{n},S_{e}^{n},A^{n})\neq g^{(n)}(f_{1}^{(n)}(X^{n}),f_{2}^{(n)}(X^{n},S_e^{n}),S_{d}^{n})\big).\]
\end{definition}

\begin{definition} A rate-distortion-cost triple $(R,D,C)$ is said to be \emph{achievable} if for any $\delta >0$, there exists for all sufficiently large $n$ an
$(|\mathcal{W}^{(n)}|,n)$-code such that $\frac{1}{n}\log|\mathcal{W}^{(n)}| \leq R+\delta$,
\[E\big[d^{(n)}(X^{n},\tilde{X}^{n})\big] \leq D+\delta, \qquad
E\big[\Lambda^{(n)}(A^{n})\big] \leq C+\delta,\ \mbox{and} \qquad
P^{(n)}_{\text{CR}}  \leq \delta.\]
The \emph{rate-distortion-cost function} $R_{\text{ac,cr}}(D,C)$ is the infimum of the achievable rates at distortion level $D$ and cost $C$.
\end{definition}

\begin{theorem}\label{theoremRateDist}
  The rate-distortion-cost function for the source with a CR constraint
  and action-dependent partial side information available at the
  encoder and the decoder is given by
\begin{equation}\label{eq:rateDistor1}
    R_{\text{ac,cr}}(D,C) = \min [I(X;A) + I(\hat{X};X,S_{e}|A) - I(\hat{X};S_{d}|A)],
\end{equation}
where the joint distribution of $(X,A,S_{e},S_{d},\hat{X})$ is of the form
\begin{equation*}
 P_{X}(x)P_{A|X}(a|x)P_{S_{e},S_{d}|X,A}(s_{e},s_{d}|x,a)P_{\hat{X}|X,S_{e},A}(\hat{x}|x,s_{e},a)
\end{equation*}
and the minimization is over all $P_{A|X}$ and $ P_{\hat{X}|X,S_{e},A}$ subject to \[E\big[d(X,\hat{X})\big] \leq D,\qquad E[\Lambda(A)] \leq C.\]
\end{theorem}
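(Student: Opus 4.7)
The plan is to establish Theorem~\ref{theoremRateDist} by matching achievability and converse arguments, with both sides exploiting the two-stage operational structure ($X^n \to W_1 \to A^n$, followed by $(X^n, S_e^n) \to W_2$) and using the CR constraint to pin the auxiliary random variable to $\hat{X}$ itself.

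\textbf{Achievability.} I would build a two-stage random code. In the first stage, I cover $X^n$ with a codebook of action sequences $A^n(w_1)$ of size $2^{n(I(X;A)+\epsilon)}$ drawn i.i.d.~from the marginal $P_A$; by the covering lemma, the encoder finds $w_1$ with $(X^n, A^n(w_1))$ jointly typical, and sends $w_1$ to the action decoder and to the source decoder, so both parties share $A^n$. The action sequence then drives the memoryless side-information channel to produce $(S_e^n, S_d^n)$. In the second stage, for each action codeword I generate a conditional Wyner--Ziv codebook: $2^{n(I(\hat{X};X,S_e|A)+\epsilon)}$ sequences $\hat{X}^n$ drawn i.i.d.~from $P_{\hat{X}|A}$, binned uniformly into $2^{n(I(\hat{X};X,S_e|A) - I(\hat{X};S_d|A)+2\epsilon)}$ bins. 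The encoder, knowing $(X^n, S_e^n, A^n)$, picks $\hat{X}^n$ jointly typical with it (covering lemma), sends its bin index as $w_2$, and produces this $\hat{X}^n$ as its CR estimate. The decoder, knowing $(S_d^n, A^n, w_2)$, uses the Wyner--Ziv decoder to recover $\hat{X}^n$ from the bin (packing lemma, aided by $S_d$) and sets $\tilde{X}^n = \hat{X}^n$. Because both sides then output the same $\hat{X}^n$ with high probability, the CR constraint is automatically met; the distortion and cost constraints follow from joint typicality since $E[d(X,\hat{X})]\le D$ and $E[\Lambda(A)]\le C$ in the chosen test distribution. The total rate is $I(X;A) + I(\hat{X};X,S_e|A) - I(\hat{X};S_d|A) + O(\epsilon)$.

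\textbf{Converse.} Write $n(R+\delta) \ge H(W_1,W_2) = H(W_1) + H(W_2 \mid W_1)$. For the first term, using $A^n = g_a^{(n)}(W_1)$ and i.i.d.~$X_i$, the standard chain
\[
H(W_1) \ge I(W_1;X^n) \ge I(A^n;X^n) = \sum_{i=1}^n I(A^n, X^{i-1}; X_i) \ge \sum_{i=1}^n I(A_i;X_i).
\]
For the second term, I first observe that $W_2$ is a function of $(X^n,S_e^n)$, so $I(W_2; S_d^n \mid W_1, A^n, X^n, S_e^n) = 0$, which yields the Wyner--Ziv identity
\[
H(W_2 \mid W_1) \ge I(W_2; X^n, S_e^n \mid W_1, A^n, S_d^n) = I(W_2; X^n, S_e^n \mid W_1, A^n) - I(W_2; S_d^n \mid W_1, A^n).
\]
The key step is to identify $\hat{X}_i$ as the auxiliary: the CR constraint, combined with Fano's inequality applied to $P_{\text{CR}}^{(n)}\to 0$, lets me replace $\tilde{X}_i$ by $\hat{X}_i$ in the distortion bound and at the same time gives $\hat{X}_i = \psi_i(X^n, S_e^n, A^n)$, so that $\hat{X}_i - (X_i, S_{e,i}, A_i) - S_{d,i}$ is a Markov chain thanks to the memoryless structure of $P_{S_e,S_d \mid X,A}$. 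Single-letterizing the Wyner--Ziv gap in the standard way (expand both terms along $i$, reconstitute them as $I(\hat{X}_i; X_i, S_{e,i} \mid A_i) - I(\hat{X}_i; S_{d,i} \mid A_i)$ via Csisz\'ar's sum identity on the $S_d^n$ terms), and introducing a time-sharing variable $Q$ uniform on $\{1,\dots,n\}$ and independent of everything else, gives the single-letter joint distribution $P_X P_{A \mid X} P_{S_e,S_d \mid X,A} P_{\hat{X} \mid X,S_e,A}$ of the theorem along with the bound
\[
R + \delta \ge I(X;A) + I(\hat{X}; X, S_e \mid A) - I(\hat{X}; S_d \mid A),
\]
while the distortion and cost constraints pass through the time-sharing by convexity.

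\textbf{Anticipated main obstacle.} The routine pieces (covering, packing, Fano, time-sharing) follow familiar templates; the delicate part is the single-letterization of $I(W_2;X^n,S_e^n \mid W_1, A^n, S_d^n)$ with the \emph{conditional} $A^n$ playing the role of common side information together with $S_d^n$. One has to be careful that the Csisz\'ar sum identity is applied correctly so that the subtractive $I(\hat{X};S_d \mid A)$ term appears with the right sign, and that the Markov chain $\hat{X} - (X,S_e,A) - S_d$ survives under the identification $\hat{X}_i = \psi_i(X^n,S_e^n,A^n)$---this is where the memoryless structure of the action-dependent side-information channel and the CR replacement of $\tilde{X}^n$ by $\hat{X}^n$ are both essential.
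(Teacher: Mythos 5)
Your achievability argument is essentially the paper's: an action codebook of size $2^{n(I(X;A)+\epsilon)}$ followed, for each $a^{n}$, by a conditional codebook of $\hat{X}^{n}$-sequences drawn from $P_{\hat{X}|A}$ and binned at rate $I(\hat{X};X,S_{e}|A)-I(\hat{X};S_{d}|A)$, with the encoder declaring its chosen $\hat{x}^{n}$ as the CR estimate. That part matches.

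The converse, however, has a genuine gap in the opening decomposition. You write $H(W_{1},W_{2})=H(W_{1})+H(W_{2}|W_{1})$, immediately degrade the first term to $\sum_{i}I(A_{i};X_{i})$, and then claim that the second term, $I(W_{2};X^{n},S_{e}^{n}|W_{1},A^{n},S_{d}^{n})$, single-letterizes to $\sum_{i}[I(\hat{X}_{i};X_{i},S_{e,i}|A_{i})-I(\hat{X}_{i};S_{d,i}|A_{i})]$. This cannot work, because the second mutual information is conditioned on \emph{all} of $W_{1}$, and $W_{1}=f_{1}^{(n)}(X^{n})$ may carry far more information about $X^{n}$ than the action sequence does. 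Concrete failure: take a code in which $W_{1}$ is essentially a full description of $X^{n}$ and $W_{2}$ is trivial (the decoder reconstructs from $W_{1}$ alone, so the encoder can match it and the CR constraint holds). Then $I(W_{2};X^{n},S_{e}^{n}|W_{1},A^{n},S_{d}^{n})=0$, while $\sum_{i}I(\hat{X}_{i};X_{i},S_{e,i}|A_{i},S_{d,i})$ is positive and linear in $n$ whenever the distortion constraint forces $\hat{X}$ to depend on $X$ beyond $A$; so the claimed single-letter lower bound on $H(W_{2}|W_{1})$ is false, and your total bound $nR\gtrsim\sum_{i}I(A_{i};X_{i})$ is too weak to conclude $R\geq R_{\text{ac,cr}}(D,C)$. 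The slack you discard when passing from $H(W_{1})$ to $\sum_{i}I(A_{i};X_{i})$ is exactly what is needed to compensate for the over-conditioning on $W_{1}$ in the second stage. The paper avoids this by writing $H(W_{1},W_{2})=H(W_{1},W_{2},A^{n})=H(A^{n})+H(W_{1},W_{2}|A^{n})$, i.e., conditioning the second stage only on $A^{n}$ and keeping \emph{both} indices inside the second mutual information, which is then lower-bounded by $H(X^{n},S_{e}^{n}|A^{n},S_{d}^{n})-H(X^{n},S_{e}^{n}|A^{n},S_{d}^{n},W_{1},W_{2})$; the positive term single-letterizes from the i.i.d.\ source and memoryless SI channel alone (no Csisz\'ar sum identity is needed), and the negative term via Fano applied to $P^{(n)}_{\text{CR}}$ with $\hat{X}_{i}=\psi_{i}^{(n)}(X^{n},S_{e}^{n},A^{n})$. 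Your identification of $\hat{X}_{i}$ through the CR constraint and the Markov chain $\hat{X}_{i}-(X_{i},S_{e,i},A_{i})-S_{d,i}$ are correct and are indeed the key steps; it is only the rate decomposition that must be repaired. (Minor additional difference: the paper closes the single-letterization with Jensen's inequality and convexity of $R_{\text{ac,cr}}$ rather than a time-sharing variable, but either works.)
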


\begin{proof}
  The proof follows similar arguments as in \cite{Permuter2011} with
  some modifications in which we extend the SI-channel transition
  probability to the two-sided SI $P_{S_{e},S_{d}|X,A}$, and consider
  the additional CR constraint at the encoder as in
  \cite{Steinberg2009}. In the following, we give a sketch of the
  achievability proof. An action codebook $\{a^{n}\}$ of size
  $2^{n(I(X;A)+ \delta_{\epsilon})}$ is generated i.i.d. $\sim
  P_{A}$. For each $a^{n}$ another codebook $\{\hat{x}^{n}\}$ of size
  $2^{(n(I(\hat{X};X,S_e|A)+ \delta_{\epsilon}))}$ is generated
  i.i.d.~$\sim P_{\hat{X}|A}$. These codewords are then distributed at
  random into
  $2^{n(I(\hat{X};X,S_e|A)-I(\hat{X};S_d|A)+2\delta_{\epsilon})}$
  equal-sized bins (see Fig. \ref{fig:binningSC}). Given the source
  sequence $x^{n}$ the encoder in the first step uses $n(I(X;A)+
  \delta_{\epsilon})$ bits to transmit an index representing the
  action codeword $a^{n}$ which is jointly typical with $x^{n}$ to the
  decoder. Then the action-dependent SI is generated based on $x^{n}$
  and $a^{n}$. Given $x^{n},s_{e}^{n}$ and previously chosen $a^{n}$,
  the encoder in the second step uses another
  $n(I(\hat{X};X,S_e|A)-I(\hat{X};S_d|A)+2\delta_{\epsilon})$ bits to
  communicate the bin index of the jointly typical codeword
  $\hat{x}^{n}$. In addition, the encoder produces this jointly
  typical $\hat{x}^{n}$ as an estimate of the decoder's
  reconstruction. Given the identity of $a^{n}$, the bin index of
  $\hat{x}^{n}$, and the side information $s_{d}^{n}$, the decoder
  will find with high probability the unique codeword $\hat{x}^{n}$ in
  its bin that is jointly typical with $s_{d}^{n}$ and
  $a^{n}$. Finally, the decoder reconstructs
  $\tilde{x}^{n}=\hat{x}^{n}$. For completeness, we provide the
  detailed achievability proof and converse proof in Appendix
  \ref{sec:proofsourcecoding}.
\end{proof}

    \begin{figure}[]
    \centering
    \psfrag{e}[][][0.75]{$2^{n(I(\hat{X};S_{d}|A)-\delta_{\epsilon})}$}
    \psfrag{dot}[][][0.75]{$\cdots$}
    \psfrag{x}[][][0.75]{a codeword $\hat{X}^{n}$}
    \psfrag{p1}[][][0.75]{$2^{n(I(\hat{X};X,S_{e}|A)-I(\hat{X};S_{d}|A)+2\delta_{\epsilon})}$ bins}
    \includegraphics[width=9cm]{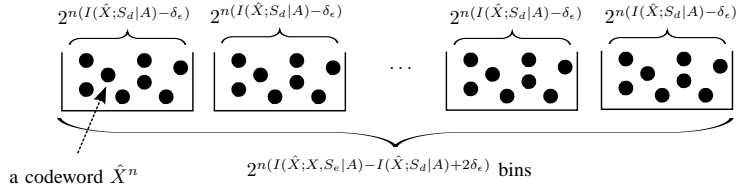}
    \caption{Binning for the achievability: for each codeword $a^{n}$, a codebook $\{\hat{x}^{n}\}$ of size $2^{n(I(\hat{X};X,S_{e}|A)+ \delta_{\epsilon})}$ is generated i.i.d. each $\sim P_{\hat{X}|A}$. Then they are distributed uniformly into $2^{n(I(\hat{X};X,S_{e}|A)-I(\hat{X};S_{d}|A)+2\delta_{\epsilon})}$ equal-sized bins.} \label{fig:binningSC}
    \end{figure}

\begin{remark}
We can also express $R_{\text{ac,cr}}(D,C)$ in \eqref{eq:rateDistor1} as
\begin{align}\label{eq:alterR2}
  R_{\text{ac,cr}}(D,C) 
   &=  \min[I(X;A)+H(\hat{X}|A)-H(\hat{X}|A,X,S_{e}) -H(\hat{X}|A)+H(\hat{X}|A,S_{d})] \nonumber\\
   &\overset{(*)}{=} \min[I(X;A)-H(\hat{X}|A,X,S_{e},S_{d})+H(\hat{X}|A,S_{d})] \nonumber \\
   &= \min[I(X;A)+I(\hat{X};X,S_{e}|A,S_{d})],
\end{align}
where $(*)$ follows from the Markov chain $\hat{X}-(X,A,S_{e})-S_{d}$ and the minimization is over the same distribution as in \eqref{eq:rateDistor1}.
\end{remark}

\begin{lemma}\label{lemma:convexRDC}
The rate-distortion-cost function $R_{\text{ac,cr}}(D,C)$ given in \eqref{eq:rateDistor1} and \eqref{eq:alterR2} is a non-increasing convex function of distortion $D$ and cost $C$.
\end{lemma}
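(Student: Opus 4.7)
The monotonicity part is immediate from the formula. If $D_1 \le D_2$ and $C_1 \le C_2$, then every pair $(P_{A|X}, P_{\hat{X}|X,S_{e},A})$ feasible for the constraints $E[d(X,\hat{X})] \le D_1$, $E[\Lambda(A)] \le C_1$ is also feasible at $(D_2, C_2)$, so the minimum in \eqref{eq:rateDistor1} is taken over a larger set and can only be smaller. Hence $R_{\text{ac,cr}}$ is non-increasing in each of $D$ and $C$, and therefore jointly non-increasing.

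For joint convexity, my plan is an operational time-sharing argument, which avoids the difficulty of manipulating the single-letter formula directly (the action alphabet $\mathcal{A}$ is fixed by the problem, so a naive absorption of a time-sharing auxiliary $Q$ into $A$ is not available). Fix a rational $\lambda = k/m \in [0,1]$ and, for each $i \in \{1,2\}$, a sequence of achievable triples $(R_i + \epsilon, D_i, C_i)$ with $R_i = R_{\text{ac,cr}}(D_i, C_i)$. For any $N$ and block length $n = mN$, I would apply code $1$ to the first $kN$ source symbols and code $2$ to the remaining $(m-k)N$ source symbols, running the two sub-blocks with independent encoders, action decoders, source decoders, and CR mappers, and concatenating the two indices $w_1$ and the two indices $w_2$ into single messages. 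By linearity of expectation the combined code has rate $\lambda (R_1 + \epsilon) + (1-\lambda)(R_2 + \epsilon)$, average distortion $\lambda D_1 + (1-\lambda) D_2$, and average action cost $\lambda C_1 + (1-\lambda) C_2$. The CR-error event of the concatenation is the union of the two per-sub-block CR-error events, so by the union bound $P^{(n)}_{\text{CR}}$ still tends to zero. Letting $\epsilon \downarrow 0$, and passing from $\mathbb{Q} \cap [0,1]$ to $[0,1]$ via the continuity of $R_{\text{ac,cr}}$ (which follows from Berge's maximum theorem, since the minimization is over a compact simplex under continuous constraints), then yields
\[R_{\text{ac,cr}}\big(\lambda D_1 + (1-\lambda) D_2,\ \lambda C_1 + (1-\lambda) C_2\big) \le \lambda R_{\text{ac,cr}}(D_1, C_1) + (1-\lambda) R_{\text{ac,cr}}(D_2, C_2),\]
which is exactly joint convexity in $(D,C)$.

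The only delicate step is the CR requirement, since it is a vanishing-probability condition rather than an averaged quantity and therefore does not automatically survive concatenation in the same way distortion and cost do. The union bound on the two sub-block CR-error events resolves this cleanly; everything else — rate, distortion, and cost of the concatenated code — scales linearly in the sub-block lengths and produces the desired convex combination directly, so no additional work is needed.
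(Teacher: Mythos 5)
Your monotonicity argument is the same as the paper's and is fine. The convexity argument, however, has a structural flaw in the context of this paper: what you prove by code concatenation is convexity of the \emph{operational} rate-distortion-cost function (the infimum of achievable rates), whereas Lemma \ref{lemma:convexRDC} asserts convexity of the \emph{single-letter formula} in \eqref{eq:rateDistor1}/\eqref{eq:alterR2}. To transfer convexity from the operational quantity to the formula you must invoke Theorem \ref{theoremRateDist}, which identifies the two. But the converse proof of Theorem \ref{theoremRateDist} (step $(b)$ of \eqref{eq:nRSum} in Appendix \ref{sec:proofsourcecoding}) itself appeals to Jensen's inequality together with the convexity of the single-letter $R_{\text{ac,cr}}(D,C)$ defined in \eqref{eq:minSumRate} --- i.e., to this very lemma. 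So within the paper's logical organization your argument is circular: the lemma must be available \emph{before} the coding theorem, not derived from it. (Your time-sharing construction itself is sound --- the union bound on the two sub-block CR events, the linear averaging of rate, distortion and cost, and the rational-to-real passage are all workable --- but it proves the wrong object unless the converse is first re-proved by a route that avoids convexity.)

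The paper's proof avoids this by working directly on the formula: it introduces a time-sharing variable $Q\in\{1,2\}$ at the \emph{distribution} level, with $P_{A|X,Q=q}=P^{(q)}_{A|X}$ and $P_{\hat X|X,S_e,A,Q=q}=P^{(q)}_{\hat X|X,S_e,A}$, and shows that for the $Q$-averaged marginal, $I(X;A)+I(\hat X;X,S_e|A,S_d)\le I(X;A|Q)+I(\hat X;X,S_e|A,S_d,Q)$, using only $X\perp Q$ and the Markov chain $(S_e,S_d)-(X,A)-Q$; the right-hand side is exactly the convex combination of the two objective values, and the averaged distortion and cost are the corresponding convex combinations. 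Note that this does \emph{not} require absorbing $Q$ into the action alphabet, so the obstacle you cite as motivation for abandoning the direct formula manipulation is not actually present. I would recommend either adopting the paper's direct argument, or, if you wish to keep the operational route, explicitly restructuring the converse of Theorem \ref{theoremRateDist} so that it does not presuppose convexity of the single-letter expression.
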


\begin{proof}
Proof is given in Appendix \ref{sec:prooflemma}.
\end{proof}

\subsection{Other Results}\label{sec:specialcase1}
In the following, we provide some connecting conclusions which help
develop our understanding and also relate our main result to other
known results in the literature. We consider the case where the common
reconstruction constraint is omitted and then our setting recovers
the source coding with action-dependent SI setup of
\cite{Permuter2011}. On the other hand, if we have no control over the
SI, then our setting simply recovers source coding with common
reconstruction \cite{Steinberg2009}. We might also consider a special
case where side information at the encoder or the decoder is
absent. The result in this case can be derived straightforwardly by
setting the SI to be a constant value.

\begin{proposition}\label{theoremRateDist2}
When the additional CR constraint is omitted, the rate-distortion-cost function for the source with action-dependent partial side information available at the encoder and the decoder (no CR) is given by
\begin{equation}\label{eq:rateDistor2}
    R_{\text{ac}}(D,C) = \min [I(X;A) + I(U;X,S_{e}|A) - I(U;S_{d}|A)],
\end{equation}
where the joint distribution of $(X,A,S_{e},S_{d},U)$ is of the form
\begin{equation*}
 P_{X}(x)P_{A|X}(a|x)P_{S_{e},S_{d}|X,A}(s_{e},s_{d}|x,a)P_{U|X,S_{e},A}(u|x,s_{e},a)
\end{equation*}
and the minimization is over all $P_{A|X}, P_{U|X,S_{e},A}$ and $\tilde{g}: \mathcal{U}\times \mathcal{S}_{d} \rightarrow \mathcal{\hat{X}}$ subject to \[E\big[d\big(X,\tilde{g}(U,S_{d})\big)\big] \leq D,\qquad E[\Lambda(A)] \leq C,\]
and $U$ is the auxiliary random variable with $|\mathcal{U}|\leq |\mathcal{A}||\mathcal{X}|+3$.
\end{proposition}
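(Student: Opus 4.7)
The plan is to adapt the proof of Theorem~\ref{theoremRateDist} by replacing the reconstruction codeword $\hat{X}^n$ with an auxiliary codeword $U^n$, letting the decoder reconstruct the source symbol-by-symbol via the single-letter function $\tilde{g}(U, S_d)$, and dropping the CR constraint. For achievability, the same two-stage random-coding structure carries over: generate $2^{n(I(X;A)+\delta_\epsilon)}$ action codewords $a^n$ i.i.d.~$\sim P_A$, and for each $a^n$ generate $2^{n(I(U;X,S_e|A)+\delta_\epsilon)}$ codewords $u^n$ i.i.d.~$\sim \prod_i P_{U|A}(\cdot|a_i)$, distributed uniformly into $2^{n(I(U;X,S_e|A)-I(U;S_d|A)+2\delta_\epsilon)}$ bins. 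Encoding and decoding then proceed exactly as in the proof of Theorem~\ref{theoremRateDist}, except that the decoder outputs $\tilde{x}_i = \tilde{g}(u_i, s_{d,i})$ symbol-by-symbol instead of $\tilde{x}^n = \hat{x}^n$; the typical average lemma ensures that the single-letter constraint $E[d(X,\tilde{g}(U,S_d))] \leq D$ yields the desired block-average distortion bound.

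For the converse, the same single-letterization chain as in the proof of Theorem~\ref{theoremRateDist} applies, now without the CR-induced step that forced $\hat{X}^n$ to equal the decoder output. Starting from $n(R+\delta) \geq H(W_1) + H(W_2|W_1)$, I would first bound $H(W_1) \geq \sum_{i=1}^n I(X_i; A_i)$ using the memorylessness of the source together with the fact that $A^n$ is a deterministic function of $W_1$. Then I would introduce an auxiliary of the form $U_i \triangleq (W_2, S_d^{n\setminus i})$ so that $H(W_2|W_1) \geq \sum_i [I(U_i; X_i, S_{e,i}|A_i) - I(U_i; S_{d,i}|A_i)]$, with the Markov chain $U_i - (X_i, A_i, S_{e,i}) - S_{d,i}$ inherited from the memorylessness of $P_{S_e,S_d|X,A}$. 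Time-sharing with a uniform index $Q$ and setting $U \triangleq (U_Q, Q)$, $A \triangleq A_Q$, etc., yields the desired single-letter bound. Since each $\tilde{X}_i$ is a deterministic function of $(W_1, W_2, S_d^n)$, one may further write $\tilde{X}_i = \tilde{g}(U_i, S_{d,i})$ with the chosen auxiliary, realizing the deterministic single-letter reconstruction in the statement.

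The cardinality bound follows from the Fenchel--Eggleston strengthening of Carath\'eodory's theorem, applied to the joint distribution of $(X, A, S_e, S_d, U)$ viewed as a convex mixture indexed by $U$: preserving the marginal $P_{X,A}$ uses $|\mathcal{X}||\mathcal{A}|-1$ functionals, preserving the expected distortion $E[d(X,\tilde{g}(U,S_d))]$ uses one more, and preserving the two entropy terms $H(U|A,S_d)$ and $H(U|X,S_e,A)$ that appear in the alternative form of the rate expression uses two more, totaling $|\mathcal{X}||\mathcal{A}|+2$ constraints and giving $|\mathcal{U}| \leq |\mathcal{X}||\mathcal{A}|+3$. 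The main obstacle is the converse step of choosing the auxiliary $U_i$ so that the Markov chain $U-(X,S_e,A)-S_d$ holds under two-sided action-dependent SI while still producing a reconstruction of the form $\tilde{g}(U,S_d)$ with no direct dependence on $A$; once the auxiliary is pinned down so that both properties hold simultaneously, the remaining single-letterization and the cardinality bookkeeping are routine.
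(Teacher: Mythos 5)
Your achievability sketch matches the paper's: replace $\hat{X}^{n}$ by an auxiliary codeword $U^{n}$, keep the action-codebook/binning structure, and let the decoder output $\tilde{g}(u_{i},s_{d,i})$ symbol-by-symbol. The converse, however, has a concrete gap in the choice of auxiliary. You set $U_{i}=(W_{2},S_{d}^{n\setminus i})$, but the decoder's reconstruction is $\tilde{X}_{i}=g_{i}^{(n)}(W_{1},W_{2},S_{d}^{n})$ and genuinely depends on $W_{1}$ (in particular, the decoder uses $W_{1}$ to recover the action sequence $A^{n}=g_{a}^{(n)}(W_{1})$). With $W_{1}$ absent from $U_{i}$, the identification $\tilde{X}_{i}=\tilde{g}(U_{i},S_{d,i})$ — which you yourself flag as the main obstacle and then declare resolved — simply fails, so the distortion constraint $E[d(X,\tilde{g}(U,S_{d}))]\leq D$ in the single-letter expression cannot be attached to your auxiliary. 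The paper takes $U_{i}\triangleq(A^{n\setminus i},S_{d}^{n\setminus i},W,X^{i-1})$ with $W=(W_{1},W_{2})$, which contains everything the decoder sees (so $\tilde{X}_{i}$ is a deterministic function of $(U_{i},S_{d,i})$) while still satisfying $U_{i}-(X_{i},A_{i},S_{e,i})-S_{d,i}$ by memorylessness.

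A secondary concern is your splitting of the rate bound into $H(W_{1})\geq\sum_{i}I(X_{i};A_{i})$ and $H(W_{2}|W_{1})\geq\sum_{i}[I(U_{i};X_{i},S_{e,i}|A_{i})-I(U_{i};S_{d,i}|A_{i})]$. The paper does not separate the two indices: it lower-bounds $H(W_{1},W_{2})$ jointly via $H(A^{n})+H(W|A^{n})\geq I(A^{n};X^{n},S_{e}^{n})+I(W;X^{n},S_{e}^{n}|A^{n},S_{d}^{n})$ and single-letterizes the two pieces together, because the positive term $H(X_{i},S_{e,i}|A_{i},S_{d,i})$ that combines with $-H(X_{i},S_{e,i}|U_{i},A_{i},S_{d,i})$ to form $I(U_{i};X_{i},S_{e,i}|A_{i},S_{d,i})$ emerges from manipulating $H(X^{n},S_{e}^{n})$, $H(X^{n},S_{e}^{n}|A^{n})$ and $H(X^{n},S_{e}^{n}|A^{n},S_{d}^{n})$ jointly using the i.i.d.\ structure; your separate bound on $H(W_{2}|W_{1})$ would need this term to come out of a conditional entropy with strictly more conditioning, which goes in the wrong direction. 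The paper also closes with convexity of $R_{\text{ac}}(D,C)$ plus Jensen rather than a time-sharing variable (a stylistic difference), and its cardinality count uses four extra functionals (two mutual-information terms, distortion, and cost) on top of the $|\mathcal{A}||\mathcal{X}|-1$ needed for $P_{A,X}$, whereas your count of $|\mathcal{A}||\mathcal{X}|+2$ functionals is inconsistent with the bound $|\mathcal{U}|\leq|\mathcal{A}||\mathcal{X}|+3$ you then assert.
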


\begin{proof}
The rate-distortion-cost function in this case can be derived along the lines of Theorem \ref{theoremRateDist}. The achievability proof is a straightforward modification of that of Theorem \ref{theoremRateDist} where the codeword $U^{n}$ is used instead of $X^{n}$ and the decoding function $\tilde{g}$ is introduced (similarly as in the Wyner-Ziv problem). The converse proof is given in Appendix \ref{sec:proofISIT}.
\end{proof}

\begin{corollary} \label{sec:specialcase2}
For a special case where the side information at the encoder is absent, the rate-distortion-cost function for the source with action-dependent side information available at the decoder (and CR constraint) can be derived as a special case of Proposition \ref{theoremRateDist2} (Theorem \ref{theoremRateDist}) by setting $S_{e}$ to a constant value.
\end{corollary}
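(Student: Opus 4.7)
The plan is to observe that setting the encoder side information $S_e$ to a deterministic constant (equivalently, taking $|\mathcal{S}_e|=1$) reduces both Theorem \ref{theoremRateDist} and Proposition \ref{theoremRateDist2} to the claimed expressions for the case when $S_e$ is absent at the encoder. Since the action-dependent SI channel is $P_{S_e,S_d|X,A}$, fixing $S_e$ to a constant leaves a well-defined channel $P_{S_d|X,A}$ of the same product form, so the operational model in the corollary is precisely a valid instance of the general model. No new coding theorem is therefore required; both directions follow by direct specialization.

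For achievability, I would simply inspect the scheme sketched in the proof of Theorem \ref{theoremRateDist} (and, for the no-CR version, its counterpart in Proposition \ref{theoremRateDist2}) and note that whenever $s_e^{n}$ appears as an input to the encoder or to the codebook generation, it contributes no information when constant. The action codebook of rate $I(X;A)$ is unchanged, and the reconstruction codebook now has rate $I(\hat X; X|A)$ (respectively $I(U;X|A)$) with binning rate $I(\hat X;X|A)-I(\hat X;S_d|A)$ (respectively $I(U;X|A)-I(U;S_d|A)$), yielding the specialized sum rate. The CR mapper $\psi^{(n)}(X^n,S_e^n,A^n)$ simply becomes a function of $(X^n,A^n)$.

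For the converse, I would reuse the single-letter chain in Appendix \ref{sec:proofsourcecoding} (respectively \ref{sec:proofISIT}) verbatim, treating $S_{e,i}$ as a constant random variable. Every mutual information of the form $I(\,\cdot\,;X,S_e|A)$ collapses to $I(\,\cdot\,;X|A)$, and the joint distribution factorization
\[
P_X(x)P_{A|X}(a|x)P_{S_d|X,A}(s_d|x,a)P_{\hat X|X,A}(\hat x|x,a)
\]
(with analogous expression involving $U$) is exactly the marginal of the original factorization once $S_e$ is trivialized; the required Markov chains are preserved. The distortion and cost constraints are unaffected.

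There is no genuine obstacle: the only point worth checking carefully is that the auxiliary cardinality bound in Proposition \ref{theoremRateDist2} remains valid after specialization, which it does since removing a conditioning variable cannot increase the required alphabet size for $U$. Thus the corollary follows by reading off \eqref{eq:rateDistor1} and \eqref{eq:rateDistor2} with $S_e$ deterministic.
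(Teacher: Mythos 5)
Your proposal is correct and follows exactly the route the paper intends: the corollary is stated without a separate proof precisely because it is a direct specialization of Theorem \ref{theoremRateDist} and Proposition \ref{theoremRateDist2} obtained by setting $S_{e}$ to a constant, which is what you verify. Your additional checks (that the achievability codebooks, converse chain, factorization, and cardinality bound all survive the trivialization of $S_{e}$) are sound but only make explicit what the paper leaves implicit.
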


\begin{corollary}
For the case where the side information at the encoder is absent and we have no control over the SI at the decoder, i.e., the action alphabet size is one, the rate-distortion function for the source with CR constraint is given by
\begin{equation}\label{eq:C}
    R_{\text{cr}}(D) = \min[I(\hat{X};X|S_{d})],
\end{equation}
where the joint distribution of $(X,S_{d},\hat{X})$ is of the form
\begin{equation*}
 P_{X}(x)P_{S_{d}|X}(s_{d}|x)P_{\hat{X}|X}(\hat{x}|x)
\end{equation*}
and the minimization is over all $ P_{\hat{X}|X}$ subject to
$E\big[d(X,\hat{X})\big] \leq D$.  Note that this result recovers
Theorem 1 in Steinberg's coding and common reconstruction
\cite{Steinberg2009}.
\end{corollary}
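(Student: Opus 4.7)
The plan is to obtain this corollary as a direct specialization of Theorem \ref{theoremRateDist}, using the alternative single-letter form derived in the Remark,
\begin{equation*}
R_{\text{ac,cr}}(D,C) \;=\; \min\bigl[I(X;A) + I(\hat{X};X,S_{e}\mid A,S_{d})\bigr],
\end{equation*}
which is more convenient here than the original form. Specializing the setting, the assumption that the decoder's action alphabet has size one forces $A$ to be a constant, so $I(X;A)=0$ and any cost constraint is trivially satisfied. Setting the encoder side information $S_{e}$ to a constant collapses the conditional mutual information term to $I(\hat{X};X\mid S_{d})$. Thus the expression reduces to $\min I(\hat{X};X\mid S_{d})$, and the joint distribution prescribed in Theorem \ref{theoremRateDist} factors as $P_{X}(x)P_{S_{d}\mid X}(s_{d}\mid x)P_{\hat{X}\mid X}(\hat{x}\mid x)$, matching the claim.

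For achievability, I would invoke the coding scheme of Theorem \ref{theoremRateDist} with the degenerate choices above: the action codebook becomes a single codeword and is not transmitted, the second-stage codebook of $\hat{X}^{n}$ sequences is generated i.i.d.\ $\sim P_{\hat{X}}$, and the binning step uses $2^{n(I(\hat{X};X)-I(\hat{X};S_{d})+2\delta_{\epsilon})}$ bins. Jointly typical encoding selects $\hat{x}^{n}$ matching $x^{n}$ (which the encoder outputs as its CR estimate), and the decoder recovers the same $\hat{x}^{n}$ using $s_{d}^{n}$ as the Wyner--Ziv side information, yielding $\tilde{X}^{n}=\hat{X}^{n}$ with high probability. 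The distortion and CR requirements are then met by the standard typicality arguments already used in Theorem \ref{theoremRateDist}.

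For the converse, the same chain of inequalities used to prove Theorem \ref{theoremRateDist} applies verbatim after dropping the $A$ and $S_{e}$ terms, giving the single-letter lower bound $I(\hat{X};X\mid S_{d})$ on $R$. The only point that merits a brief check is the Markov structure: with $S_{e}$ and $A$ constant, $\hat{X}-X-S_{d}$ holds because $\hat{X}=\psi^{(n)}(X^{n})$ depends on $X$ alone, so in particular $H(\hat{X}\mid X)=H(\hat{X}\mid X,S_{d})$, which confirms the equivalence between the reduced form and $I(\hat{X};X\mid S_{d})$ as claimed. I expect no real obstacle here; the only delicate step is simply verifying that the specialization of the input distribution class in Theorem \ref{theoremRateDist} is exactly the class $\{P_{\hat{X}\mid X}:E[d(X,\hat{X})]\le D\}$, so that no achievable rate is inadvertently lost or gained. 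The identification with Steinberg's Theorem 1 in \cite{Steinberg2009} is then immediate.
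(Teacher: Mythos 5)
Your proposal is correct and follows essentially the same route as the paper, which obtains this corollary by directly specializing Theorem~\ref{theoremRateDist} (equivalently its alternative form in the Remark) with $A$ and $S_{e}$ set to constants, so that $I(X;A)=0$ and $I(\hat{X};X,S_{e}|A,S_{d})$ collapses to $I(\hat{X};X|S_{d})$ under the Markov chain $\hat{X}-X-S_{d}$. Your additional checks of the degenerate codebook sizes and of the input-distribution class are consistent with the paper's argument and add nothing that conflicts with it.
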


Since the action sequence is taken based on a rate-limited link which
is part of the total rate from the encoder to the decoder (see
Fig. \ref{fig:SystemModel1}), in some cases, we might be interested in
characterizing the individual rate constraint in the form of a rate
region. Here we consider the same setting as in
Fig. \ref{fig:SystemModel1}, but we assume that the rate on the link
used for generating the action sequence is denoted by $R_{1}$, and the
remaining rate from the encoder to the decoder is denoted by $R_{2}$.
\begin{corollary}
The rate-distortion-cost region is given by the set of all $(R_{1},R_{2},D,C)$ satisfying
\begin{align*}
    R_{1} &\geq I(X;A)\\
    R_{1}+ R_{2} &\geq I(X;A) + I(\hat{X};X,S_{e}|A,S_{d})\\
    D &\geq E[d(X,\hat{X})]\\
    C &\geq E[\Lambda(A)],
\end{align*}
where the joint distribution of $(X,A,S_{e},S_{d},\hat{X})$ is of the form
\begin{equation*}
 P_{X}(x)P_{A|X}(a|x)P_{S_{e},S_{d}|X,A}(s_{e},s_{d}|x,a)P_{\hat{X}|X,S_{e},A}(\hat{x}|x,s_{e},a).
\end{equation*}
Note that the result is related to the successive refinement rate-distortion region where we might consider the action sequence as a reconstruction sequence in the first stage, and the refinement stage involves the side information available at the encoder and the decoder $(S_{e},S_{d})$. We also note that the rate-distortion-cost function in Theorem \ref{theoremRateDist} is simply a constraint on the total rate $R=R_{1}+R_{2}$ for a given distortion $D$ and cost $C$.
\end{corollary}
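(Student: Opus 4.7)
The plan is to prove achievability and converse by adapting the argument behind Theorem~\ref{theoremRateDist}, with the only new ingredient being a separate rate bound for the first-stage link that carries the action index.

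For achievability, observe that the two-stage scheme used in the proof of Theorem~\ref{theoremRateDist} already splits its total rate into two pieces: the first encoder uses $n(I(X;A)+\delta_\epsilon)$ bits to transmit the index of the action codeword $a^n$ (which the action decoder maps to $A^n$), and the second encoder uses $n(I(\hat{X};X,S_e|A)-I(\hat{X};S_d|A)+2\delta_\epsilon)$ bits to transmit the bin index of the jointly typical $\hat{x}^n$. Identifying these pieces with $R_1$ and $R_2$ respectively, and invoking the Markov chain $\hat{X}-(X,A,S_e)-S_d$ to rewrite the second rate as $I(\hat{X};X,S_e|A,S_d)+2\delta_\epsilon$, gives every point in the claimed region. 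The distortion, cost, and CR conditions carry over unchanged from Theorem~\ref{theoremRateDist}.

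For the converse, the sum-rate bound $R_1+R_2 \geq I(X;A)+I(\hat{X};X,S_e|A,S_d)$ is exactly the total-rate converse of Theorem~\ref{theoremRateDist} combined with \eqref{eq:alterR2}, so only the single-rate bound $R_1 \geq I(X;A)$ requires a new step. Since $A^n = g_a^{(n)}(W_1)$ and $W_1=f_1^{(n)}(X^n)$, I would write
\begin{align*}
nR_1 &\geq H(W_1) \geq I(X^n;W_1) \geq I(X^n;A^n) \\
&= \sum_{i=1}^n \bigl[H(X_i)-H(X_i|X^{i-1},A^n)\bigr] \geq \sum_{i=1}^n I(X_i;A_i),
\end{align*}
using the i.i.d.\ property of $X^n$ and the fact that conditioning reduces entropy. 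A standard time-sharing argument with $Q$ uniform on $\{1,\ldots,n\}$ and independent of everything else, followed by the identification $X := X_Q$ and absorbing $Q$ into $A$, yields $R_1 \geq I(X;A)$ with $(X,A)$ distributed according to the required product form. The distortion and cost conditions are inherited verbatim from the Theorem~\ref{theoremRateDist} converse.

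The main obstacle is essentially bookkeeping: one must verify that a single joint distribution $P_X P_{A|X} P_{S_e,S_d|X,A} P_{\hat{X}|X,S_e,A}$ simultaneously certifies both the $R_1$ bound and the sum-rate bound. Because the action sequence appearing in the $R_1$ derivation is precisely the one driving the action-dependent side-information channel in the sum-rate derivation, and because the i.i.d.\ source structure together with the time-sharing variable preserves the correct marginal $P_X$, no new auxiliary random variables or Markov conditions are needed beyond those already produced by the converse of Theorem~\ref{theoremRateDist}.
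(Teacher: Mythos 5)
Your converse matches the paper's: the sum-rate bound is inherited from Theorem~\ref{theoremRateDist}, and the bound $R_{1}\geq I(X;A)$ via $nR_{1}\geq H(W_{1})\geq I(X^{n};A^{n})\geq \sum_{i}I(X_{i};A_{i})$ is exactly what the paper means by ``techniques from the point-to-point lossy source coding.'' The bookkeeping you defer at the end is real but standard; just note that the paper's Theorem~\ref{theoremRateDist} converse uses convexity of $R_{\text{ac,cr}}$ rather than a time-sharing variable, so if you introduce $Q$ for the $R_{1}$ bound you must carry the same $Q$ through the sum-rate, distortion, and cost bounds before absorbing it, rather than mixing the two styles of single-letterization.

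The genuine gap is in achievability. The unmodified scheme of Theorem~\ref{theoremRateDist} operates only at the corner point $R_{1}=I(X;A)$, $R_{2}=I(\hat{X};X,S_{e}|A)-I(\hat{X};S_{d}|A)=I(\hat{X};X,S_{e}|A,S_{d})$, and hence yields only the componentwise-dominating quadrant $\{R_{1}\geq I(X;A),\ R_{2}\geq I(\hat{X};X,S_{e}|A,S_{d})\}$. The stated region is strictly larger: it contains points such as $\big(I(X;A)+c,\ I(\hat{X};X,S_{e}|A,S_{d})-c\big)$ for $0<c\leq I(\hat{X};X,S_{e}|A,S_{d})$, whose second coordinate lies \emph{below} the corner value, so they are not reached by simply leaving excess rate unused. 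Your construction cannot reach them: $W_{1}$ is a function of $X^{n}$ alone and in your scheme carries exactly the action index, while the entire bin index of $\hat{x}^{n}$ --- which is chosen only after $S_{e}^{n}$ is observed --- is forced onto the second link. This is precisely why the paper modifies the scheme, splitting $W_{1}$ into two independent parts $(W_{1,1},W_{1,2})$ with the action selected from $W_{1,1}$ only, so that the first link can carry description rate beyond $I(X;A)$ and the individual constraint on $R_{1}$ decouples from the sum-rate constraint. Without such a modification (or an explicit rate-transfer argument) your proof establishes achievability of a strictly smaller region than the one claimed.
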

\begin{proof}
The proof is a modification of that of Theorem \ref{theoremRateDist} where we consider instead the individual rate constraints. More specifically, the achievable scheme of Theorem \ref{theoremRateDist} is modified so that the index $W_{1}$ is split into two independent parts $(W_{1,1},W_{1,2})$, and the action sequence is selected based on only $W_{1,1}$. In the converse, the sum-rate constraint is the same as in the converse proof of Theorem \ref{theoremRateDist}, while the constraint on $R_{1}$ can be derived straightforwardly using the techniques from the point-to-point lossy source coding.
\end{proof}

\subsection{Binary Example}\label{sec:example}
We will show an example of the rate-distortion-cost function for the special case considered in Corollary \ref{sec:specialcase2} where the SI at the encoder is absent. Our example is a combination of examples in \cite{Permuter2011}  and \cite{Steinberg2009} which are based on the Wyner-Ziv example \cite{WynerZiv} and illustrate nicely the expected behavior of the rate-distortion function due to the implication of  action-dependent side information with cost \cite{Permuter2011} and common reconstruction constraint \cite{Steinberg2009}.

We consider a given source and side information distribution $P_{X}$,
$P_{S_{d}|X,A}$. We assume binary action $A \in \mathcal{A}=\{0,1\}$
with $A=1$ corresponding to observing the side information symbol and
$A=0$ to not observing it. We assume that an observation has a unit
cost, i.e., $\Lambda(A)=A$ and $E[\Lambda(A)]=P_{A}(1)=C$. We note
that the second mutual information term in \eqref{eq:alterR2}
neglecting $S_e$ corresponds to the CR rate-distortion function
\cite[eq.(8)]{Steinberg2009} conditioned on $A$. Let $D_{i}$ be the
contribution to the average distortion given $A=i$, $i=0,1$, i.e.,
$(1-C)D_{0}+CD_{1}=D$. Thus, the specialization of Theorem
\ref{theoremRateDist} for this case gives
\begin{align}\label{eq:Rac,cr}
  &R_{\text{ac,cr}}(D,C) =
  \min_{P_{A|X},P_{A}(1)=C,(1-C)D_{0}+CD_{1}=D}I(X;A) +(1-C)\cdot
  R(P_{X|A=0},D_{0}) + C \cdot R_{\text{cr}}(P_{X,S_{d}|A=1},D_{1}),
\end{align}
where $R(P_{X},D)$ denotes the rate-distortion function of the source
$P_{X}$ without side information and $R_{\text{cr}}(P_{X,S_{d}},D)$ denotes
the CR rate-distortion function defined in \cite{Steinberg2009} when
source and side information are jointly distributed according to
$P_{X,S_{d}}$.

It is interesting to compare $R_{\text{ac,cr}}(D,C)$ to the
rate-distortion-cost function of the case without the CR constraint
$R_{\text{ac}}(D,C)$ (a special case of \eqref{eq:rateDistor2} when
neglecting $S_e$) to see how much we have to ``pay" for satisfying the
additional CR constraint. In this case
\begin{align}\label{eq:Rac}
&R_{\text{ac}}(D,C) = \min_{P_{A|X},P_{A}(1)=C,(1-C)D_{0}+CD_{1}=D}I(X;A) +(1-C) \cdot R(P_{X|A=0},D_{0})  + C \cdot R_{\text{wz}}(P_{X,S_{d}|A=1},D_{1}),
\end{align}
where $R_{\text{wz}}(P_{X,S_{d}},D)$ denotes the Wyner-Ziv rate-distortion function when source and side information are jointly distributed according to $P_{X,S_{d}}$.
We note that the difference between \eqref{eq:Rac,cr} and \eqref{eq:Rac} is only in their last terms.

Let us consider a binary symmetric source, a binary reconstruction,
and a symmetric side information channel when actions are taken to
observe the side information. That is,
$\mathcal{X}=\mathcal{\hat{X}}=\mathcal{S}_{d}=\{0,1\}$, where $X$ is
distributed according to Bernoulli(1/2), and the side information
$S_{d}$ is given as an output of a binary symmetric channel with input
$X$ and crossover probability $p_{0}$ when $A=1$. The Hamming distance
is considered as the distortion measure.

In \cite[Example 1]{Steinberg2009} the author computes the CR rate-distortion function for this source,
\[R_{\text{cr}}(P_{X,S_{d}|A=1},D) = h(p_{0}\star D)-h(D), \quad 0\leq D\leq 1/2,\]
where $h(\cdot)$ is the binary entropy function and $p_{0}\star D \triangleq p_{0}(1-D)+(1-p_{0})D$.
As known from \cite{WynerZiv} the Wyner-Ziv rate-distortion function for this source is given by
\[R_{\text{wz}}(P_{X,S_{d}|A=1},D) = \inf_{\theta,\beta}\big[\theta\big(h(p_{0}\star \beta)-h(\beta)\big)\big],\]
for $ 0\leq D \leq p_{0}$, where the infimum is with respect to all $\theta, \beta$, where $0 \leq \theta \leq 1$ and $ 0 \leq \beta \leq p_{0}$ such that
$D = \theta\beta + (1-\theta)p_{0}$.
In addition, we know that $R(P_{X|A=0},D)=1-h(D)$ for this source \cite{InfoBook}.
\begin{figure}[]
    \centering
    \includegraphics[trim = 15 0 0 0,width=9.3cm]{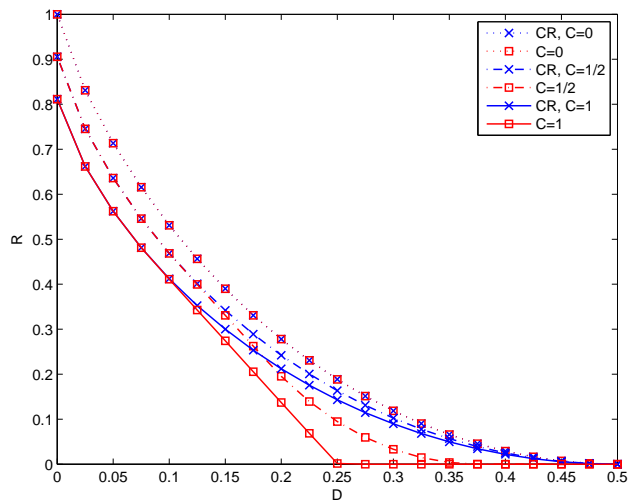}
    \vspace{-0.7cm}
     \caption{Rate-distortion curves for the binary symmetric source
       with common reconstruction and action-dependent side
       information available at the decoder. The markers $\times$
       correspond to the cases with CR constraint; the markers
       $\square$ correspond to the cases without the CR constraint. The different line styles correspond to different costs (dotted $C=0$, dashed-dotted $C=1/2$, and solid $C=1$).}\label{fig:example}
\vspace{-0.15cm}
\end{figure}

Using these results, we can compute \eqref{eq:Rac,cr} and
\eqref{eq:Rac}, and compare $R_{\text{ac,cr}}(D,C)$ and $R_{\text{ac}}(D,C)$ to
illustrate the consequences of enforcing the CR constraint. For a
given $C=0,1/2,$ and $1$, and $p_{0}=1/4$, we plot the rate-distortion
tradeoffs in Fig. \ref{fig:example}.  The plot shows that there is a
rate penalty when the CR constraint is required. This penalty changes
according to an action-cost as shown by the gap between
$R_{\text{ac,cr}}(D,C)$ and $R_{\text{ac}}(D,C)$ for different costs. Also, with the
additional CR constraint, there is a tradeoff between the action-cost
used for generating $S^{n}_{d}$ and the minimum rate one can compress
to achieve a desired distortion level. That is, ``spending" too much
on generating the SI for the decoder can negatively influence the
common reconstruction capability of the encoder, and thus affect the
minimum rate required to compress the source.

\section{Channel Coding with Action-dependent State and Reversible Input}\label{sec:channelcoding2}

In this section, we consider channel coding with action-dependent
state, where the state is known partially and noncausally at the
encoder and the decoder as depicted in Fig.
\ref{fig:SystemModel4}. In addition to decoding the message, the
channel input $X^{n}$ is reconstructed with arbitrarily small error
probability at the decoder. The corresponding reconstructed signal is
termed \emph{reversible input}. This setup captures the idea of
simultaneously transmitting both the message and channel input
sequence reliably over the channel. Our setup can be considered as a
combination of Weissman's channel with action-dependent state
\cite{Weissman2010}, and Sumszyk and Steinberg's information embedding
with reversible stegotext \cite{Sumszyk2009}. It is also closely
related to the problems of reversible information embedding
\cite{Willems} and state amplification \cite{Kim2008}.

In the following, we present the problem formulation, characterize the
main result which is the capacity of a discrete memoryless channel,
and also present some other related results. The channel capacity is
given as a solution to a constrained optimization problem with a
constraint on the set of input distributions. We term this constraint
the \emph{two-stage coding} condition since it arises essentially from the
two-stage structure of the encoding as well as the additional
reconstruction constraint of a signal generated in the second
stage. Also, we show in one example that such a constraint can be
active in some cases, i.e., it actively restricts the set of capacity
achieving input distributions, and when it is active, it will be
satisfied with equality. This two-stage coding condition will be
discussed further in Section \ref{sec:discussion}.

\subsection{Problem Formulation and Main Results}

\begin{figure}[]
    \centering
    \psfrag{m}[][][0.7]{$M$}
    \psfrag{act}[][][0.8]{Action}
    \psfrag{enc}[][][0.8]{encoder}
    \psfrag{Am}[][][0.7]{$A^{n}(M)$}
    \psfrag{Ps1s2}[][][0.75]{$P_{S_{e},S_{d}|A}$}
    \psfrag{s1s2}[][][0.7]{$S^{n}=(S_{e}^{n},S_{d}^{n})$}
    \psfrag{s1}[][][0.7]{$S_{e}^{n}$}
    \psfrag{s2}[][][0.7]{$S_{d}^{n}$}
    \psfrag{ch}[][][0.8]{Channel}
    \psfrag{Pyxs}[][][0.75]{$P_{Y|X,S}$}
    \psfrag{x}[][][0.7]{$X^{n}$}
    \psfrag{y}[][][0.7]{$Y^{n}$}
    \psfrag{k1}[][][0.7]{$l_{e}$}
    \psfrag{k2}[][][0.7]{$l_{d}$}
    \psfrag{dec}[][][0.8]{Decoder}
    \psfrag{mhat}[][][0.7]{$\hat{M},\hat{X}^{n}$}
    \includegraphics[width=12cm]{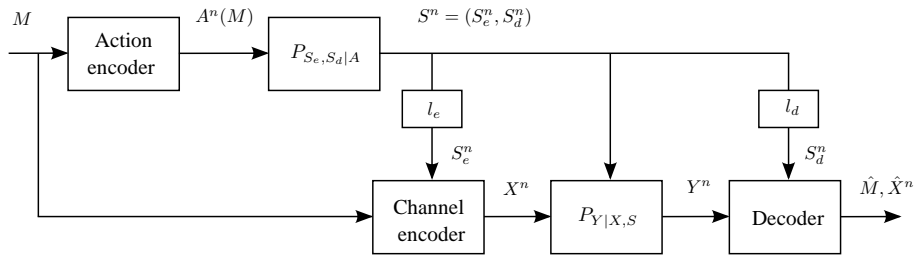}
     \caption{Channel with action-dependent state information and reversible channel input.}\label{fig:SystemModel4}
    \centering
\end{figure}

Let $n$ denote the block length and $\mathcal{A},
\mathcal{S}_{e},\mathcal{S}_{d}, \mathcal{X}$, and $\mathcal{Y}$ be
finite sets. The system consists of two encoders, namely, an action
encoder and a channel encoder, and one decoder. A message $M$ chosen
uniformly from the set
$\mathcal{M}^{(n)}=\{1,2,\ldots,|\mathcal{M}^{(n)}|\}$ is given to
both encoders. An action sequence $A^{n}$ is chosen based on the
message $M$ and is the input to the \emph{state information channel},
described by a triple $(\mathcal{A},P_{S_{e},S_{d}|A},\mathcal{S}_{e}
\times \mathcal{S}_{d})$, where $\mathcal{A}$ is the action alphabet,
$\mathcal{S}_{e}$ and $\mathcal{S}_{d}$ are the state alphabets, and
$P_{S_{e},S_{d}|A}$ is the transition probability from $\mathcal{A}$
to $(\mathcal{S}_{e}\times \mathcal{S}_{d})$. The channel state
$S^{n}=(S_{e}^{n},S_{d}^{n})$ is mapped to the partial state
information for the encoder and the decoder by the mappings
$l_{e}^{(n)}(S_{e}^{n},S_{d}^{n})=S_{e}^{n}$ and
$l_{d}^{(n)}(S_{e}^{n},S_{d}^{n})=S_{d}^{n}$. The input to the
\emph{state-dependent channel} is denoted by $X^{n}$. This channel is
described by a quadruple $(\mathcal{X},\mathcal{S}_{e} \times
\mathcal{S}_{d},P_{Y|X,S_{e},S_{d}},\mathcal{Y})$, where $\mathcal{X}$
is the input alphabet, $\mathcal{Y}$ is the output alphabet and
$P_{Y|X,S_{e},S_{d}}$ is the transition probability from $(\mathcal{X}
\times \mathcal{S}_{e} \times \mathcal{S}_{d})$ to $\mathcal{Y}$. The
decoder, which might be considered as two separate decoders, i.e., a
message decoder and a channel input decoder, decodes the message and
the channel input based on channel output $Y^{n}$ and state
information $S_{d}^{n}$. We assume that both state information and
state-dependent channels are discrete memoryless and used without
feedback with transition probabilities,
\[P_{S_{e}^{n},S_{d}^{n}|A^{n}}(s_{e}^{n},s_{d}^{n}|a^{n})=\prod_{i=1}^{n}P_{S_{e},S_{d}|A}(s_{e,i},s_{d,i}|a_{i}),\]
\[P_{Y^{n}|X^{n},S_{e}^{n},S_{d}^{n}}(y^{n}|x^{n},s_{e}^{n},s_{d}^{n})=\prod_{i=1}^{n}P_{Y|X,S_{e},S_{d}}(y_{i}|x_{i},s_{e,i},s_{d,i}).\]

\begin{definition}
An $(|\mathcal{M}^{(n)}|,n)$ code for the channels $P_{S_{e},S_{d}|A}$ and $P_{Y|X,S_{e},S_{d}}$ consists of the following functions:\\
an action encoder
\[f_{a}^{(n)}: \mathcal{M}^{(n)} \rightarrow \mathcal{A}^{n},\]
a channel encoder
\[f^{(n)}: \mathcal{M}^{(n)}\times \mathcal{S}_{e}^{n} \rightarrow \mathcal{X}^{n},\]
a message decoder
\[g_{m}^{(n)} : \mathcal{Y}^{n}\times \mathcal{S}_{d}^{n} \rightarrow \mathcal{M}^{(n)},\]
and a channel input decoder
\[g_{x}^{(n)} : \mathcal{Y}^{n}\times \mathcal{S}_{d}^{n} \rightarrow \mathcal{X}^{n}.\]

The average probabilities of error in decoding the message $M$ and the channel input $X^{n}$ are defined by
\[P^{(n)}_{m,e} = \frac{1}{|\mathcal{M}^{(n)}|}\sum_{m,s_{e}^{n},s_{d}^{n},\\y^{n}:g^{(n)}_{m}(y^{n},s_{d}^{n}) \neq m} p(y^{n}|f^{(n)}(m,s_{e}^{n}),s_{e}^{n},s_{d}^{n}) \cdot p(s_{e}^{n},s_{d}^{n}|f^{(n)}_{a}(m)),\]
\[P^{(n)}_{x,e} = \frac{1}{|\mathcal{M}^{(n)}|}\sum_{m,s_{e}^{n},s_{d}^{n},\\y^{n}:g^{(n)}_{x}(y^{n},s_{d}^{n}) \neq f^{(n)}(m,s_{e}^{n})} p(y^{n}|f^{(n)}(m,s_{e}^{n}),s_{e}^{n},s_{d}^{n})\cdot p(s_{e}^{n},s_{d}^{n}|f^{(n)}_{a}(m)).\]

\end{definition}

\begin{definition}
A rate $R$ is said to be \emph{achievable} if for any $\delta >0$ there exists for all sufficiently large $n$ an $(|\mathcal{M}^{(n)}|,n)$-code such that
$ \frac{1}{n}\log|\mathcal{M}^{(n)}|  \geq R-\delta, P^{(n)}_{m,e} \leq \delta,$ and $P^{(n)}_{x,e} \leq \delta$. The \emph{capacity} of the channel is the supremum of all achievable rates.
\end{definition}

\begin{theorem}\label{theoremCapa}
The capacity of channels with action-dependent state available noncausally to the encoder and the decoder and with reversible input at the decoder shown in Fig. \ref{fig:SystemModel4} is given by
\begin{align}
     C = \max[ I(A,X;Y,S_{d}) - I(X;S_{e}|A)], \label{eq:C}
\end{align}
where the joint distribution of $(A,S_{e},S_{d},X,Y)$ is of the form
\begin{align}
&P_{A}(a)P_{S_{e},S_{d}|A}(s_{e},s_{d}|a)P_{X|A,S_{e}}(x|a,s_{e})P_{Y|X,S_{e},S_{d}}(y|x,s_{e},s_{d}) \nonumber
\end{align}
and the maximization is over all $P_{A}$ and $P_{X|A,S_{e}}$
such that
\begin{align}
    0 \leq I(X;Y,S_{d}|A) - I(X;S_{e}|A).
\end{align}
\end{theorem}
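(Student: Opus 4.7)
My plan is to combine a two-stage Gel'fand--Pinsker achievability scheme with a converse that applies Fano's inequality to both decoding constraints, and extracts the two-stage coding condition as a separate asymptotic inequality.

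For achievability I would rate-split $R = R_1 + R_2$ and use a superposition--binning codebook: $2^{nR_1}$ action codewords $A^n(m_1) \sim \prod P_A$ at the outer layer, and for each $A^n(m_1)$ an inner codebook of $2^{n(R_2 + R')}$ sequences $X^n(m_1, m_2, j) \sim \prod P_{X|A}(\cdot \mid A_i(m_1))$ partitioned into $2^{nR_2}$ bins of size $2^{nR'}$. The action encoder sends $A^n(M_1)$; after observing $S_e^n$, the channel encoder picks some $j$ making $(A^n(M_1), X^n(M_1, M_2, j), S_e^n)$ jointly typical and transmits the chosen $X^n$. A joint-typicality decoder on $(Y^n, S_d^n)$ recovers $(M_1, M_2, j)$, simultaneously producing $\hat M$ and $\hat X^n$. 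Standard covering/packing analysis yields the constraints $R' > I(X; S_e \mid A)$ (covering), $R_2 + R' < I(X; Y, S_d \mid A)$ (conditional packing given $A^n$), and $R_1 + R_2 + R' < I(A, X; Y, S_d)$ (joint packing). Eliminating $R'$ gives $R < I(A, X; Y, S_d) - I(X; S_e \mid A)$, while the requirement $R_2 \geq 0$ forces the two-stage condition $I(X; Y, S_d \mid A) \geq I(X; S_e \mid A)$.

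For the converse I would apply Fano to both the message and the reversible input, so $H(M, X^n \mid Y^n, S_d^n) \leq n\epsilon_n$. Starting from $nR = H(M) = H(M, X^n) - H(X^n \mid M)$ and using $A^n = f_a(M)$ and $X^n = f(M, S_e^n)$, I identify $H(X^n \mid M) = I(X^n; S_e^n \mid M, A^n)$, leading to
\begin{equation*}
nR \leq I(M, X^n; Y^n, S_d^n) - I(X^n; S_e^n \mid M, A^n) + n\epsilon_n.
\end{equation*}
Using $I(M; S_e^n \mid A^n) = 0$ (state depends on $M$ only through $A^n$), the Markov chain $M - (A^n, X^n, S_e^n, S_d^n) - Y^n$, and the identity $I(X^n; S_e^n \mid M, A^n) = I(X^n; S_e^n \mid A^n) + I(M; S_e^n \mid X^n, A^n)$, I reorganize to cancel the $M$-dependent cross terms and obtain $nR \leq I(A^n, X^n; Y^n, S_d^n) - I(X^n; S_e^n \mid A^n) + n\epsilon_n$. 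Separately, the two-stage coding condition follows from the $X^n$-Fano bound: since $H(X^n \mid A^n, Y^n, S_d^n) \leq n\epsilon_n$ and $H(X^n \mid A^n) \geq I(X^n; S_e^n \mid A^n)$, we get $I(X^n; Y^n, S_d^n \mid A^n) \geq I(X^n; S_e^n \mid A^n) - n\epsilon_n$. A standard time-sharing single-letterization then yields \eqref{eq:C} and its associated constraint.

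The main obstacle will be the converse manipulation that replaces $I(X^n; S_e^n \mid M, A^n)$ with $I(X^n; S_e^n \mid A^n)$; this requires verifying $I(M; Y^n, S_d^n \mid A^n, X^n) \leq I(M; S_e^n \mid X^n, A^n)$, which relies on the fact that $(Y^n, S_d^n)$ given $(A^n, X^n, S_e^n)$ is conditionally independent of $M$. A secondary subtlety is that the two-stage coding condition is a \emph{distribution-level} constraint rather than a rate bound, which is precisely why the reversible-input Fano inequality must be invoked separately from, rather than folded into, the message Fano inequality.
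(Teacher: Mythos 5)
Your achievability scheme is essentially the paper's: rate splitting, a superposition codebook of $X^{n}$ sequences binned with $2^{n(I(X;S_{e}|A)+\delta_{\epsilon})}$ codewords per bin, covering at the encoder and packing at the decoder, with the two-stage condition emerging from $R_{2}\geq 0$ exactly as in \eqref{eq:M2}. (The paper decodes $A^{n}$ and then $X^{n}$ sequentially rather than jointly, but it notes that this changes only the individual-rate constraints, not the sum rate.) Your converse also follows the paper's strategy: Fano applied to both $M$ and $X^{n}$, a rearrangement that replaces $H(S_{e}^{n}|M,A^{n})$ by $H(S_{e}^{n}|A^{n})$, and a separate derivation of the two-stage condition from the reversible-input Fano bound. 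Your key inequality $I(M;Y^{n},S_{d}^{n}|A^{n},X^{n})\leq I(M;S_{e}^{n}|A^{n},X^{n})$ is correct, since $M-(A^{n},X^{n},S_{e}^{n})-(Y^{n},S_{d}^{n})$ is a Markov chain; your route to the multi-letter two-stage condition is in fact cleaner than the paper's.

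The one place where your plan is under-specified is the final step, which you dismiss as a ``standard time-sharing single-letterization''; this is where most of the paper's converse effort actually goes, for two reasons. First, $I(A^{n},X^{n};Y^{n},S_{d}^{n})$ does not single-letterize termwise: given $(A^{n},X^{n})$ alone the pair $(Y^{n},S_{d}^{n})$ is not memoryless, because conditioning on $X^{n}=f^{(n)}(M,S_{e}^{n})$ couples the $S_{e,i}$ across time, so the needed identity $H(Y_{i},S_{d,i}|A^{n},X^{n},Y^{i-1},S_{d}^{i-1})= H(Y_{i},S_{d,i}|A_{i},X_{i})$ fails (conditioning reduces entropy goes the wrong way here). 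One must first rewrite the difference as $I(A^{n},X^{n},S_{e}^{n};Y^{n},S_{d}^{n})-I(X^{n},Y^{n},S_{d}^{n};S_{e}^{n}|A^{n})$ so that the conditioning includes $S_{e}^{n}$ and memorylessness applies, or carry an explicit auxiliary $Z_{i}$ through a Csisz\'{a}r-sum manipulation and then discard it using the Markov chains of Lemma \ref{lemma:markovchain1}, as the paper does in the step leading to \eqref{eq:duplicate}. Second, because the capacity is a \emph{constrained} maximization, the rate bound and the two-stage condition must be shown to hold simultaneously for one common single-letter distribution; the paper arranges this by inserting the same time-sharing variable $Q$ into both chains and then proving (Lemma \ref{lemma:markovchain2}, and the convexity argument of Appendix \ref{sec:proof_convex_dummy}) that both functionals can only increase when $Q$ is marginalized out. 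Neither point invalidates your plan, but both must be carried out explicitly for the converse to close.
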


\begin{proof}
  We prove achievability by showing that any rate $R < C$ is
  achievable, i.e., for any $\delta>0$, there exists for all
  sufficiently large $n$ an $(|\mathcal{M}^{(n)}|, n)$ code with
  $\frac{1}{n}\log|\mathcal{M}^{(n)}| \geq R-\delta$, and average
  probabilities of error $P^{(n)}_{m,e} \leq \delta$, and
  $P^{(n)}_{x,e} \leq \delta$.  The proof of achievability uses random
  coding and joint typicality decoding. Conversely, we show that given
  any sequence of $(|\mathcal{M}^{(n)}|, n)$ codes with
  $\frac{1}{n}\log|\mathcal{M}^{(n)}| \geq R-\delta_{n}$,
  $P^{(n)}_{m,e} \leq \delta_{n}$, and $P^{(n)}_{x,e} \leq
  \delta_{n}$, then $R \leq C$.  The proof of the converse uses Fano's
  inequality and properties of the entropy function.

  The achievability proof follows arguments in \cite{Weissman2010}
  with a modification in which we use the channel input codeword
  $x^{n}$ directly instead of the auxiliary codeword. In the
  following, we give a sketch of the achievability proof. An action
  codebook $\{a^{n}\}$ of size $2^{n(I(A;Y,S_d)-\delta_{\epsilon})}$
  is generated i.i.d. $\sim P_{A}$. For each $a^{n}$, another
  codebook $\{x^{n}\}$ of size $2^{n(I(X;Y,S_d|A)-\delta_{\epsilon})}$
  is generated i.i.d. $\sim P_{X|A}$. Then the codewords are
  distributed uniformly into
  $2^{n(I(X;Y,S_d|A)-I(X;S_e|A)-2\delta_{\epsilon})}$ equal-sized bins
  (see Fig. \ref{fig:binningCC}). Given the message $m=(m_1,m_2)$, the
  action codeword $a^{n}(m_1)$ is selected. Then the channel states
  $(s_{e}^{n},s_{d}^{n})$ are generated as an output of the memoryless
  channel with transition probability
  $P_{S_{e}^{n},S_{d}^{n}|A^{n}}(s_{e}^{n},s_{d}^{n}|a^{n})=\prod_{i=1}^{n}P_{S_{e},S_{d}|A}(s_{e,i},s_{d,i}|a_{i})$.
  The encoder looks for $x^{n}$ that corresponds to $m_1$ and is in
  the bin $m_2$ such that it is jointly typical with the selected
  $a^{n}$ and $s_{e}^{n}$. For sufficiently large $n$, with
  arbitrarily high probability, there exists such a codeword because
  there are approximately $2^{n(I(X;S_e|A)+\delta_{\epsilon})}$
  codewords in the bin. Then the selected $x^{n}$ is transmitted over
  the channel $P_{Y|X,S_{e},S_{d}}$. Given $y^{n}$ and $s_{d}^{n}$,
  the decoder in the first step looks for codeword $a^{n}$ that is
  jointly typical with $y^{n}$ and $s_{d}^{n}$. With high probability,
  it will find one and it is the one chosen by the encoder since the
  codebook size is $2^{n(I(A;Y,S_d)-\delta_{\epsilon})}$. Then, given
  the correctly decoded $m_1$, the decoder in the second step looks
  for $x^{n}$ that is jointly typical with $y^{n},s_{d}^{n}$, and
  $a^{n}$. Again, with high probability, it will find one and it is
  the one chosen by the encoder since the size of the codebook is
  $2^{n(I(X;Y,S_d|A)-\delta_{\epsilon})}$. The corresponding bin index
  is then decoded as $\hat{m}_{2}$. In total,
  $I(A;Y,S_d)+I(X;Y,S_d|A)-I(X;S_e|A)-3\delta_{\epsilon}$ bits per
  channel use can be used to transmit the message $m$ such that both
  $m$ and $x^{n}$ are decoded correctly at the decoder.  Note that the
  above coding scheme which splits the message into two parts and
  decodes them sequentially works successfully when we have a proper
  positive number of bins for codewords $x^{n}$, i.e.,
  $I(X;Y,S_d|A)-I(X;S_e|A)-2\delta_{\epsilon} > 0$. The more detailed
  achievability proof and converse proof are given in Appendix
  \ref{sec:proofchannelcoding}.
\end{proof}

    \begin{figure}[]
    \centering
    \psfrag{e}[][][0.75]{$2^{n(I(X;S_{e}|A)+\delta_{\epsilon})}$}
    \psfrag{dot}[][][0.75]{$\cdots$}
    \psfrag{x}[][][0.75]{a codeword $X^{n}$}
    \psfrag{p1}[][][0.75]{$2^{n(I(X;Y,S_{d}|A)-I(X;S_{e}|A)-2\delta_{\epsilon})}$ bins}
    \includegraphics[width=9cm]{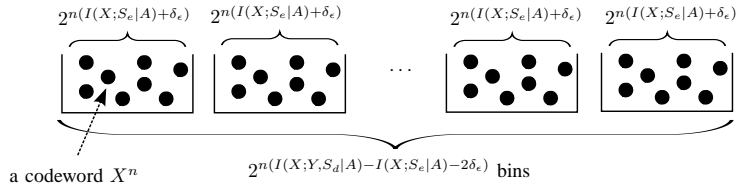}
    \caption{Binning for the achievability: for each codeword $a^{n}$, a codebook $\{x^{n}\}$ of size $2^{n(I(X;Y,S_{d}|A)- \delta_{\epsilon})}$ is generated  i.i.d. each $\sim P_{X|A}$. Then they are distributed uniformly into $2^{n(I(X;Y,S_{d}|A)-I(X;S_{e}|A)-2\delta_{\epsilon})}$ equal-sized bins.} \label{fig:binningCC}
    \end{figure}

    We term the condition $I(X;Y,S_{d}|A)-I(X;S_{e}|A) \geq 0$ which
    appears in Theorem \ref{theoremCapa} the \emph{two-stage coding
    condition} since it represents the underlining sufficient condition
    for successful two-stage coding. It plays a role in restricting
    the set of input distributions in the capacity expression. It is
    also natural to wonder whether the two-stage coding condition can
    really be active or is always inactive when computing the
    capacity. In Example 1, Subsection C, we show by example that
    there exists a case where the condition is active. In the
    following results we also show that if the condition is active,
    then it is satisfied with equality, i.e., the capacity is obtained
    with $I(X;Y,S_{d}|A)-I(X;S_{e}|A) = 0$.  More details on the
    two-stage coding condition and its connection to other related
    problems will be given in Section
    \ref{sec:discussion}.

\begin{remark}\label{remark:generalCh}
It is possible to consider an action symbol as another input to the memoryless channel $P_{Y|X,S_{e},S_{d}}$. The capacity expression for this more general channel $P_{Y|X,S_{e},S_{d},A}$ remains unchanged. This can be shown by defining the new state $S'_{e}\triangleq (S_{e},A)$ and then applying the characterization in Theorem \ref{theoremCapa}.
\end{remark}

\begin{proposition}\label{prop:activewith0}
  If the two-stage coding condition is ignored, and the solution to
  the unconstrained problem in \eqref{eq:C} results in $I(X;Y,S_{d}|A)
  - I(X;S_{e}|A) < 0$ (the two-stage coding condition is active), then
  the actual channel capacity will be obtained with $I(X;Y,S_{d}|A) -
  I(X;S_{e}|A) = 0$.
\end{proposition}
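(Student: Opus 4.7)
The plan is a time-sharing (convexification) argument. Let $P^{\star}$ denote the unconstrained maximizer of the objective $f(P)\triangleq I(A,X;Y,S_{d})-I(X;S_{e}|A)$, which by hypothesis satisfies $g(P^{\star})\triangleq I(X;Y,S_{d}|A)-I(X;S_{e}|A)<0$. The strategy is to show that for every feasible distribution $P^{\circ}$ (i.e.\ $g(P^{\circ})\geq 0$), one can construct an admissible distribution $\tilde{P}$ lying on the boundary, with $g(\tilde{P})=0$ and $f(\tilde{P})\geq f(P^{\circ})$. Since $P^{\circ}$ is arbitrary, this will yield $\sup\{f(P):g(P)\geq 0\}=\sup\{f(P):g(P)=0\}$, which is the assertion of the proposition.

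To build $\tilde{P}$, introduce a binary time-sharing random variable $Q$ with $\Pr(Q=1)=\lambda$, and couple it with the other variables so that conditionally on $\{Q=1\}$ the joint law is that induced by $P^{\star}$ and conditionally on $\{Q=2\}$ it is induced by $P^{\circ}$. I would then absorb $Q$ into the action by defining a new action variable $A'\triangleq(A,Q)$. Because the kernels $P_{S_{e},S_{d}|A}$ and $P_{Y|X,S_{e},S_{d}}$ do not involve $Q$ once $A$ and $(X,S_{e},S_{d})$ are given, the joint law factors in the admissible form of Theorem~\ref{theoremCapa}, with $A'$ serving as the (enlarged-alphabet) action.

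Both conditional mutual informations appearing in $g$ are linear in $\lambda$:
\begin{equation*}
g(\tilde{P})=I(X;Y,S_{d}|A,Q)-I(X;S_{e}|A,Q)=\lambda\,g(P^{\star})+(1-\lambda)\,g(P^{\circ}),
\end{equation*}
so setting $\lambda^{*}=g(P^{\circ})/(g(P^{\circ})-g(P^{\star}))\in[0,1]$, which is well-defined because $g(P^{\star})<0\leq g(P^{\circ})$, forces $g(\tilde{P})=0$. For the objective, using the chain rule,
\begin{equation*}
f(\tilde{P})=I(Q;Y,S_{d})+I(A,X;Y,S_{d}|Q)-I(X;S_{e}|A,Q)=I(Q;Y,S_{d})+\lambda^{*}f(P^{\star})+(1-\lambda^{*})f(P^{\circ}),
\end{equation*}
and non-negativity of $I(Q;Y,S_{d})$ together with $f(P^{\star})\geq f(P^{\circ})$ (since $P^{\star}$ is the unconstrained maximizer) yields $f(\tilde{P})\geq f(P^{\circ})$, as desired.

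The main delicate point I anticipate is verifying that absorbing $Q$ into the action produces a bona fide admissible input in the sense of Theorem~\ref{theoremCapa}: this amounts to checking that the Markov structure $(S_{e},S_{d})-A-Q$ and $Y-(X,S_{e},S_{d})-A'$ holds by construction, and that no cardinality bound on $\mathcal{A}$ is violated (none is imposed in the theorem statement, so enlarging the action alphabet is harmless). Once this bookkeeping is in place, the three-line computation above closes the argument, with the non-negativity of $I(Q;Y,S_{d})$ providing the crucial slack that lets us trade a strictly feasible point for one exactly on the boundary without decreasing the objective.
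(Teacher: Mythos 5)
Your overall strategy---time-sharing between the infeasible unconstrained maximizer $P^{\star}$ and an arbitrary feasible $P^{\circ}$ so as to land exactly on the boundary $g=0$ without losing objective value---is the same convexification idea that drives the paper's proof, which packages it as convexity of a two-dimensional region $\mathcal{R}_{\text{mod}}$ in the $(R,\tilde R)$ plane and then argues geometrically. However, the step you yourself flag as delicate is where your argument breaks: absorbing $Q$ into the action is not legitimate. The action alphabet $\mathcal{A}$ and the kernel $P_{S_{e},S_{d}|A}$ are part of the channel specification, and the maximization in Theorem~\ref{theoremCapa} is over $P_{A}$ and $P_{X|A,S_{e}}$ for that fixed channel; unlike an auxiliary random variable, the action is a physical input to the state-information channel, so its alphabet is not a free parameter of the optimization. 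Your $\tilde P$ with action $A'=(A,Q)$ is an admissible input only for a \emph{different} state-information channel (one with input alphabet $\mathcal{A}\times\{1,2\}$), so the bound $f(\tilde P)\ge f(P^{\circ})$ concerns the modified problem, not the quantity in \eqref{eq:C}. Arguing that the $Q$-component can be stripped off without loss is essentially the $Q$-elimination step of the paper's converse, and that step yields only one-sided inequalities.

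Concretely, if you marginalize $Q$ out to obtain a bona fide admissible distribution $P_{\mathrm{marg}}(\lambda)$ on the original alphabets, the exact linearity you rely on fails: using the Markov chains $(Y,S_{d})-(X,A,S_{e})-Q$ and $S_{e}-A-Q$ one obtains only $g(P_{\mathrm{marg}})\ge \lambda g(P^{\star})+(1-\lambda)g(P^{\circ})$ and $f(P_{\mathrm{marg}})\ge \lambda f(P^{\star})+(1-\lambda)f(P^{\circ})$ (these are precisely the paper's inequalities \eqref{eq:convex_comb0} and \eqref{eq:convex_comb1}, which it uses to prove convexity of $\mathcal{R}_{\text{mod}}$), so your choice of $\lambda^{*}$ no longer pins $g$ to zero. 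The gap is repairable: $g(P_{\mathrm{marg}}(\lambda))$ is continuous in $\lambda$, equals $g(P^{\circ})\ge 0$ at $\lambda=0$ and $g(P^{\star})<0$ at $\lambda=1$, so the intermediate value theorem supplies some $\lambda_{0}$ with $g(P_{\mathrm{marg}}(\lambda_{0}))=0$, at which point the domination inequality for $f$ gives $f(P_{\mathrm{marg}}(\lambda_{0}))\ge \lambda_{0}f(P^{\star})+(1-\lambda_{0})f(P^{\circ})\ge f(P^{\circ})$. With that substitution your argument closes; as written, it does not.
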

\begin{proof}
  We consider a set $\mathcal{R}_{\text{mod}}$ containing pairs of rate $R$
  and dummy variable $\tilde{R} \in \mathbb{R}$ introduced for the
  two-stage coding condition, i.e.,
\begin{align*}
  \mathcal{R}_{\text{mod}} = \{ (R,\tilde{R}): 0 \leq &R \leq I(A,X;Y,S_{d})-I(X;S_{e}|A)  \triangleq I(A;Y,S_{d})+ \Delta I\\
  &\tilde{R} < I(X;Y,S_{d}|A)-I(X;S_{e}|A) \triangleq \Delta I\\
  \mbox{for some}\ \
  P_{A}(a)&P_{S_{e},S_{d}|A}(s_{e},s_{d}|a)P_{X|A,S_{e}}(x|a,s_{e})P_{Y|X,S_{e},S_{d}}(y|x,s_{e},s_{d})
  \}
\end{align*}
For each $P_{A} \in \mathcal{P}_{A}, P_{X|A,S_{e}} \in
\mathcal{P}_{X|A,S_{e}}$, we can compute a tuple $(I(A;Y,S_{d})+
\Delta I, \Delta I)$, and obtain the corresponding region as shown in
Fig. \ref{fig:region1}. We can show that the region
$\mathcal{R}_{\text{mod}}$ is convex (see Appendix
\ref{sec:proof_convex_dummy}). Then, to evaluate the region
$\mathcal{R}_{\text{mod}}$, we find the union of all regions obtained
from all possible $P_{A} \in \mathcal{P}_{A}, P_{X|A,S_{e}} \in
\mathcal{P}_{X|A,S_{e}}$. Our main task is to compute the channel
capacity so we are interested in finding the maximum rate $R$ under
the feasible value of $\Delta I$, i.e., $\Delta I \geq 0$. Since
$\mathcal{R}_{\text{mod}}$ is convex, one can show that there are only
two possible shapes of the region $\mathcal{R}_{\text{mod}}$, i.e.,
the ones where the maximum of $R$ is obtained with non-negative and
negative $\Delta I$, respectively. This is depicted in
Fig.~\ref{fig:region3}. The case $(b)$ in Fig.~\ref{fig:region3},
which is the case where the two-stage coding condition is active, is
of interest here. Since the feasible solutions have to satisfy $\Delta
I \geq 0$, we can conclude that when the two-stage coding condition is
active, the channel capacity will be obtained with $\Delta I = 0$.

    \begin{figure}[]
    \centering
    \psfrag{R}[][][0.75]{$R$}
    \psfrag{R'}[][][0.75]{$\tilde{R}$}
    \psfrag{d}[][][0.75]{$\Delta I$}
    \psfrag{I+d}[][][0.75]{$I(A;Y,S_{d})+ \Delta I$}
    \psfrag{0}[][][0.75]{$0$}
    \psfrag{a}[][][0.75]{$(a)$}
    \psfrag{b}[][][0.75]{$(b)$}
    \includegraphics[width=10cm]{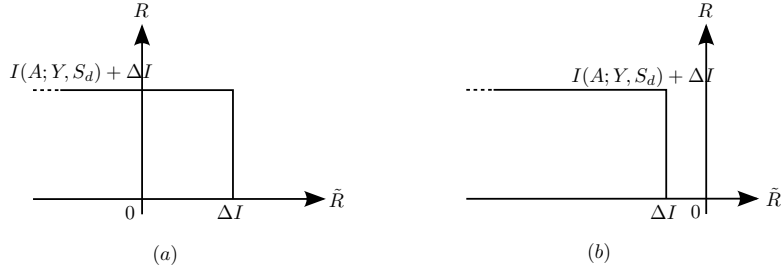}
    \caption{(a) the corresponding region with $\Delta I \geq 0$, (b) the corresponding region with $\Delta I < 0$.} \label{fig:region1}
    \end{figure}
\end{proof}

    \begin{figure}[]
    \centering
    \psfrag{R}[][][0.75]{$R$}
    \psfrag{R'}[][][0.75]{$\tilde{R}$}
    \psfrag{R=R'}[][][0.75]{$R=\tilde{R}$}
    \psfrag{0}[][][0.75]{$0$}
    \psfrag{a}[][][0.75]{$(a)$}
    \psfrag{b}[][][0.75]{$(b)$}
    \includegraphics[width=10cm]{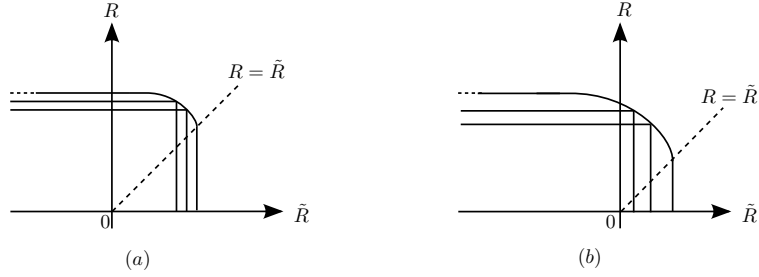}
    \caption{(a) the region $\mathcal{R}_{\text{mod}}$ where the maximum $R$ achieved with $\Delta I \geq 0$, (b) the region $\mathcal{R}_{\text{mod}}$ where the maximum $R$ achieved with $\Delta I < 0$.} \label{fig:region3}
    \end{figure}

\subsection{Other Results}
In the following, we provide some conclusions which help develop our
understanding and also relate our main result to other known results
in the literature. We consider the case where the reversible input
constraint is omitted and then our setting recovers Weissman's channel
with action-dependent state \cite{Weissman2010}. On the other hand, if
the channel state sequences are given by nature, i.i.d.~according to
some distribution, then our setting simply recovers the special case of
information embedding with reversible stegotext \cite{Sumszyk2009}. We
also consider the special case where channel state information at the
encoder or the decoder is absent. The result in this case can be
derived straightforwardly by setting the channel state variable to a
constant value. Lastly, it is also natural to consider the case where
the decoder is interested in decoding the message and the encoder's
state information instead. By this, the channel input sequence can be
retrieved based on the decoded message, the encoder's state
information, and a known deterministic encoding function. We show that
if the objective is to decode only the message and the channel input,
then decoding the message and encoder's state information first, and
then re-encoding the channel input is suboptimal.

\begin{proposition}
  When the reversible input constraint is omitted, the capacity of
  the channel with action-dependent state available noncausally to the
  encoder and the decoder is given by
\begin{align}\label{eq:lowerbound2}
     C_{M} &= \max [I(A,U;Y,S_{d})-I(U;S_{e}|A)],
\end{align}
where the joint distribution of $(A,S_{e},S_{d},U,X,Y)$ is of the form
\begin{align}
&P_{A}(a)P_{S_{e},S_{d}|A}(s_{e},s_{d}|a)P_{U|A,S_{e}}(u|a,s_{e})1_{\{X=\tilde{f}(U,S_{e})\}}P_{Y|X,S_{e},S_{d}}(y|x,s_{e},s_{d}). \nonumber 
\end{align}
and the maximization is over $P_{A},P_{U|A,S_{e}}$ and $\tilde{f}: \mathcal{U} \times \mathcal{S}_{e} \rightarrow \mathcal{X}$, and $U$ is the auxiliary random variable with $|\mathcal{U}| \leq |\mathcal{A}||\mathcal{S}_{e}||\mathcal{X}|+1$.
\end{proposition}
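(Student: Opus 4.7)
The plan is to adapt the two-stage coding framework of Weissman~\cite{Weissman2010} (action codebook plus Gel'fand--Pinsker binning) but with \emph{joint} typicality decoding of all indices, and to establish a matching converse through a Fano-plus-Csiszar-sum-identity argument, with the cardinality bound coming from Fenchel--Eggleston--Carath\'eodory. For achievability, I would split the message as $m=(m_1,m_2)$, draw $2^{nR_1}$ action codewords $a^n(m_1)$ i.i.d.\ $\sim P_A$, and for each $a^n(m_1)$ draw a sub-codebook of $2^{n(R_2+R')}$ codewords $u^n(m_1,m_2,l)$ i.i.d.\ $\sim P_{U|A}(\cdot\,|\,a^n(m_1))$. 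The encoder selects $a^n(m_1)$, observes $s_e^n$, chooses an $l$ so that $u^n(m_1,m_2,l)$ is jointly typical with $(a^n(m_1),s_e^n)$, and transmits $x_i=\tilde f(u_i,s_{e,i})$. The decoder jointly decodes $(\hat m_1,\hat m_2,\hat l)$ by searching for the unique triple such that $(a^n(\hat m_1),u^n(\hat m_1,\hat m_2,\hat l),y^n,s_d^n)$ is jointly typical. The covering lemma requires $R' > I(U;S_e|A)$, and the packing lemma requires $R_1+R_2+R' < I(A,U;Y,S_d)$, yielding $R=R_1+R_2 < I(A,U;Y,S_d)-I(U;S_e|A)$. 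Because decoding is joint over all three indices, the argument never requires $I(U;Y,S_d|A)-I(U;S_e|A)\ge 0$ as an intermediate rate constraint, which is precisely why the two-stage coding condition disappears once the RI requirement is dropped.

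For the converse, I would begin with $nR \le I(M;Y^n,S_d^n)+n\epsilon_n$ by Fano's inequality, and use $A^n=f_a^{(n)}(M)$ to rewrite this as $I(M,A^n;Y^n,S_d^n)+n\epsilon_n$. Applying the chain rule together with Csiszar's sum identity to process the future-state terms, I would make the per-letter identifications $A_i$ (the $i$th action symbol) and an appropriate Gel'fand--Pinsker-type auxiliary $U_i$ built from $M$, past channel outputs, and the relevant past/future partial-state components, so that the sum single-letterizes into $\sum_{i=1}^n\bigl[I(A_i,U_i;Y_i,S_{d,i})-I(U_i;S_{e,i}|A_i)\bigr]$. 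Introducing a uniform time-sharing variable $Q$ and absorbing it into $U=(Q,U_Q)$, $A=A_Q$, and so on, then yields the stated single-letter bound. The required distributional structure (in particular the deterministic relation $X=\tilde f(U,S_e)$) is inherited from the operational setup, since $X_i$ is a deterministic function of $(M,S_e^n)$ and can therefore be written as a deterministic function of $(U_i,S_{e,i})$ for the chosen $U_i$.

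For the cardinality bound $|\mathcal U|\le |\mathcal A||\mathcal S_e||\mathcal X|+1$, I would invoke the Fenchel--Eggleston--Carath\'eodory theorem: $|\mathcal A||\mathcal S_e||\mathcal X|-1$ linear constraints suffice to preserve the joint marginal $P_{A,S_e,X}$, which together with the fixed $P_{S_e,S_d|A}$ and $P_{Y|X,S_e,S_d}$ pins down the entire joint law of $(A,S_e,S_d,X,Y)$, and $2$ further constraints preserve the two mutual-information quantities in the objective. The main obstacle I anticipate is the Csiszar-sum bookkeeping in the converse: one has to combine the future-state contributions so that the $S_e$ dependence emerges as $I(U;S_e|A)$ rather than, say, $I(U;S_e|A,S_d)$, and one must simultaneously ensure that $U_i$ contains enough of $S_e^{i-1}$ (or an equivalent encoding thereof) to support the deterministic map $X_i=\tilde f(U_i,S_{e,i})$, so that the achievability and converse expressions are evaluated over the same class of joint distributions.
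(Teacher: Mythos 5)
Your overall architecture (superposition action codebook plus Gel'fand--Pinsker covering, Fano/Csisz\'{a}r-sum converse, support-lemma cardinality bound) is the right one, and it is essentially the route the paper takes -- the paper's own proof is a one-line reduction to Weissman's arguments with $(Y^{n},S_{d}^{n})$ treated as the new channel output and the Markov constraints $U-(A,S_{e})-S_{d}$ and $X-(U,S_{e})-(A,S_{d})$ imposed. However, there is a genuine gap in your achievability argument, at exactly the point you single out as the crux. Joint typicality decoding of the triple $(\hat m_{1},\hat m_{2},\hat l)$ does \emph{not} eliminate the intermediate constraint. The error events split into two classes: (i) $\tilde m_{1}\neq m_{1}$, for which the packing lemma gives $R_{1}+R_{2}+R' < I(A,U;Y,S_{d})$ as you state; and (ii) $\tilde m_{1}=m_{1}$ but $(\tilde m_{2},\tilde l)\neq(m_{2},l)$, for which the competing codewords $u^{n}(m_{1},\tilde m_{2},\tilde l)$ are conditionally i.i.d.\ given the \emph{correct} $a^{n}(m_{1})$, so the packing lemma conditioned on $A$ gives $R_{2}+R' < I(U;Y,S_{d}|A)$. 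Combined with the covering requirement $R' > I(U;S_{e}|A)$, class (ii) yields $R_{2} < I(U;Y,S_{d}|A)-I(U;S_{e}|A)$ -- the two-stage coding condition reappears. This is the same phenomenon as the partial-sum constraints in joint decoding for the multiple-access channel; requiring a unique jointly typical triple cannot make the conditional constraint vanish.

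The reason the condition is nevertheless absent from $C_{M}$ is operational, not a consequence of joint decoding: without the reversible-input requirement the decoder never needs to identify $u^{n}$, so if a candidate distribution has $I(U;Y,S_{d}|A)-I(U;S_{e}|A)<0$ one simply sets $R_{2}=0$, loads the entire message onto the action codeword, and achieves $I(A;Y,S_{d})$, which strictly exceeds $I(A,U;Y,S_{d})-I(U;S_{e}|A)$ in that regime; hence the target rate is achievable in either case. (This is precisely the degree of freedom that is lost in Theorem 2, where $X^{n}$ must be decoded and the condition therefore survives.) Your proof needs this case split to be complete. A secondary point: the obstacle you flag in the converse -- that the natural choice $U_{i}=(M,Y^{i-1},S_{d}^{i-1},S_{e,i+1}^{n},A^{n\setminus i})$ does not contain $S_{e}^{i-1}$ and hence does not directly support $X_{i}=\tilde f(U_{i},S_{e,i})$ -- is real, and the standard resolution is to prove the outer bound over general conditionals $P_{X|U,S_{e}}$ and then observe that restricting to deterministic maps does not change the maximum; as written you identify the difficulty but do not resolve it.
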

\begin{proof}
  The proof follows from arguments in \cite{Weissman2010} with
  modifications such that the state $S^{n}=(S_{e}^{n},S_{d}^{n})$, and
  $(Y^{n},S_{d}^{n})$ are considered as the new channel output, and a
  set of distributions is restricted to satisfy the Markov relations
  $U-(A,S_{e})-S_{d}$ and $X-(U,S_{e})-(A,S_{d})$.
\end{proof}
\begin{corollary}\label{coroll:example}
For the special case where the state information at the decoder is absent, the capacity of the channel is given as a special case of Theorem \ref{theoremCapa} by setting $S_{d}$ to a constant value.
\end{corollary}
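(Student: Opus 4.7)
The plan is to derive Corollary \ref{coroll:example} as a direct specialization of Theorem \ref{theoremCapa} by fixing $S_d$ to a deterministic constant value throughout the model of Fig.~\ref{fig:SystemModel4}. Operationally this amounts to removing the decoder's side information while leaving the action encoder, the state-information channel (which now simply outputs $S_e^n$ to the encoder), the state-dependent channel $P_{Y|X,S_e}$, the channel encoder, and the two decoders untouched. Since this is a strict restriction of the original setting, both halves of Theorem \ref{theoremCapa} should carry over with only notational simplifications.

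First, I would specialize the capacity formula \eqref{eq:C}. Setting $S_d$ to a constant trivially gives $I(A,X;Y,S_d)=I(A,X;Y)$ and $I(X;Y,S_d|A)=I(X;Y|A)$, so the claim is that the capacity reduces to
\begin{equation*}
C = \max\bigl[\,I(A,X;Y) - I(X;S_e|A)\,\bigr]
\end{equation*}
over joint distributions of the form $P_A(a)P_{S_e|A}(s_e|a)P_{X|A,S_e}(x|a,s_e)P_{Y|X,S_e}(y|x,s_e)$ subject to the two-stage coding condition $I(X;Y|A)-I(X;S_e|A)\geq 0$.

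Second, I would justify achievability and the converse separately, though both are essentially inherited. For achievability, I would invoke the random coding scheme already described in the proof sketch of Theorem \ref{theoremCapa}: generate an action codebook and, for each $a^n$, a binned $x^n$ codebook, and decode $a^n$ and then $x^n$ from $(y^n,s_d^n)$. Fixing $s_d^n$ to a constant deterministic sequence leaves the joint typicality decoder well-defined and all error probability bounds unchanged, so the same rate $I(A;Y)+I(X;Y|A)-I(X;S_e|A)-3\delta_\epsilon = I(A,X;Y)-I(X;S_e|A)-3\delta_\epsilon$ is achievable with both $m$ and $x^n$ decoded correctly. For the converse, I would observe that any $(|\mathcal{M}^{(n)}|,n)$ code for the no-$S_d$ setting can be regarded as a code for the setting of Theorem \ref{theoremCapa} in which the decoder chooses to ignore $S_d^n$; hence the converse bound of Theorem \ref{theoremCapa}, evaluated with $S_d$ constant, upper-bounds the rate, which gives exactly the formula above.

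There is no substantive obstacle in this argument, since the corollary is a pure specialization. The only point requiring a modest check is that the two-stage coding condition specializes consistently, i.e.~that after setting $S_d$ to a constant the restricted distribution family still contains optimizers of the unconstrained problem or, failing that, that Proposition \ref{prop:activewith0} continues to apply so that the condition, when active, is met with equality. Both follow immediately from the fact that the proof of Proposition \ref{prop:activewith0} only uses convexity of the region $\mathcal{R}_{\text{mod}}$ in $(R,\tilde R)$, a property that is preserved under the substitution $S_d=\text{const}$.
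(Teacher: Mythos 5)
Your proposal is correct and takes essentially the same route as the paper, which states the corollary without a separate proof on the grounds that the no-$S_d$ setting is literally the model of Theorem~\ref{theoremCapa} with $|\mathcal{S}_{d}|=1$, so both the achievability scheme and the converse apply verbatim with $S_{d}$ deterministic. Your elaboration of how the formula, the two-stage coding condition, and Proposition~\ref{prop:activewith0} specialize is consistent with the paper's intent and introduces no gaps.
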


\begin{corollary}
  For the case where the state information at the decoder is absent
  and the channel state is given by nature, i.e., the action alphabet
  size is one, the capacity of the channel is obtained as
\begin{equation}\label{eq:C2}
    C_{\text{stegotext}} = \max [I(X;Y)-I(X;S_{e})],
\end{equation}
where the joint distribution of $(S_{e},X,Y)$ is of the form
\begin{align}
&P_{S_{e}}(s_{e})P_{X|S_{e}}(x|s_{e})P_{Y|X,S_{e}}(y|x,s_{e}) \nonumber 
\end{align}
and the maximization is over all $P_{X|S_{e}}$.  Note that this
recovers a special case of the results on information embedding with
reversible stegotext \cite{Sumszyk2009} when there is no distortion
constraint between $X^{n}$ and $S_{e}^{n}$.
\end{corollary}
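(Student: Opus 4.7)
The plan is to obtain this corollary by a direct specialization of Theorem~\ref{theoremCapa}, with two simultaneous reductions: the action alphabet has size one (so $A$ is degenerate and can be dropped), and the decoder-side state is absent (so $S_d$ is a constant). First I would substitute these into the objective of \eqref{eq:C}. Since $A$ is a constant, $I(A,X;Y,S_d) = I(X;Y,S_d)$ and $I(X;S_e|A) = I(X;S_e)$; since $S_d$ is a constant, $I(X;Y,S_d) = I(X;Y)$. Hence the objective collapses to $I(X;Y) - I(X;S_e)$. The factorization of the joint law in Theorem~\ref{theoremCapa} loses the $A$-marginal and the $(S_e,S_d)$-channel reduces to the given-by-nature law $P_{S_e}$, which together with $P_{X|A,S_e} \to P_{X|S_e}$ and $P_{Y|X,S_e,S_d}\to P_{Y|X,S_e}$ yields exactly the factorization claimed in \eqref{eq:C2}.

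Next I would check what happens to the two-stage coding condition $I(X;Y,S_d|A) - I(X;S_e|A) \geq 0$. Under the same substitutions this becomes $I(X;Y) - I(X;S_e) \geq 0$, i.e., the objective itself must be non-negative. I would argue this constraint is automatically compatible with the maximization and therefore does not appear explicitly in \eqref{eq:C2}: by choosing $P_{X|S_e} = P_X$ (i.e., $X$ independent of $S_e$) one gets $I(X;S_e) = 0$ and thus objective $I(X;Y) \geq 0$, so the feasible set is non-empty and the maximum value is $\geq 0$. Since the constraint is simply ``objective $\geq 0$'', the supremum of the constrained program equals the supremum of the unconstrained one, allowing us to drop the condition from the statement.

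Finally I would verify that the resulting expression matches the Sumszyk--Steinberg capacity of information embedding with a reversible stegotext \cite{Sumszyk2009} in the special case of no distortion constraint between $X^n$ and $S_e^n$; this is immediate from the form in \eqref{eq:C2}. No real obstacle is expected here — the proof is entirely a symbolic specialization plus a short feasibility argument for removing the two-stage coding inequality. The only subtlety worth stating cleanly is the last step (that the two-stage coding condition is inactive here because it coincides with positivity of the objective, which is guaranteed at an $S_e$-independent $X$), and I would present it as a brief lemma-style remark inside the proof rather than invoke the more general Proposition~\ref{prop:activewith0}.
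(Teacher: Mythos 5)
Your proposal is correct and matches the paper's (implicit) proof: the corollary is obtained purely by specializing Theorem~\ref{theoremCapa} with $|\mathcal{A}|=1$ and $S_d$ constant, exactly as you describe. Your extra care in justifying why the two-stage coding condition can be dropped — it degenerates to ``objective $\geq 0$,'' which is automatically met at the optimum since choosing $X$ independent of $S_e$ already gives a non-negative value — is a point the paper leaves unstated, and your argument for it is sound.
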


Next we are looking at a related problem which later helps us interpret the two-stage coding condition. We consider a new and slightly different communication problem where the decoder is interested in decoding instead the message $M$ and the state $S_{e}^{n}$. Due to a deterministic encoding function, the channel input signal can be retrieved based on the decoded message and the encoder's state information. This communication problem has a more demanding reconstruction constraint than our main problem considered in Fig. \ref{fig:SystemModel4} since it essentially requires that the decoder can decode the message, the encoder's state, and the channel input signal, all reliably.

\begin{proposition}\label{eq:C_Se}
  Consider a new communication problem which is slightly different
  than the one considered in Fig. \ref{fig:SystemModel4} in that the
  decoder is interested in decoding the message $M$ and the state
  $S_{e}^{n}$ reliably. The capacity of such a channel is given by
\begin{align}\label{eq:cap_Se}
     C_{S_{e}} &= \max [I(A,S_{e},X;Y,S_{d})-H(S_{e}|A)],
\end{align}
where the joint distribution of $(A,S_{e},S_{d},X,Y)$ is of the form
\begin{align}
&P_{A}(a)P_{S_{e},S_{d}|A}(s_{e},s_{d}|a)P_{X|A,S_{e}}(x|a,s_{e})P_{Y|X,S_{e},S_{d}}(y|x,s_{e},s_{d}) \nonumber
\end{align}
and the maximization is over all $P_{A}$ and $P_{X|A,S_{e}}$
such that
\begin{align}
    0 \leq I(S_{e},X;Y,S_{d}|A) - H(S_{e}|A).
\end{align}
\end{proposition}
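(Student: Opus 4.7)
The strategy is to adapt the two-stage scheme of Theorem~\ref{theoremCapa}, treating $S_e^n$ as an additional quantity that the decoder must recover together with the message. The useful identity
\[
I(A,S_e,X;Y,S_d) - H(S_e|A) = [I(A,X;Y,S_d) - I(X;S_e|A)] - H(S_e|A,X,Y,S_d)
\]
shows that, relative to the capacity of Theorem~\ref{theoremCapa}, one pays exactly $H(S_e|A,X,Y,S_d)$ bits per channel use to pin down $S_e^n$, which is precisely the Slepian--Wolf rate for $S_e^n$ given decoder side information $(A^n,X^n,Y^n,S_d^n)$. This suggests piggybacking a Slepian--Wolf-type hash of $S_e^n$ onto the Gel'fand--Pinsker bin structure already present in Theorem~\ref{theoremCapa}.

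\textbf{Achievability.} I would keep the action codebook $\{a^n(m_1)\}$ of size $2^{n(I(A;Y,S_d)-\epsilon)}$ and the $X^n$-codebook of size $2^{n(I(X;Y,S_d|A)-\epsilon)}$ (i.i.d.\ $\sim P_{X|A}$) from Theorem~\ref{theoremCapa}, but re-index each $x^n$ as $x^n(m_1,m_2,h,k)$, where the original Gel'fand--Pinsker bin is refined into $2^{n(H(S_e|A,X,Y,S_d)+\epsilon)}$ sub-bins indexed by $h$, each containing roughly $2^{nI(X;S_e|A)}$ covering codewords indexed by $k$. Independently, I would assign every sequence in $\mathcal{S}_e^n$ a uniform random hash $h(\cdot)$ of the same rate. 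Given $(m_1,m_2)$ and the realized $s_e^n$, the encoder sends $a^n(m_1)$ and then the $x^n$ in sub-bin $(m_2,h(s_e^n))$ that is jointly typical with $(a^n,s_e^n)$; standard Gel'fand--Pinsker covering ensures such an $x^n$ exists with high probability. The decoder first recovers $(a^n,x^n)$ by joint typicality with $(y^n,s_d^n)$ as in Theorem~\ref{theoremCapa}, obtaining $(m_1,m_2,h)$, and then declares $s_e^n$ to be the unique sequence jointly typical with $(a^n,x^n,y^n,s_d^n)$ whose hash equals $h$. Because the conditional type class has size $\approx 2^{nH(S_e|A,X,Y,S_d)}$, a random hash of the same rate eliminates spurious matches with high probability. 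Accumulating rates yields $R=I(A,S_e,X;Y,S_d)-H(S_e|A)$, and the requirement $R_2\geq 0$ is exactly the two-stage coding condition $I(S_e,X;Y,S_d|A)\geq H(S_e|A)$.

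\textbf{Converse.} I would apply Fano's inequality to the jointly decoded pair $(M,S_e^n)$ to get $H(M,S_e^n|Y^n,S_d^n)\leq n\epsilon_n$. Since $A^n$ is a deterministic function of $M$ and the sub-channel $P_{S_e,S_d|A}$ is memoryless, $H(S_e^n|M)=H(S_e^n|A^n)$, so
\[
nR+H(S_e^n|A^n)=H(M,S_e^n)\leq I(M,S_e^n;Y^n,S_d^n)+n\epsilon_n \leq I(A^n,S_e^n,X^n;Y^n,S_d^n)+n\epsilon_n.
\]
Standard single-letterization with a time-sharing variable $Q\sim\mathrm{Unif}\{1,\dots,n\}$ absorbed into $A$, combined with the memoryless structure of $P_{S_e,S_d|A}$ and $P_{Y|X,S_e,S_d}$, upper-bounds $\tfrac{1}{n}$ of the right-hand side by $I(A,S_e,X;Y,S_d)$ under a valid single-letter distribution, while $\tfrac{1}{n}H(S_e^n|A^n)$ becomes $H(S_e|A)$. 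For the two-stage coding condition, conditioning on $A^n$ and applying Fano once more to $S_e^n$ gives $H(S_e^n|A^n)\leq I(S_e^n,X^n;Y^n,S_d^n|A^n)+n\epsilon_n$, single-letterizing to $nI(S_e,X;Y,S_d|A)+n\epsilon_n$. The technically subtle point will be the reliability analysis of the hash-based state decoder: since $s_e^n$ is produced by nature rather than drawn from a codebook, one must show that the independent random hash and the conditional-typicality mechanism together beat the exponential size of the conditional type class, jointly with the error events for Gel'fand--Pinsker covering of $x^n$ against $s_e^n$ and for decoding $(a^n,x^n)$ from $(y^n,s_d^n)$.
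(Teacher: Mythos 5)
Your proposal is correct, and the converse is essentially the paper's argument (the paper derives the constraint $0\leq I(S_{e},X;Y,S_{d}|A)-H(S_{e}|A)$ by starting from $0\leq H(M|A^{n})$, whereas you apply Fano to $S_{e}^{n}$ directly after conditioning on $A^{n}$ — a slightly more direct route to the same single-letter bound). The achievability, however, is genuinely different. The paper's scheme is a literal substitution of $(S_{e},X)$ for $X$ in Theorem~\ref{theoremCapa}: for each $(m_{1},m_{2})$ it generates an explicit state codebook $\{\check{s}_{e}^{n}\}$ of rate $H(S_{e}|A)$ that must contain the realized $s_{e}^{n}$ \emph{exactly} (lossless covering), draws $x^{n}$ i.i.d.\ $\sim P_{X|S_{e},A}$ conditioned on each $(a^{n},\check{s}_{e}^{n})$ pair, and has the decoder jointly decode the pair $(\check{s}_{e}^{n},x^{n})$ under the packing constraint $\frac{1}{n}\log|\mathcal{M}_{2}^{(n)}||\mathcal{J}^{(n)}|<I(S_{e},X;Y,S_{d}|A)$. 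Your scheme instead keeps Theorem~\ref{theoremCapa}'s codebook ($x^{n}\sim P_{X|A}$ with Gel'fand--Pinsker bins of size $2^{nI(X;S_{e}|A)}$) untouched and piggybacks an independent Slepian--Wolf hash of $s_{e}^{n}$ at rate $H(S_{e}|A,X,Y,S_{d})$ onto the bin index, with the decoder recovering $x^{n}$ first and then resolving $s_{e}^{n}$ from its hash using $(a^{n},x^{n},y^{n},s_{d}^{n})$ as side information. Your identity $I(A,S_{e},X;Y,S_{d})-H(S_{e}|A)=[I(A,X;Y,S_{d})-I(X;S_{e}|A)]-H(S_{e}|A,X,Y,S_{d})$ is correct and the rate accounting checks out, including the fact that non-negativity of the residual $m_{2}$-rate reproduces the two-stage condition. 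What the paper's route buys is a nearly mechanical reduction to Theorem~\ref{theoremCapa}; what yours buys is an explicit interpretation of the capacity loss relative to Theorem~\ref{theoremCapa} as exactly the Slepian--Wolf rate of $S_{e}^{n}$ given everything the decoder ends up knowing, and a clean decoupling of the message-carrying code from the state description. One small caution: ``absorbing $Q$ into $A$'' in the single-letterization is not quite legitimate, since $A$ drives the physical channel $P_{S_{e},S_{d}|A}$ and cannot be enlarged; the correct step (which the paper carries out via Lemma~\ref{lemma:markovchain2}) is to drop the conditioning on $Q$ using the Markov chains $S_{e}-A-Q$ and $(Y,S_{d})-(X,A,S_{e})-Q$. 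This is a standard fix, not a gap in the argument.
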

\begin{proof}
  Since decoding $M$ and $S_{e}^{n}$ implies that $X^{n}$ is also
  decoded from the deterministic encoding function, one can substitute
  $(S_{e},X)$ in place of $X$ in Theorem \ref{theoremCapa} and obtain
  the capacity. More specifically, the achievable scheme in this case
  is different from the previous case of decoding $M$ and $X^{n}$ in
  that the SI codebook is introduced and it has to ``cover'' all
  possible generated $S_{e}^{n}$ \emph{losslessly}. That is, the size
  of the SI codebook should be sufficiently large so that the encoder
  is able to find an exact $S_{e}^{n}$ from the codebook. Similarly to
  Theorem \ref{theoremCapa}, in the capacity expression, we also have
  a similar restricting condition $0 \leq
  I(S_{e},X;Y,S_{d}|A)-H(S_{e}|A)$ on the set of input
  distributions. Besides the rate constraint, this condition can be
  considered as a necessary and sufficient condition for the process
  of losslessly compressing $S_{e}^{n}$ through $X^{n}$ and
  then transmit them reliably over the channel in our two-stage
  communication problem. The detailed achievability proof and the
  converse proof are given in Appendix~\ref{sec:proof_C_se}.
\end{proof}

\begin{remark}\label{remark:C_geq_Cse}
  We know that the channel input sequence can be retrieved based on
  the decoded message, the encoder's state information, and a known
  deterministic encoding function. Therefore, it is natural to compare
  the capacity $C$ in Theorem \ref{theoremCapa} with $C_{S_{e}}$ in
  Proposition \ref{eq:C_Se}. For a given channel $P_{S_{e},S_{d}|A},
  P_{Y|X,S_{e},S_{d}}$, we have that $C \geq C_{S_{e}}$.
\end{remark}
\begin{proof}
  One can show that $I(S_{e},X;Y,S_{d}|A)-H(S_{e}|A) \leq
  I(X;Y,S_{d}|A) - I(X;S_{e}|A)$ for all joint distributions factorized
  in the form of
  $P_{A}(a)P_{S_{e},S_{d}|A}(s_{e},s_{d}|a)P_{X|A,S_{e}}(x|a,s_{e})P_{Y|X,S_{e},S_{d}}(y|x,s_{e},s_{d})$. This
  implies that $C_{S_{e}}$ is evaluated over a smaller set than that
  of $C$. In addition, one can show in a similar fashion that
  $I(A,S_{e},X;Y,S_{d})-H(S_{e}|A) \leq I(A,X;Y,S_{d}) -
  I(X;S_{e}|A)$, and thus conclude that $C \geq C_{S_{e}}$.
\end{proof}

We note that this new communication problem is closely related to the
problems of state amplification \cite{Kim2008}, and reversible
information embedding \cite{Willems}. The main difference is that, in
our setting, channel states are generated based on the action
sequence. In \cite{Kim2008} the decoder is interested in decoding the
message reliably and in decoding the encoder's state information
within a list, while in \cite{Willems}, the decoder is interested in
decoding both the message and the encoder's state information reliably. The
result in Remark \ref{remark:C_geq_Cse} is also analogous to that in
information embedding with reversible stegotext \cite{Sumszyk2009} in
which the authors show that if the objective is to decode only $M$ and
$X^{n}$, then decoding $M$ and $S_{e}^{n}$ first and re-encoding
$X^{n}$ using a deterministic encoding function is suboptimal.

\subsection{Examples}
In the following, we show two examples to illustrate the role of the
two-stage coding condition in restricting a set of input distributions
in the capacity expression. Example 1 shows that the two-stage coding
condition can be active in computing the capacity, while Example 2
shows that there also exists a case where such a condition is not
active at the optimal design.

\emph{Example 1: Memory Cell With a Rewrite Option}\\
For simplicity, let us consider a special case where $S_{d}^{n}$ is
absent as in Corollary \ref{coroll:example} and the channel is in the
more general form $P_{Y|X,S_{e},A}$ as in Remark
\ref{remark:generalCh}. We consider a binary example where
$A,X,S_{e},Y \in \{0,1\}$, and the scenario of writing on a memory
cell with a rewrite option. The first writing is done through a binary
symmetric channel with crossover probability $\delta$ (BSC($\delta$)),
input $A$, and output $S_{e}$. Then, assuming that there is a perfect
feedback of the output $S_{e}$ to the second encoder, the second
encoder has an option to rewrite on the memory or not to rewrite
(indicated by a value of $X$). If the rewrite value $X=1$ which
corresponds to ``rewrite,'' then $Y$ is given as the output of
BSC($\delta$) with input $A$ (rewrite using the old input). If $X=0$
which corresponds to ``no rewrite,'' we simply get $Y=S_{e}$. In this
case the decoder is interested in decoding both the embedded message
and the rewrite signal. See Fig.~\ref{fig:rewrite} for an illustration
of this rewrite channel.

\begin{figure}[]
    \centering
    \psfrag{m}[][][0.7]{$M$}
    \psfrag{act}[][][0.7]{Action}
    \psfrag{enc1}[][][0.8]{Encoder 1}
    \psfrag{enc2}[][][0.8]{Encoder 2}
    \psfrag{Am}[][][0.7]{$A^{n}(M)$}
    \psfrag{Ps1s2}[][][0.75]{$P_{S_{e}|A}$}
    \psfrag{s1}[][][0.7]{$S_{e}^{n}$}
    \psfrag{ch}[][][0.8]{Channel}
    \psfrag{Pyxs}[][][0.75]{$P_{Y|X,S_{e},A}$}
    \psfrag{x}[][][0.7]{$X^{n}$}
    \psfrag{y}[][][0.7]{$Y^{n}$}
    \psfrag{a}[][][0.5]{$A$}
    \psfrag{se}[][][0.5]{$S_{e}$}
    \psfrag{y1}[][][0.5]{$Y$}
    \psfrag{p}[][][0.5]{$\delta$}
    \psfrag{1-p}[][][0.5]{$1-\delta$}
    \psfrag{x=0}[][][0.5]{$X=0$}
    \psfrag{x=1}[][][0.5]{$X=1$}
    \psfrag{0}[][][0.5]{$0$}
    \psfrag{1}[][][0.5]{$1$}
    \psfrag{dec}[][][0.8]{Decoder}
    \psfrag{mhat,}[][][0.7]{$\hat{M},$}
    \psfrag{xhat}[][][0.7]{$\hat{X}^{n}$}
    \includegraphics[width=11.5cm]{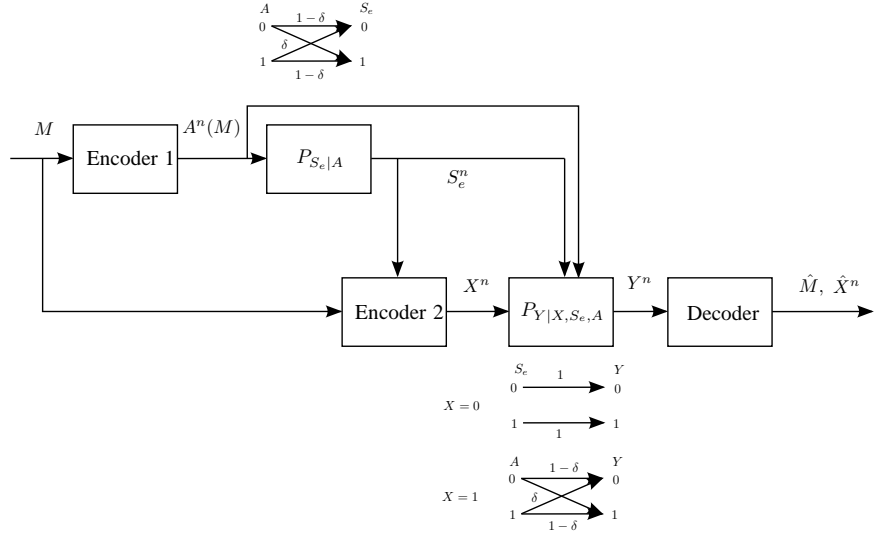}
     \caption{Two-stage writing on a memory cell with a rewrite option.}\label{fig:rewrite}
    \centering
\end{figure}

From Theorem \ref{theoremCapa} and Remark \ref{remark:generalCh}, we know that the capacity of this channel is given by
\begin{align}
     C = \max[ I(A,X;Y) - I(X;S_{e}|A)],
\end{align}
where the joint distribution of $(A,S_{e},X,Y)$ is of the form
\begin{align}
&P_{A}(a)P_{S_{e}|A}(s_{e}|a)P_{X|A,S_{e}}(x|a,s_{e})P_{Y|X,S_{e},A}(y|x,s_{e},a) \nonumber
\end{align}
and the maximization is over all $P_{A}$ and $P_{X|A,S_{e}}$
such that
\begin{align}
    0 \leq I(X;Y|A) - I(X;S_{e}|A).
\end{align}

Letting $A\sim \mbox{Bernoulli}(p_{a})$, and
\begin{align}
    &p(x=0|s_{e}=0,a=0)=p\\
    &p(x=0|s_{e}=0,a=1)=q\\
    &p(x=0|s_{e}=1,a=0)=r\\
    &p(x=0|s_{e}=1,a=1)=s.
\end{align}
By straightforward manipulation, we get
\begin{align*}
&H(Y) = h\big((1-\delta)(1-p_{a})(1-\delta+\delta p) + \delta p_{a}(q+\delta- \delta q) + \delta(1-\delta)(1-r+rp_{a}-sp_{a})\big), \\
&H(Y|A) = (1-p_{a})h((1-\delta)(p+(1-p)(1-\delta)+(1-r)\delta)) + p_{a}h(\delta (q+(1-q)\delta +(1-s)(1-\delta))),\\
&-H(Y|A,X)-I(X;S_{e}|A) \\
& \qquad = [(1-p)(1-\delta)(1-p_{a})+ (1-r)\delta (1-p_{a})]\cdot [h(\frac{(1-p)(1-\delta)(1-p_{a})}{(1-p)(1-\delta)(1-p_{a})+ (1-r)\delta (1-p_{a})})-h(\delta)]\\
          &\qquad \qquad  + [(1-q)\delta p_{a}+ (1-s)(1-\delta)p_{a}]\cdot[h(\frac{(1-q)\delta p_{a}}{(1-q)\delta p_{a}+ (1-s)(1-\delta)p_{a}})-h(\delta)]  - h(\delta),
\end{align*}
then
\begin{align*}
     C = &\max_{p_{a},p,q,r,s \in [0,1]}\Big[H(Y)-H(Y|A,X)-I(X;S_{e}|A) \Big]
\end{align*}
subject to
\begin{align*}
     0 & \leq H(Y|A)-H(Y|A,X)-I(X;S_{e}|A).
\end{align*}

By performing numerical optimization with $\delta=0.1$, we obtain that
the capacity of the channel equals to $0.5310$ bits per channel
use. The optimal (capacity achieving) input distributions in this case
are those in which $X-A-S_{e}$ forms a Markov chain, i.e., $p=r,q=s$,
and in the end $P_{a}$ is the only remaining optimization variable. We
note that if we instead neglect the restriction on the maximization
domain and solve the unconstrained optimization problem, we would
obtain the maximum value of $0.6690$ which is strictly larger than the
actual capacity. Therefore, this example shows that there exists a
case where the two-stage coding condition is active. In fact, the
corresponding two-stage coding condition in this case is satisfied
with equality as expected from Proposition \ref{prop:activewith0}.

\emph{Example 2: Inactive Two-stage Coding Condition}\\
In other cases the two-stage coding condition in the capacity
expression might not be active. One trivial example is when
$S_{e}-A-S_{d}$ forms a Markov chain for the action-dependent state
channel $P_{S_{e},S_{d}|A}$, and $Y-(X,S_{d})-S_{e}$ forms a Markov
chain for the state-dependent channel $P_{Y|X,S_{e},S_{d}}$. In this
case, it can be shown that for any joint distribution
$\big(P_{A}^{(1)}(a),P_{X|A,S_{e}}^{(1)}(x|a,s_{e})\big)$ such that
$I(X;Y,S_{d}|A)-I(X;S_{e}|A) < 0$, there always exists another joint
distribution $\big(P_{A}^{(2)}(a),P_{X|A,S_{e}}^{(2)}(x|a,s_{e})\big)$
which satisfies $I(X;Y,S_{d}|A)-I(X;S_{e}|A)\geq 0$ and achieves a
higher rate. One possible choice is to let
$P_{A}^{(2)}(a)=P_{A}^{(1)}(a)$ and $P_{X|A,S_{e}}^{(2)}(x|a,s_{e}) =
\sum_{s_{e}}P_{S_{e}|A}(s_{e}|a)P_{X|A,S_{e}}^{(1)}(x|a,s_{e})$. Consequently,
the maximizing input distribution in this case will result in
$I(X;Y,S_{d}|A)-I(X;S_{e}|A)\geq 0$ and the capacity of such a channel
is given by $C= \max_{P_{A},P_{X|A,S_{e}}}[I(A,X;Y,S_{d}) -
I(X;S_{e}|A)]$.

\section{Discussion on the Two-stage Coding Condition and Formula Duality}\label{sec:discussion}
In this section we discuss in more detail the presence and impact of
the \emph{two-stage coding condition} in Theorem~\ref{theoremCapa},
and we also consider the potential dual relations between the source
coding and channel coding problems in Section \ref{sec:sourcecoding1}
and \ref{sec:channelcoding2}.

\subsection{Two-stage Coding Condition}
\subsubsection{Operational Coding View}
As can be seen in our achievable scheme, the condition
$I(X;Y,S_{d}|A)
-I(X;S_{e}|A)>0$ represents a tradeoff in the size of the
codebook $\{x^{n}\}$ conditioned on the action sequence. Our coding
scheme involves random binning, and in order to encode/decode
successfully we need to ensure that there is a proper positive number
of bins to satisfy both encoding and decoding requirements, based on
joint typicality. More specifically, the decoder is interested in
decoding both the message (partly carried in the action codeword and
partly as a bin index of $x^n$) and the codeword $x^n$ itself. From
the analysis of the error probability (see
Appendix~\ref{sec:proofchannelcoding}), this additional restriction on
the number of bins ($I(X;Y,S_{d}|A)-I(X;S_{e}|A) > 0$) arises in part
from the error event where only the message that is conveyed in the
action codeword is decoded correctly, but not the codeword
$x^{n}$. Since the action sequence carries information about the same
message that is carried by the codeword $x^{n}$, this additional
constraint is needed to ensure a vanishing probability of such an
error event (see also \eqref{eq:M2} that the two-stage coding
condition is the underlying constraint on the number of bins of
codewords $x^{n}$). Conversely, we also see that for any achievable
rate, it is never possible to have a joint distribution that leads to
$I(X;Y,S_{d}|A)-I(X;S_{e}|A) < 0$.

The condition can also be interpreted based on the structure of the
encoder, which involves two-stage coding (the action sequence is
selected first, then the channel input is selected based on the
action-dependent state). That is, the action sequence can be decoded
in the first stage, which in turn results in an extra constraint for
decoding the channel input in the second stage. Hence the condition
describes a causality constraint imposed by the communication
problem. This observation might be interesting for some other problems
as well.

\subsubsection{Source Coding View}
We notice that the condition $I(X;Y,S_{d}|A) - I(X;S_{e}|A) > 0$ can
be equivalently written as $H(X|Y,S_{d},A)<
H(X|S_{e},A)$. Intuitively, this tells us that for reliable
transmission of the channel input signal over the channel given that
the action is communicated, the uncertainty about $X$ that remains
after observing $Y$ and $S_{d}$ at the decoder should be less than the
uncertainty of $X$ at the transmitter. Hence the two-stage coding
condition can, as a complement to the rate constraint, be considered
as a necessary and sufficient condition for reliable transmission of
the description $X^{n}$ of the state $S_{e}^{n}$ through the channel
in our two-stage communication problem.

Alternatively, we note that in our case we do not need to reconstruct
$S^{n}_{e}$ perfectly at the decoder, i.e., information about
$S^{n}_{e}$ conveyed through $X^{n}$ over the channel is needed only
in part. We can write the condition as $I(X;S_{e}|A) < I(X;Y,S_{d}|A)$
and interpret it as a condition for \emph{lossy} transmission of
$S^{n}_{e}$ through $X^{n}$ over the channel given that $A^{n}$ is
communicated. It is then natural to compare this to the case when we
are interested in decoding $M$ and $S_{e}^{n}$, e.g., as in
Proposition~\ref{eq:C_Se}. In that case, we want to reconstruct
$S^{n}_{e}$ perfectly at the decoder; therefore, given $A^{n}$, the
necessary and sufficient condition for \emph{lossless} transmission of
$S^{n}_{e}$ through $X^{n}$ over the channel $P_{Y|X,S_{e},S_{d}}$ is
given by $H(S_{e}|A) < I(X,S_{e};Y,S_{d}|A)$.

\subsubsection{Channel Coding View}
We may also consider the condition $I(X;Y,S_{d}|A)-I(X;S_{e}|A) \geq
0$ from the point of view of connecting it to a class of cooperative
``multiple-access channels (MACs)'' with common message. Consider
therefore a slightly modified setting shown in
Fig.~\ref{fig:modified}, where there is another independent message
$W$ to be encoded at the channel encoder, and the message $M$ is a
common message for both encoders. This setting will reduce to our
original problem when the rate of message $W$ is zero. From this point
of view, the condition $I(X;Y,S_{d}|A)-I(X;S_{e}|A) \geq 0$ is in fact
a degenerate rate constraint derived from the underlying rate
constraint of message $W$ in the ``MAC'' setting.
\begin{figure}[]
    \centering
    \psfrag{m}[][][0.7]{$M$}
    \psfrag{w}[][][0.7]{$W$}
    \psfrag{act}[][][0.75]{Action}
    \psfrag{enc}[][][0.75]{encoder}
    \psfrag{Am}[][][0.7]{$A^{n}(M)$}
    \psfrag{Ps1s2}[][][0.75]{$P_{S_{e},S_{d}|A}$}
    \psfrag{s1s2}[][][0.7]{$S^{n}=(S_{e}^{n},S_{d}^{n})$}
    \psfrag{s1}[][][0.7]{$S_{e}^{n}$}
    \psfrag{s2}[][][0.7]{$S_{d}^{n}$}
    \psfrag{ch}[][][0.75]{Channel}
    \psfrag{Pyxs}[][][0.75]{$P_{Y|X,S}$}
    \psfrag{x}[][][0.7]{$X^{n}$}
    \psfrag{y}[][][0.7]{$Y^{n}$}
    \psfrag{k1}[][][0.7]{$l_{e}$}
    \psfrag{k2}[][][0.7]{$l_{d}$}
    \psfrag{dec}[][][0.75]{Decoder}
    \psfrag{mhat, xhat}[][][0.7]{$\hat{M}, \hat{X}^{n}$}
    \includegraphics[width=11cm]{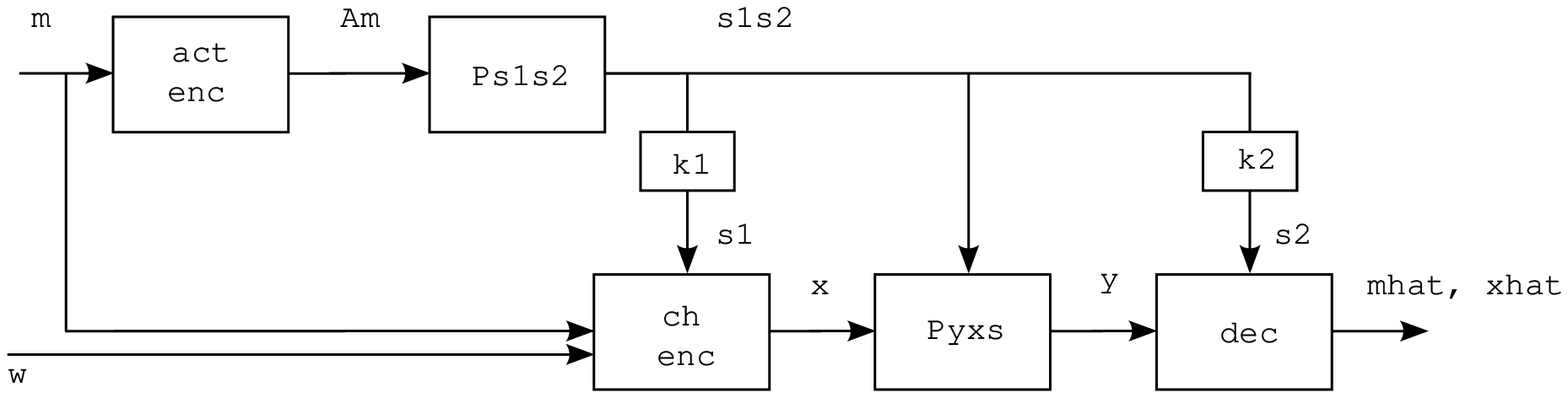}
    \caption[Caption for LOF]{Modified setting: a class of cooperative ``multiple-access channel (MAC)" with common message.\footnotemark[2]}\label{fig:modified}
    \centering
\end{figure}

\footnotetext[2]{Based on this scenario, one can also recover special
  cases of results available for the multiple-access channel with
  common message. For example, if the encoder state information
  $S_{e}$ is assumed to be a deterministic function of $A$, then this
  modified setting will reduce to a class of MAC with common message
  and cribbing encoder, and eventually to a class of MAC with common
  message. To decode both messages and the channel input $X^n$ at the
  decoder is then equivalent to just decode messages $M$ and $W$.}

\subsection{Duality}
In this work we notice the ``dual'' relations between input-output of
elements in the source and channel coding systems as depicted in
Fig.~\ref{fig:Duality}. Similar dual relations also appear in other
related problems, as listed below.
 \begin{align*}
 \textnormal{Wyner-Ziv's source coding (SC)  (WZ,\cite{WynerZiv})} &\leftrightarrow   \textnormal{Gel'fand-Pinsker's  channel coding (CC)  (GP,\cite{GelfandPinsker})}  \\
  \textnormal{Permuter-Weissman's SC with action (PW,\cite{Permuter2011}) }  &\leftrightarrow \textnormal{Weissman's CC with action (W,\cite{Weissman2010})}  \\
   \textnormal{Steinberg's SC with CR (S,\cite{Steinberg2009})} &\leftrightarrow  \textnormal{Sumszyk-Steinberg's CC with RI (SS,\cite{Sumszyk2009})} \\
   \textnormal{Section \ref{sec:sourcecoding1}} &\overset{(\sharp)}{\leftrightarrow} \textnormal{Section \ref{sec:channelcoding2}}
\end{align*}

\begin{figure}[h]
    \centering
    \psfrag{m}[][][0.7]{$M$}
    \psfrag{act}[][][0.8]{Action}
    \psfrag{enc}[][][0.8]{Encoder}
    \psfrag{Am}[][][0.7]{$A^{n}(M)$}
    \psfrag{Aw}[][][0.7]{$A^{n}(W)$}
     \psfrag{t,w}[][][0.7]{$(T,W)$}
      \psfrag{w}[][][0.7]{$W$}
    \psfrag{se}[][][0.7]{$S_{e}^{n}$}
    \psfrag{sd}[][][0.7]{$S_{d}^{n}$}
    \psfrag{channel}[][][0.9]{Channel Coding}
    \psfrag{source}[][][0.9]{Source Coding}
    \psfrag{x}[][][0.7]{$X^{n}$}
    \psfrag{y}[][][0.7]{$Y^{n}$}
    \psfrag{dec}[][][0.8]{Decoder}
    \psfrag{mhat}[][][0.7]{$\hat{M}$}
    \psfrag{xhat}[][][0.7]{$\hat{X}^{n}$}
    \psfrag{xhathat}[][][0.7]{$\psi\big(X^{n},S_e^{n},A^{n}(W)\big)=\psi^{*}(X^{n},S_e^{n})=\hat{X}^{n}$}
    \psfrag{inputhat}[][][0.7]{$g_{x}(Y^{n},S_d^{n})=\hat{X}^{n}$}
    \includegraphics[width=8.5cm]{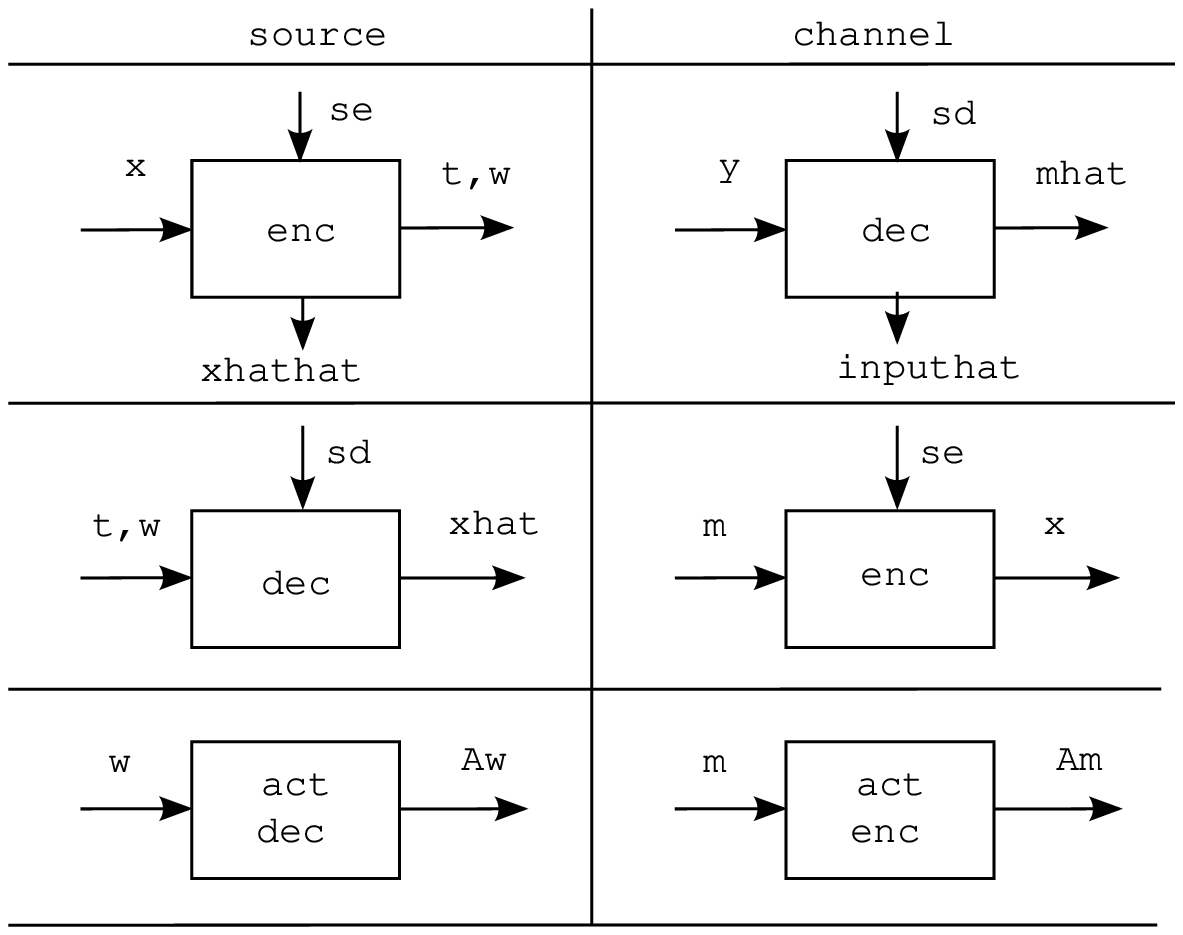}
    \caption{Duality between the source coding with action-dependent side information and common reconstruction  (Fig. \ref{fig:SystemModel1})  and channel coding with action-dependent states and reversible input (Fig. \ref{fig:SystemModel4}). } \label{fig:Duality}
\end{figure}

As stated before in the introduction part, we are interested in investigating  \emph{formula duality} of a set of problems \cite{Cover2002}. Table \ref{table:1} below summarizes the rate-distortion(-cost) function and the channel capacity expressions of the interested problems, neglecting the optimization variables (input probability distribution).\\
\begin{table}[h!]
\caption{Rate-distortion(-cost) function and channel capacity.} \label{table:1}
\renewcommand{\arraystretch}{1.3}
\begin{tabular}{|c|c|c|}
  \hline
  Problems & Rate-distortion-cost function & Channel capacity \\
  \hline
  WZ and GP& $R_{\text{WZ}}(D)=\min [I(U;X) - I(U,S_{d})]$ & $C_{\text{GP}}= \max[I(U;Y)-I(U;S_{e})]$\\
  PW and W& $R_{\text{PW}}(D,C)=\min [I(A;X) + I(U;X|A) - I(U,S_{d}|A)]$ & $C_{\text{W}}= \max[I(A;Y) + I(U;Y|A)-I(U;S_{e}|A)]$ \\
  S and SS & $R_{\text{S}}(D)= \min[I(\hat{X};X)-I(\hat{X};S_{d})]$ & $C_{\text{SS}}= \max[I(X;Y)-I(X;S_{e})]$\\
  Sec. II and III & $R(D,C) = \min[I(A;X)+I(\hat{X};X,S_{e}|A)-I(\hat{X};S_{d}|A)]$ & $C = \max_{p*}[I(A;Y,S_{d})+I(X;Y,S_{d}|A)-I(X;S_{e}|A)]$\\
  \hline
\end{tabular}
\end{table}

As in \cite{Cover2002} we can recognize the formula duality of the
rate-distortion(-cost) function and the channel capacity by the
following correspondence,
\begin{align*}
\textnormal{Rate-distortion-cost} &\leftrightarrow  \textnormal{Channel capacity}\\
  \textnormal{minimization} &\leftrightarrow \textnormal{maximization} \\
  X (\textnormal{source symbol}) &\leftrightarrow Y (\textnormal{received symbol}) \\
  \hat{X} (\textnormal{decoded symbol}) &\leftrightarrow X (\textnormal{transmitted symbol}) \\
  S_{e} (\textnormal{state at the encoder})  &\leftrightarrow S_{d} (\textnormal{state at the decoder})\\
  S_{d} (\textnormal{state at the decoder})  &\leftrightarrow S_{e} (\textnormal{state at the encoder})\\
  U (\textnormal{auxiliary})&\leftrightarrow U (\textnormal{auxiliary}) \\
  A (\textnormal{action}) &\leftrightarrow A (\textnormal{action}).
\end{align*}

We see that the first three cases in Table~\ref{table:1} are obvious
from the expressions of the rate-distortion(-cost) function and the
channel capacity, while the last duality (Secs.~II and III) does not
hold in general due to the fundamental differences in the source and
channel coding problems. We now give reasons based on the dual roles
of the encoder/decoder in the source coding problem and the
decoder/encoder in the channel coding problem.

The first reason that the last duality does not hold in general is the
presence of the two-sided side/state information. That is, at the
\emph{encoder} in the source coding setup, the processing is
sequential, i.e., the action-dependent side information is generated
first and then the side information $S_{e}^{n}$ is used in compressing
the source sequence. However, this sequential processing is not
required in the \emph{decoding} process of the channel coding problem
since the state information for both encoder and decoder are generated
in the beginning, and both $Y^{n}$ and $S_{d}^{n}$ are available at
the decoder noncausally. The effect of this fundamental difference can
be seen from the difference in the terms $I(A,X)$ and $I(A;Y,S_{d})$
in the rate-distortion-cost function and channel capacity expressions
in Table~\ref{table:1}.

The second reason is the additional reconstruction constraint imposed on
the two communication problems. First consider the channel coding
problem where we require to decode as well the channel input sequence
(reversible input constraint). In our problem the \emph{encoder} has a
causal structure; that is, $S_{e}^{n}$ is generated first, then
followed by $X^{n}$. When we require to decode $X^{n}$ which is the
signal generated in the second stage, the two-stage coding condition,
apart from the rate constraint, is necessary to ensure reliable
transmission of the channel input $X^{n}$. In fact, it plays a role in
restricting the set of capacity achieving input distributions marked
by $p*$ in Table~\ref{table:1}. Now we consider the source coding
counterpart where we require the encoder to estimate the decoder's
reconstruction (common reconstruction constraint). Although there
seems to be a similar two-stage structure in the \emph{decoder}
of this setup, the two-stage coding condition is not relevant
here. This is because the common reconstruction is performed in the
beginning at the encoder side and the identity of action sequence is
in fact known at both sides due to the noiseless link between the
encoder and the decoder.

\section{Conclusion} \label{sec:conclusion} In this paper we studied a
class of problems that extend Wyner-Ziv source coding and
Gel'fand-Pinsker channel coding with action-dependent side
information. The extension involves having two-sided action-dependent
partial SI, and also enforcing additional reconstruction
constraints. In the source coding problem, we solved the
rate-distortion-cost function for the memoryless source with two-sided
action-dependent partial SI and common reconstruction, while in the
channel coding problem, the capacity of the discrete memoryless
channel with two-sided action-dependent state and reversible input is
derived under the two-stage coding condition. In fact, this two-stage
coding condition arises from the additional reconstruction constraint
and the causal structure of the setup, i.e., the channel input signal
to be reconstructed is generated in the second stage
transmission. Besides the message rate constraint, it can be
considered as a necessary and sufficient condition for reliable
transmission of channel input signal over the channel given that the
action is communicated. An intuitive interpretation derived from its
expression is that uncertainty about the channel input remaining at
the receiver after observing the channel output and the decoder's
state information should be less than that at the transmitter.

We were also interested in investigating the formula duality between
rate-distortion-cost function and channel capacity of the source and
channel coding problems. Although our extended problems seem to retain
the dual structure seen in Wyner-Ziv and Gel'fand-Pinsker problems,
they are not dual in general. In fact, there is ``operational
mismatch'' caused by enforcing causality in parts of the system.  For
example, the two-sided SI in the source coding problem requires a
sequential encoding process, while in the channel coding problem the
channel output and state information are available noncausally to the
decoder. Moreover, when we require additional reconstruction of the
channel input in the channel coding problem, the two-stage coding
condition is needed due to the causal structure of the encoder where
the channel encoder has to wait for the state to be generated based on
the action sequence.

We find it interesting to note that the two-stage coding condition
which appears in the capacity expression can be active, as shown in
one example. This is, however, not surprising since the condition can
also be seen as a degenerate rate constraint of the underlying rate
constraint in a cooperative MAC setup (see Section
\ref{sec:discussion}, part A).
We notice that by imposing an additional reconstruction constraint on
that related problem, we are still able to derive a closed form
solution. This leads us to believe that it might be possible to
consider other (possibly open) network information theory problems
with additional reconstruction constraints, and be able to derive the
closed-form solutions. In addition, if we obtain a similar two-stage
coding condition in the solution, we might be able to find a class of
channels of which the capacity can be achieved with the input
distribution that results in an inactive two-stage coding
condition. This can provide some insights into the role of the
additional reconstruction constraint in some communication channels,
and should be considered as a topic for future work.
\appendices
\section{Proof of Lemma \ref{lemma:convexRDC}} \label{sec:prooflemma}
Since the domain size of minimization in \eqref{eq:rateDistor1} or \eqref{eq:alterR2} increases with $D$ and $C$, $R_{\text{ac,cr}}(D,C)$ is non-increasing in $D$ and $C$. For convexity, we consider two distinct points $(R_{i},D_{i},C_{i}),\ i=1,2$, which lie on the boundary of $R_{\text{ac,cr}}(D,C)$. Suppose $\big(P^{(i)}_{A|X},P^{(i)}_{\hat{X}|X,S_{e},A}\big)$, $i=1,2$, \ achieve these respective points, i.e.,
\begin{align*}
  R_{i} &= R_{\text{ac,cr}}(D_{i},C_{i}) = I^{(i)}(X;A)+I^{(i)}(\hat{X};X,S_{e}|A_{i},S_{d}), \ i=1,2,
\end{align*}
where $I^{(i)}(\cdot)$ denotes the mutual information associated with $P^{(i)}_{A|X}$ and $P^{(i)}_{\hat{X}|X,S_{e},A}$.

Let $Q \in \{1,2\}$ be a random variable independent of $X$ and conditionally independent of $(S_{e},S_{d})$ given $(X,A)$, with $P_{Q}(1)=1-P_{Q}(2)=\lambda$, $0 \leq \lambda \leq 1$. Then we have the joint distribution
\begin{align*}
  &P_{Q,X,A,S_{e},S_{d},\hat{X}}(q,x,a,s_{e},s_{d},\hat{x})= P_{Q}(q)P_{X}(x)P_{A|X,Q}(a|x,q)P_{S_{e},S_{d}|X,A}(s_{e},s_{d}|x,a)P_{\hat{X}|X,S_{e},A,Q}(\hat{x}|x,s_{e},a,q),
\end{align*}
where $P_{A|X,Q}(a|x,q) \triangleq P^{(q)}_{A|X}(a|x)$ and $P_{\hat{X}|X,S_{e},A,Q}(\hat{x}|x,s_{e},a,q) \triangleq P^{(q)}_{\hat{X}|X,S_{e},A}(\hat{x}|x,s_{e},a)$ for $q=1,2$.

Consider now the marginal distribution (averaged over $Q$)
\begin{align*}
  &P_{X,A,S_{e},S_{d},\hat{X}}(x,a,s_{e},s_{d},\hat{x})= \sum_{q=1,2}P_{Q}(q)P_{X}(x)P_{A|X,Q}(a|x,q)P_{S_{e},S_{d}|X,A}(s_{e},s_{d}|x,a)P_{\hat{X}|X,S_{e},A,Q}(\hat{x}|x,s_{e},a,q),
\end{align*}
which is associated with the sum of mutual information terms $I(X;A)+I(\hat{X};X,S_{e}|A,S_{d})$. It follows that
\begin{align*}
&I(X;A)+I(\hat{X};X,S_{e}|A,S_{d}) \\
&= I(X;A,\hat{X},S_{d})-I(X;S_{d}|A)+I(\hat{X};S_{e}|X,A,S_{d})\\
&= H(X)-H(X|A,\hat{X},S_{d})-H(S_{d}|A)+H(S_{d}|X,A)+H(S_{e}|X,A,S_{d})-H(S_{e}|X,A,S_{d},\hat{X})\\
&\overset{(*)}{=} H(X|Q)-H(X|A,\hat{X},S_{d})-H(S_{d}|A)+H(S_{d}|X,A,Q)+H(S_{e}|X,A,S_{d},Q)-H(S_{e}|X,A,S_{d},\hat{X})\\
& \leq H(X|Q)-H(X|A,\hat{X},S_{d},Q)-H(S_{d}|A,Q)+H(S_{d}|X,A,Q)+H(S_{e}|X,A,S_{d},Q)-H(S_{e}|X,A,S_{d},\hat{X},Q)\\
& = I(X;A,\hat{X},S_{d}|Q)-I(X;S_{d}|A,Q)+I(\hat{X};S_{e}|X,A,S_{d},Q)\\
&= I(X;A|Q)+I(\hat{X};X,S_{e}|A,S_{d},Q)\\
& = \lambda [I^{(1)}(X;A)+I^{(1)}(\hat{X};X,S_{e}|A,S_{d})] + (1-\lambda)[I^{(2)}(X;A)+I^{(2)}(\hat{X};X,S_{e}|A,S_{d})],
\end{align*}
where $(*)$ follows from $X\perp Q$ and the Markov chain $(S_{e},S_{d})-(X,A)-Q$.

Consider also the average distortion and cost (averaged over $Q$),
\begin{align*}
D &= E\big[d\big(X,\hat{X}\big)\big] = \lambda E^{(1)}\big[d\big(X,\hat{X}\big)\big] +(1-\lambda)E^{(2)}\big[d\big(X,\hat{X}\big)\big] =\lambda D_{1} + (1-\lambda)D_{2}\\
 \mbox{and} \quad C &= E[\Lambda(A)] =   \lambda E^{(1)}[\Lambda(A)] +(1-\lambda)E^{(2)}[\Lambda(A)] = \lambda C_{1} + (1-\lambda)C_{2}.
\end{align*}
Then, by the definition of the rate-distortion-cost function $R_{\text{ac,cr}}(D,C)$, it follows that
\begin{align*}
  &R_{\text{ac,cr}}\big(\lambda D_{1} + (1-\lambda)D_{2},\lambda C_{1} + (1-\lambda)C_{2}\big) = R_{\text{ac,cr}}(D,C)\\
   &\leq  I(X;A)+I(\hat{X};X,S_{e}|A,S_{d})\\
& \leq \lambda [I^{(1)}(X;A)+I^{(1)}(\hat{X};X,S_{e}|A,S_{d})] + (1-\lambda)[I^{(2)}(X;A)+I^{(2)}(\hat{X};X,S_{e}|A,S_{d})]\\
   &=  \lambda R_{\text{ac,cr}}(D_{1},C_{1}) + (1-\lambda)R_{\text{ac,cr}}(D_{2},C_{2}).
\end{align*}
Thus, we have shown that $R_{\text{ac,cr}}(D,C)$ is a non-increasing convex function of $D$ and $C$. \hfill$\blacksquare$

\section{Proof of Theorem \ref{theoremRateDist}}\label{sec:proofsourcecoding}
\subsection{Achievability Proof of Theorem \ref{theoremRateDist}}

The proof follows from a standard random coding argument where we use the definitions and properties of $\epsilon$-typicality as in \cite{ElGamalKim}, i.e., the set of $\epsilon$-typical sequence for $\epsilon >0$ with respect to $P_{X}(\cdot)$ is denoted by
\begin{align}\label{eq:Typical}
    T_{\epsilon}^{(n)}(X)&=\Big\{ x^{n}\in \mathcal{X}^{n}: \left|\frac{1}{n}N(a|x^{n})-P_{X}(a)\right| \leq \epsilon P_{X}(a), \ \mbox{for all} \ a \in \mathcal{X} \Big\},
\end{align}
where $N(a|x^{n})$ is the number of occurrences of $a$ in the sequence $x^{n}$.

\textit{Codebook Generation}: Fix $P_{A|X},P_{\hat{X}|X,S_e,A}$. Let $\mathcal{W}_1^{(n)} =\{1,2,\ldots,|\mathcal{W}_1^{(n)}|\}$, $\mathcal{W}_2^{(n)} =\{1,2, \ldots,|\mathcal{W}_2^{(n)}|\}$, and $\mathcal{V}^{(n)} =\{1,2, \ldots,|\mathcal{V}^{(n)}|\}$. For all $w_1\in \mathcal{W}_1^{(n)}$ the action codewords $a^{n}(w_1)$ are generated i.i.d. each according to $\prod_{i=1}^{n}
    P_{A}(a_{i})$ and for each $w_1 \in \mathcal{W}_1^{(n)}$
    $|\mathcal{W}_2^{(n)}||\mathcal{V}^{(n)}|$ codewords $\{\hat{x}^{n}(w_1,w_2,v)\}_{w_2\in \mathcal{W}_2^{(n)},v\in \mathcal{V}^{(n)}}$ are generated i.i.d. each according to $\prod_{i=1}^{n}P_{\hat{X}|A}\big(\hat{x}_{i}|a_{i}(w_1)\big)$. The codebooks are then revealed to the encoder, the action decoder, and the decoder. Let $0< \epsilon_{0}<\epsilon_{1}<\epsilon < 1$.

    \textit{Encoding}: Given a source realization $x^{n}$ the encoder
    first looks for the smallest $w_1 \in \mathcal{W}_1^{(n)}$ such
    that $a^{n}(w_1)$ is jointly typical with $x^{n}$. Then the
    channel states are generated as outputs of the memoryless channel
    with transition probability
    $P_{S_{e}^{n},S_{d}^{n}|A^{n}}(s_{e}^{n},s_{d}^{n}|a^{n})=\prod_{i=1}^{n}P_{S_{e},S_{d}|A}(s_{e,i},s_{d,i}|a_{i})$,
    and the encoder in the second stage looks for the smallest $w_2
    \in \mathcal{W}_2^{(n)}$ and $v \in \mathcal{V}^{(n)}$ such that
    $\big(x^{n},\hat{x}^{n}(w_1,w_2,v),s_{e}^{n},a^{n}(w_1)\big)\in
    T_{\epsilon_{1}}^{(n)}(X,\hat{X},S_{e},A)$. If successful, the
    encoder produces $\hat{x}^{n}(w_1,w_2,v)$ as a common
    reconstruction at the encoder and transmits indices $(w_1,w_2)$ to
    the decoder. If not successful, the encoder transmits $w_1=1,
    w_2=1$ and produces $\hat{x}^{n}(1,1,1)$.

\textit{Decoding}: Given the indices $w_1$ and $w_2$, and the side information $s_{d}^{n}$ the decoder reconstructs $\tilde{x}^{n}=\hat{x}^{n}(w_1,w_2,\tilde{v})$ if there exists a unique $\tilde{v}\in \mathcal{V}^{(n)}$ such that $\big(s_{d}^{n},\hat{x}^{n}(w_1,w_2,\tilde{v}),a^{n}(w_1)\big)\in T_{\epsilon}^{(n)}(S_{d},\hat{X},A)$. Otherwise, the decoder puts out $\tilde{x}^{n}=\hat{x}^{n}(w_1,w_2,1)$.

\textit{Analysis of Probability of Error}: Let $(W_1,W_2,V)$ denote the corresponding indices of the chosen codewords $A^{n}$ and $\hat{X}^{n}$ at the encoder. We define the ``error" events as follows.
\begin{equation*}
\begin{split}
      &\mathcal{E}_{0} = \big\{X^{n}\notin T_{\epsilon_{0}}^{(n)}(X)\big\} \\
      &\mathcal{E}_{1a} = \big\{(X^{n},A^{n}(w_1))\notin T_{\epsilon_{1}}^{(n)}(X,A)\  \mbox{for all} \ w_1 \in  \mathcal{W}_1^{(n)} \big\} \\
    &\mathcal{E}_{1b} = \big\{(X^{n},A^{n}(W_1),S_{e}^{n},S_{d}^{n})\notin T_{\epsilon_{1}}^{(n)}(X,A,S_{e},S_{d}) \big\} \\
      &\mathcal{E}_{2} = \big\{\big(X^{n},\hat{X}^{n}(W_1,w_2,v),S_{e}^{n},A^{n}(W_1)\big)\notin
    T_{\epsilon_{1}}^{(n)}(X,\hat{X},S_{e},A)\ \mbox{for all} \ (w_2,v) \in  \mathcal{W}_2^{(n)}\times\mathcal{V}^{(n)}\big\} \\
      &\mathcal{E}_{3} = \big\{\big(S_{d}^{n},\hat{X}^{n}(W_1,W_2,V),A^{n}(W_1)\big)\notin T_{\epsilon}^{(n)}(S_{d},\hat{X},A)\big\} \\
      &\mathcal{E}_{4} = \big\{\big(S_{d}^{n},\hat{X}^{n}(W_1,W_2,\tilde{v}),A^{n}(W_1)\big) \in T_{\epsilon}^{(n)}(S_{d},\hat{X},A)\  \mbox{for some}\ \tilde{v}\in  \mathcal{V}^{(n)}, \tilde{v}\neq V  \big\}.
\end{split}
\end{equation*}
The total ``error" probability is bounded by
\begin{align*}
    \mathrm{Pr}(\mathcal{E}) &\leq \mathrm{Pr}(\mathcal{E}_{0}) + \mathrm{Pr}(\mathcal{E}_{1a}\cap \mathcal{E}_{0}^{c})+ \mathrm{Pr}(\mathcal{E}_{1b}\cap \mathcal{E}_{1a}^{c})+ \mathrm{Pr}(\mathcal{E}_{2}\cap \mathcal{E}_{1b}^{c})+\mathrm{Pr}(\mathcal{E}_{3}\cap \mathcal{E}_{2}^{c}) +\mathrm{Pr}(\mathcal{E}_{4}),
\end{align*}
where $\mathcal{E}_{i}^{c}$ denotes the complement of the event $\mathcal{E}_{i}$.

    0) By the law of large numbers (LLN), $\mathrm{Pr}\big(X^{n}\in
    T_{\epsilon_{0}}^{(n)}(X)\big)\geq 1-\delta_{\epsilon_{0}}$. Since $\delta_{\epsilon_{0}}$ can be made arbitrarily small with increasing $n$ if $\epsilon_{0}>0$, we have $\mathrm{Pr}(\mathcal{E}_{0})\rightarrow 0$ as $n\rightarrow \infty$.

    1a) By the covering lemma \cite{ElGamalKim}, $\mathrm{Pr}(\mathcal{E}_{1a}\cap \mathcal{E}_{0}^{c})\rightarrow 0$ as
$n\rightarrow \infty$ if $\frac{1}{n}\log|\mathcal{W}_1^{(n)}|>I(X;A)+\delta_{\epsilon_{1}}$.

    1b) By the conditional typicality lemma \cite{ElGamalKim} where $(S_{d}^{n}, S_{e}^{n})$ is i.i.d. according to $\prod_{i=1}^{n}P_{S_{d},S_{e}|X,A}\big(s_{d,i},s_{e,i}|x_{i},a_{i}(w_1)\big)$, we have $\mathrm{Pr}(\mathcal{E}_{1b}\cap \mathcal{E}_{1a}^{c})\rightarrow 0$ as $n\rightarrow \infty$.

    2) Averaging over all $W_1=w_1$, by the covering lemma, where each $\hat{X}^{n}$ is drawn independently according to
    $\prod_{i=1}^{n}P_{\hat{X}|A}\big(\hat{x}_{i}|a_{i}(w_1)\big)$, we have that $\mathrm{Pr}(\mathcal{E}_{2}\cap \mathcal{E}_{1b}^{c})\rightarrow 0$ as $n\rightarrow \infty$ if $\frac{1}{n}\log|\mathcal{W}_2^{(n)}|+\frac{1}{n}\log|\mathcal{V}^{(n)}| > I(X,S_{e};\hat{X}|A)+\delta_{\epsilon_{1}}$

    3) Consider the event $\mathcal{E}_{2}^{c}$ in which there exists
    $(W_1,W_2,V)$ such that $\big(X^{n},\hat{X}^{n}(W_1,W_2,V),A^{n}(W_1),S_e^{n}\big)\in
    T_{\epsilon_{1}}^{(n)}(X,\hat{X},A,S_e)$.
    Since we have the Markov chain  $\hat{X}-(X,S_{e},A)-S_{d}$, and $S_{d}^{n}$ is distributed according to $\prod_{i=1}^{n}P_{S_{d}|X,S_{e},A}\big(s_{d,i}|x_{i},s_{e,i},a_{i}(w_1)\big)$, by using the conditional typicality lemma, we have
    \[\mathrm{Pr}\big((X^{n},\hat{X}^{n}(W_1,W_2,V),A^{n}(W_1),S_e^{n},S_{d}^{n})\in
    T_{\epsilon}^{(n)}(X,\hat{X},A,S_{e},S_{d})\big) \rightarrow 1 \ \mbox{as} \ n \rightarrow
    \infty.\]
    This implies that  $\mathrm{Pr}(\mathcal{E}_{3}\cap \mathcal{E}_{2}^{c})\rightarrow 0$ as
$n\rightarrow \infty$.

    4) Averaging over all $W_1=w_1, W_2=w_2$, and $V=v$ \cite[Ch.12, Lemma 1]{ElGamalKim}, by the packing lemma \cite{ElGamalKim} where each $\hat{X}^{n}$ is drawn independently  according to
    $\prod_{i=1}^{n}P_{\hat{X}|A}\big(\hat{x}_{i}|a_{i}\big)$, we have that $\mathrm{Pr}(\mathcal{E}_{4})\rightarrow 0$ as $n\rightarrow \infty$ if $\frac{1}{n}\log|\mathcal{V}^{(n)}| < I(\hat{X};S_{d}|A)-\delta_{\epsilon}$.

Finally, we consider the case where there is no error, i.e.,\[(X^{n},\hat{X}^{n}(W_1,W_2,V),A^{n}(W_1),S_{e}^{n},S_{d}^{n})\in
    T_{\epsilon}^{(n)}(X,\hat{X},A,S_{e},S_{d}).\] By the law of total expectation, the averaged distortion (over all codebooks $\mathfrak{C}$ containing codewords $(\hat{X}^{n},A^{n})$) is given by
     \begin{align*}
      E_{\mathfrak{C},X^{n}}[d^{(n)}(X^{n},\tilde{X}^{n})] &= \mathrm{Pr}(\mathcal{E}) \cdot E_{\mathfrak{C},X^{n}}[d^{(n)}(X^{n},\hat{X}^{n})|\mathcal{E}] +  \mathrm{Pr}(\mathcal{E}^c) \cdot E_{\mathfrak{C},X^{n}}[d^{(n)}(X^{n},\hat{X}^{n})|\mathcal{E}^c]\\
      & \leq \mathrm{Pr}(\mathcal{E}) \cdot d_{max} +  \mathrm{Pr}(\mathcal{E}^c) \cdot E_{\mathfrak{C},X^{n}}[d^{(n)}(X^{n},\hat{X}^{n})|\mathcal{E}^c],
    \end{align*}
where $d_{max}$ is assumed to be the maximal average distortion incurred by the ``error" events.

   Given $\mathcal{E}^c$, the distortion is bounded by
    \begin{align*}
      d^{(n)}(x^{n},\hat{x}^{n}) &= \frac{1}{n}\sum_{i=1}^{n}d\big(x_{i},\hat{x}_{i}\big) \\
       &= \frac{1}{n}\sum_{a,b}N(a,b|x^{n},\hat{x}^{n})d\big(a,b\big) \\
       & \overset{(*)}{\leq}  \sum_{a,b}P_{X,\hat{X}}(a,b)\left(1+\epsilon\right)d\big(a,b\big) = E\big[d\big(X,\hat{X}\big)\big]\left(1+\epsilon\right),
    \end{align*}
where $(*)$ follows from the definition in \eqref{eq:Typical}.

Therefore, we have
 \begin{align*}
      &E_{\mathfrak{C},X^{n}}[d^{(n)}(X^{n},\tilde{X}^{n})] \leq \mathrm{Pr}(\mathcal{E}) \cdot d_{max} +  \mathrm{Pr}(\mathcal{E}^c) \cdot E\big[d\big(X,\hat{X}\big)\big](1+\epsilon).
    \end{align*}
Similarly, we have for the average cost
 \begin{align*}
      &E_{\mathfrak{C}}[\Lambda^{(n)}(A^{n})] \leq \mathrm{Pr}(\mathcal{E}) \cdot c_{max} +  \mathrm{Pr}(\mathcal{E}^c) \cdot E\big[\Lambda(A)](1+\epsilon),
    \end{align*}
where $c_{max}$ is assumed to be the maximal average cost incurred by the ``error" events.

By combining the bounds on the code rates that make
$\mathrm{Pr}(\mathcal{E})\rightarrow 0$ as $n\rightarrow \infty$ and
considering the constraint
$\frac{1}{n}\log|\mathcal{W}^{(n)}|=\frac{1}{n}\log|\mathcal{W}_1^{(n)}||\mathcal{W}_2^{(n)}|
\leq R+\delta$, for any $\delta>0$, we have
\[R+\delta \geq \frac{1}{n}\log|\mathcal{W}^{(n)}| > I(X;A) +I(X,S_{e};\hat{X}|A)-I(\hat{X};S_{d}|A) +\delta'_{\epsilon}, \]
where $\delta'_{\epsilon}$ can be made arbitrarily small, i.e., $\delta'_{\epsilon}\rightarrow 0$ as $\epsilon \rightarrow 0$.

Thus, for any $\delta>0$, if $R \geq I(X;A) +I(X,S_{e};\hat{X}|A)-I(\hat{X};S_{d}|A)$, $E\big[d\big(X,\hat{X}\big)\big] \leq D$ and $E\big[\Lambda(A)] \leq C$, then we have
$\mathrm{Pr}(\mathcal{E})\rightarrow 0$ as $n\rightarrow \infty$, and for all sufficiently large $n$,
\begin{align*}
E_{\mathfrak{C},X^{n}}[d^{(n)}(X^{n},\tilde{X}^{n})] &\leq D + \delta_{\epsilon} \leq D + \delta,\\
E_{\mathfrak{C}}[\Lambda^{(n)}(A^{n})] &\leq C+ \delta_{\epsilon} \leq C + \delta.
\end{align*}

Lastly, with $\mathrm{Pr}(\mathcal{E})\rightarrow 0$ as $n\rightarrow \infty$, it follows that with high probability the decoded codeword $\tilde{X}^{n}= \hat{X}^{n}(W_1,W_2,\tilde{v}) $ at the decoder is the correct one which was chosen at the encoder. We recall the encoding process which determines the codeword $\hat{X}^{n}$ based on $X^{n},S_{e}^{n}$ and $A^{n}$, i.e., there exists a mapping $\psi^{(n)}(\cdot)$ such that $\hat{X}^{n}=\psi^{(n)}(X^{n},S_e^{n},A^{n})$. Thus, for any $\delta >0$, we can have $\mathrm{Pr}\big(\psi^{(n)}(X^{n},S_e^{n},A^{n})\neq g^{(n)}(W_1,W_2,S_{d}^{n})\big) \leq \delta$ for all sufficiently large $n$.

The average distortion, cost, and common reconstruction error probability (over all codebooks) are upper-bounded by $D+\delta, C+\delta$ and $\delta$, respectively. Therefore, there must exist at least one code such that, for sufficiently large $n$, the average distortion, cost, and common reconstruction error probability are upper-bounded by $D+\delta, C+\delta$ and $\delta$.

Thus, any $(R,D,C)$ such that we have $R \geq I(X;A) +I(X,S_{e};\hat{X}|A)-I(\hat{X};S_{d}|A)$, $E\big[d\big(X,\hat{X}\big)\big]\leq D$, and $E[\Lambda(A)]\leq C$ for some  $P_{X}(x)P_{A|X}(a|x)P_{S_{e},S_{d}|X,A}(s_{e},s_{d}|x,a)P_{\hat{X}|X,S_{e},A}(\hat{x}|x,s_{e},a)$ is achievable.
This concludes the achievability proof. \hfill$\blacksquare$

\subsection{Converse Proof of Theorem \ref{theoremRateDist}}
Let us assume the existence of a specific sequence of $(|\mathcal{W}^{(n)}|,n)$ codes
such that for $\delta_{n}>0$, $\frac{1}{n}\log|\mathcal{W}_1^{(n)}||\mathcal{W}_2^{(n)}|\leq R +\delta_{n},
\frac{1}{n}E\left[\sum_{i=1}^{n}d(X_{i},g_{i})\right] \leq D+\delta_{n},\ \frac{1}{n}E\left[\sum_{i=1}^{n}\Lambda(A_{i})\right] \leq C+\delta_{n},$
and $\mathrm{Pr}\big(\psi^{(n)}(X^{n},S_{e}^{n},A^{n})\neq g^{(n)}(W_{1},W_{2},S_{d}^{n})\big)\leq \delta_{n}$,
where $g_{i}$ denotes the $i^{th}$ symbol of $g^{(n)}(W_{1},W_{2},S_{d}^{n})$ and $\lim_{n \to \infty} \delta_{n}=0$. Then we will show that $R \geq R_{\text{ac,cr}}(D,C)$,
where $R_{\text{ac,cr}}(D,C)$ is the rate-distortion-cost function defined as
\begin{equation}\label{eq:minSumRate}
    R_{\text{ac,cr}}(D,C) = \min_{P_{A|X},P_{\hat{X}|X,S_{e},A}} [I(X;A)+I(\hat{X};X,S_{e}|A,S_{d})].
\end{equation}

With $\mathrm{Pr}\big(\psi(X^{n},S_{e}^{n},A^{n})\neq g^{(n)}(W_{1},W_{2},S_{d}^{n})\big)=\delta'_{n} \leq \delta_{n}$, and $|\hat{\mathcal{X}}|=|\tilde{\mathcal{X}}|$, the Fano inequality can be applied to bound
\begin{align}\label{eq:fano}
   & H\big(\psi^{(n)}(X^{n},S_{e}^{n},A^{n})\big|g^{(n)}(W_{1},W_{2},S_{d}^{n})\big) \leq h(\delta'_{n})+\delta'_{n}\log(|\hat{\mathcal{X}}|^{n}-1)  \triangleq n\epsilon_{n},
\end{align}
where $h(\delta'_{n})$ is the binary entropy function, and $ \epsilon_{n} \rightarrow 0$ as $\delta_{n}' \rightarrow 0$.

Then the standard properties of the entropy function give
\begin{align}\label{eq:nR}
  n(R+ \delta_{n})& \geq \log \big(|\mathcal{W}_1^{(n)}| \cdot |\mathcal{W}_2^{(n)}|\big) \geq H(W_{1},W_{2}) \nonumber \\
  & \overset{(*)}{=} H(W_{1},W_{2},A^{n}) = H(A^{n})+ H(W_{1},W_{2}|A^{n})\nonumber \\
  & \geq [H(A^{n})-H(A^{n}|X^{n},S_{e}^{n})]+[H(W_{1},W_{2}|A^{n},S_{d}^{n})-H(W_{1},W_{2}|A^{n},X^{n},S_{e}^{n},S_{d}^{n})]\nonumber \\
  & = \underbrace{H(X^{n},S_{e}^{n})-H(X^{n},S_{e}^{n}|A^{n})+ H(X^{n},S_{e}^{n}|A^{n},S_{d}^{n})}_{=P} \underbrace{-H(X^{n},S_{e}^{n}|A^{n},S_{d}^{n},W_{1},W_{2}) }_{=Q },
\end{align}
where  in $(*)$ we used the fact that $A^{n}=g_{a}^{(n)}(W_{1})$, and $g_{a}^{(n)}(\cdot)$ is the deterministic function. Further,
\begin{align}\label{eq:PP}
  P &= H(X^{n},S_{e}^{n})+ H(S_{d}^{n}|X^{n},S_{e}^{n},A^{n})-H(S_{d}^{n}|A^{n})\nonumber  \\
   & = H(X^{n}) + H(S_{e}^{n}|X^{n})+ H(S_{e}^{n},S_{d}^{n}|X^{n},A^{n})- H(S_{e}^{n}|X^{n},A^{n})-H(S_{d}^{n}|A^{n})\nonumber  \\
   &\overset{(\star)}{\geq} \sum_{i=1}^{n} H(X_{i})+ H(S_{e,i},S_{d,i}|X_{i},A_{i})-H(S_{d,i}|A_{i}) \nonumber \\
   &= \sum_{i=1}^{n} H(X_{i})+ H(S_{e,i}|X_{i},A_{i}) + H(S_{d,i}|X_{i},S_{e,i},A_{i})-H(S_{d,i}|A_{i}) \nonumber \\
   &= \sum_{i=1}^{n} H(X_{i})+ H(S_{e,i}|X_{i},A_{i}) - H(X_{i},S_{e,i}|A_{i})+H(X_{i},S_{e,i}|A_{i},S_{d,i}) \nonumber \\
   &= \sum_{i=1}^{n} I(X_{i};A_{i})+ H(X_{i},S_{e,i}|A_{i},S_{d,i}),
\end{align}
where $(\star)$ holds due to the i.i.d. property of $P_{X^{n}}$ and $P_{S_{e}^{n},S_{d}^{n}|X^{n},A^{n}}$, 
\begin{align}\label{eq:QQ1}
  Q &= -H\big(X^{n},S_{e}^{n}|A^{n},S_{d}^{n},W_{1},W_{2},g^{(n)}(W_{1},W_{2},S_{d}^{n})\big) \nonumber \\
   &\geq -H\big(X^{n},S_{e}^{n}|A^{n},S_{d}^{n},g^{(n)}(W_{1},W_{2},S_{d}^{n})\big) \nonumber \\
   &= -H\big(\psi^{(n)}(X^{n},S_{e}^{n},A^{n}),X^{n},S_{e}^{n}|A^{n},S_{d}^{n},g^{(n)}(W_{1},W_{2},S_{d}^{n})\big) \nonumber \\ & \qquad +H(\psi^{(n)}(X^{n},S_{e}^{n},A^{n})|A^{n},S_{d}^{n},g^{(n)}(W_{1},W_{2},S_{d}^{n}),X^{n},S_{e}^{n})\nonumber \\
   &\geq -H\big(\psi^{(n)}(X^{n},S_{e}^{n},A^{n})|g^{(n)}(W_{1},W_{2},S_{d}^{n})\big)-H\big(X^{n},S_{e}^{n}|A^{n},S_{d}^{n},\psi^{(n)}(X^{n},S_{e}^{n},A^{n})\big) \nonumber \\
   &\overset{(a)}{\geq} -n\epsilon_{n}-\sum_{i=1}^{n}H(X_{i},S_{e,i}|A^{n},S_{d}^{n},\psi^{(n)}(X^{n},S_{e}^{n},A^{n}),X^{i-1},S_{e}^{i-1})\nonumber \\
   &\overset{(b)}{\geq} -n\epsilon_{n} - \sum_{i=1}^{n} H\big(X_{i},S_{e,i}|A_{i},S_{d,i},\psi^{(n)}_{i}(X^{n},S_{e}^{n},A^{n})\big),
\end{align}
where $(a)$ follows from \eqref{eq:fano} and $\psi^{(n)}_{i}(X^{n},S_{e}^{n},A^{n})$ in $(b)$ corresponds to the $i^{th}$ symbol of $\psi^{(n)}(X^{n},S_{e}^{n},A^{n})$.

Define $\hat{X}_{i}=\psi^{(n)}_{i}(X^{n},S_{e}^{n},A^{n})$. Then the Markov chain $\hat{X}^{n}-(X^{n},S_{e}^{n},A^{n})-S_{d}^{n}$ holds. Together with the memoryless property, $P_{S_{e}^{n},S_{d}^{n}|X^{n},A^{n}}(s_{e}^{n},s_{d}^{n}|x^{n},a^{n})=\prod_{i=1}^{n}P_{S_{e},S_{d}|X,A}(s_{e,i},s_{d,i}|x_{i},a_{i})$, we also have that $(S_{d}^{i-1},X^{n\setminus i},S_{e}^{n\setminus i},A^{n\setminus i},\hat{X}^{n})-(X_{i},S_{e,i},A_{i})-S_{d,i}$ forms a Markov chain.
Combining \eqref{eq:nR}-\eqref{eq:QQ1}, we have
\begin{align}\label{eq:nRSum}
   R+ \delta_{n} & \geq \frac{1}{n} \log \big(|\mathcal{W}_1^{(n)}| \cdot |\mathcal{W}_2^{(n)}|\big) \nonumber\\
   & \geq \frac{1}{n}\sum_{i=1}^{n} I(X_{i};A_{i})+ I(\hat{X}_{i};X_{i},S_{e,i}|A_{i},S_{d,i})- \epsilon_{n} \nonumber\\
   & \overset{(a)}{\geq}\frac{1}{n}\sum_{i=1}^{n} R_{\text{ac,cr}}\Big(E\big[d\big(X_{i},\hat{X}_{i}\big)\big] ,E[\Lambda(A_{i})]\Big)- \epsilon_{n} \nonumber\\
   & \overset{(b)}{\geq} R_{\text{ac,cr}}\left(\frac{1}{n}\sum_{i=1}^{n}E\big[d\big(X_{i},\hat{X}_{i}\big)\big],\frac{1}{n}\sum_{i=1}^{n}E[\Lambda(A_{i})]\right)- \epsilon_{n},
\end{align}
where $(a)$ follows from the definition of $R_{\text{ac,cr}}(D,C)$ in \eqref{eq:minSumRate}, and the fact that $\hat{X}_{i}-(X_{i},S_{e,i},A_{i})-S_{d,i}$ forms a Markov chain,
$(b)$ follows from Jensen's inequality and convexity of $R_{\text{ac,cr}}(D,C)$.

Let $\beta$ be the event that the reconstruction at the encoder is not equal to that at the decoder, i.e., $\beta=\{\psi^{(n)}(X^{n},S_{e}^{n},A^{n})\neq g^{(n)}(W_{1},W_{2},S_{d}^{n})\}$.
It then follows that
\begin{align}\label{eq:E(x,g)}
&E[d(X_{i},g_{i})] = E[d(X_{i},g_{i})|\beta^{c}]\cdot\mathrm{Pr}(\beta^{c})+ E[d(X_{i},g_{i})|\beta]\cdot\mathrm{Pr}(\beta) \overset{(\star)}{\geq} E[d(X_{i},\hat{X}_{i})|\beta^{c}]\cdot\mathrm{Pr}(\beta^{c}),
\end{align}
where $(\star)$ holds because we have $g_{i}=\hat{X}_{i}$ for given $\beta^{c}$. Thus
\begin{align}\label{eq:E(x,xhat)}
         \frac{1}{n}\sum_{i=1}^{n}E[d(X_{i},\hat{X}_{i})] &= \frac{1}{n}\sum_{i=1}^{n}E[d(X_{i},\hat{X}_{i})|\beta^{c}]\cdot\mathrm{Pr}(\beta^{c}) + E[d(X_{i},\hat{X}_{i})|\beta]\cdot\mathrm{Pr}(\beta)\nonumber\\
        &\overset{(a)}{\leq} \frac{1}{n}\sum_{i=1}^{n}E[d(X_{i},\hat{X}_{i})|\beta^{c}]\cdot\mathrm{Pr}(\beta^{c})+ \tilde{d}_{max}\delta_{n}\nonumber\\
        &\overset{(b)}{\leq} \frac{1}{n}\sum_{i=1}^{n}E[d(X_{i},g_{i})]+ \tilde{d}_{max}\delta_{n}\nonumber\\
        &\overset{(c)}{\leq} D+ (1+\tilde{d}_{max})\delta_{n},
\end{align}
where $(a)$ follows from the assumption that $\tilde{d}_{max}$ is the maximum average distortion incurred by the error event $\beta$ and that $\mathrm{Pr}\big(\psi^{(n)}(X^{n},S_{e}^{n},A^{n})\neq g^{(n)}(W_{1},W_{2},S_{d}^{n})\big) \leq \delta_{n}$, $(b)$ follows from \eqref{eq:E(x,g)}, and $(c)$ follows from the assumption that $\frac{1}{n} E\left[\sum_{i=1}^{n}d(X_{i},g_{i})\right] \leq D+\delta_{n}$ in the beginning.

Finally, we substitute \eqref{eq:E(x,xhat)} into
\eqref{eq:nRSum}. With $\lim_{n \to \infty} \delta_{n}=0$, and
$\lim_{n \to \infty} \epsilon_{n}=0$, we thus get $R \geq
R_{\text{ac,cr}}(D,C)$ by using the assumption that
$\frac{1}{n}E\left[\sum_{i=1}^{n}\Lambda(A_{i})\right] \leq
C+\delta_{n}$ and the non-increasing property of $R_{\text{ac,cr}}(D,C)$. This concludes the proof of converse. \hfill$\blacksquare$

\section{Converse Proof of Proposition \ref{theoremRateDist2}}\label{sec:proofISIT}
Let $(W_{1},W_{2})\in \{1,2,\ldots,|\mathcal{W}_{1}^{(n)}|\} \times \{1,2,\ldots,|\mathcal{W}_{2}^{(n)}|\}$ denote the encoded version of $X^{n}$ where $|\mathcal{W}^{(n)}|=|\mathcal{W}_{1}^{(n)}|\cdot|\mathcal{W}_{2}^{(n)}|$.  Let us assume the existence of a specific sequence of $(|\mathcal{W}^{(n)}|,n)$ codes
such that for $\delta_{n}>0$, $\frac{1}{n}\log|\mathcal{W}^{(n)}|\leq R +\delta_{n},
\frac{1}{n}E\left[\sum_{i=1}^{n}d(X_{i},\hat{X}_{i})\right] \leq D+\delta_{n},\ \frac{1}{n}E\left[\sum_{i=1}^{n}\Lambda(A_{i})\right] \leq C+\delta_{n},$
where $\hat{X}_{i}$ denotes the $i^{th}$ symbol of $\hat{X}^{n}=g^{(n)}(W,S_{d}^{n})$ and $\lim_{n \to \infty} \delta_{n}=0$. Then we identify $U$ and $\tilde{g}:\mathcal{U} \times \mathcal{S}_{d}\rightarrow \hat{\mathcal{X}}$ and show that $R \geq R_{\text{ac}}(D,C)$,
where $R_{\text{ac}}(D,C)$ is the rate-distortion-cost function defined as
\begin{equation}\label{eq:minSumRate2}
    R_{\text{ac}}(D,C) = \min_{P_{A|X},P_{U|X,S_{e},A},\tilde{g}:\mathcal{U} \times \mathcal{S}_{d}\rightarrow \hat{\mathcal{X}}} [I(X;A)+I(U;X,S_{e}|A,S_{d})].
\end{equation}

We start bounding the rate as in \eqref{eq:nR},
\begin{align}\label{eq:nR_copied}
  n(R+ \delta_{n})& \geq \log \big(|\mathcal{W}^{(n)}| \big) \geq H(W) \nonumber \\
  & \geq  \underbrace{H(X^{n},S_{e}^{n})-H(X^{n},S_{e}^{n}|A^{n})+ H(X^{n},S_{e}^{n}|A^{n},S_{d}^{n})}_{=P} \underbrace{-H(X^{n},S_{e}^{n}|A^{n},S_{d}^{n},W) }_{=Q }
\end{align}
The term $P$ is given as in \eqref{eq:PP},
\begin{align}\label{eq:PP_copied}
  P &\geq  \sum_{i=1}^{n} I(X_{i};A_{i})+ H(X_{i},S_{e,i}|A_{i},S_{d,i})
\end{align}
and
\begin{align}\label{eq:QQ}
  Q 
   &= -H(X^{n}|A^{n},S_{d}^{n},W)- H(S_{e}^{n}|A^{n},S_{d}^{n},W,X^{n}) \nonumber \\
   &\geq \sum_{i=1}^{n} -H(X_{i}|A^{n},S_{d}^{n},W,X^{i-1}) - H(S_{e,i}|A^{n},S_{d}^{n},W,X^{i-1},X_{i})
   = - \sum_{i=1}^{n} H(X_{i},S_{e,i}|U_{i},A_{i},S_{d,i}),
\end{align}
where $U_{i} \triangleq(A^{n\setminus i},S_{d}^{n\setminus i},W,X^{i-1}),\ i=1,2,\ldots,n$.

Combining \eqref{eq:nR_copied}-\eqref{eq:QQ}, and letting $n\rightarrow \infty$, we have
\begin{align*}
  nR &\geq \sum_{i=1}^{n} I(X_{i};A_{i})+ I(U_{i};X_{i},S_{e,i}|A_{i},S_{d,i})\\
   & \overset{(a)}{\geq}\sum_{i=1}^{n} R_{\text{ac}}\Big(E\big[d\big(X_{i},\tilde{g}_{i}(U_{i},S_{d,i})\big)\big] ,E[\Lambda(A_{i})]\Big) \\
   & \overset{(b)}{\geq} n R_{\text{ac}}\left(\frac{1}{n}\sum_{i=1}^{n}E\big[d\big(X_{i},\tilde{g}_{i}(U_{i},S_{d,i})\big)\big],\frac{1}{n}\sum_{i=1}^{n}E[\Lambda(A_{i})]\right)\\
   & \overset{(c)}{\geq} n R_{\text{ac}}(D,C),
\end{align*}
where $(a)$ follows from the definition of rate-distortion-cost function in \eqref{eq:minSumRate2} and the fact that $U_{i}-(A_{i},X_{i},S_{e,i})-S_{d,i}$ forms a Markov chain, and that $\hat{X}_{i} =g^{(n)}_{i}(W,S_{d}^{n})\triangleq \tilde{g}_{i}(U_{i},S_{d,i})$ for some $\tilde{g}_{i}(\cdot)$, $(b)$ follows from Jensen's inequality and convexity of $R_{\text{ac}}(D,C)$ which can be proved similarly as in \cite{Permuter2011} or Lemma \ref{lemma:convexRDC}, and $(c)$ follows from the non-increasing property of $R_{\text{ac}}(D,C)$, $\frac{1}{n}E\big[\sum_{i=1}^{n}d(X_{i},\hat{X}_{i})\big] \leq D+\delta_{n}$, and $\frac{1}{n}E\big[\sum_{i=1}^{n}\Lambda(A_{i})\big] \leq C+\delta_{n}$.

For the bound on the cardinality of the set of $U$, it can be shown by using the support lemma \cite{CsiszarBook} that $\mathcal{U}$ should have $|\mathcal{A}||\mathcal{X}|-1$ elements to preserve $P_{A,X}$, plus four more for $I(U;X,S_{e}|A)$, $I(U;S_{d}|A)$, the distortion, and the cost constraints.
This finally concludes the proof. \hfill$\blacksquare$

\section{Proof of Theorem \ref{theoremCapa}}\label{sec:proofchannelcoding}
\subsection{Achievability Proof of Theorem \ref{theoremCapa}}

Similarly to the previous achievability proof in Theorem \ref{theoremRateDist}, the proof follows from a standard random coding argument where we use the definition and properties of  $\epsilon$-typicality as in \cite{ElGamalKim}. 
We use the technique of rate splitting, i.e., the message $M$ of rate $R$ is split into two messages $M_{1}$ and $M_{2}$ of rates $R_{1}$ and $R_{2}$. Two-stage coding is then considered, i.e., a first stage for communicating the identity of the action sequence, and a second stage for communicating the identity of $X^{n}$ based on the known action sequence. 

For given channels with transition probabilities
$P_{S_{e},S_{d}|A}(s_{e},s_{d}|a)$ and $P_{Y|X,S_{e},S_{d}}(y|x,s_{e},s_{d})$ we can assign the joint
probability to any random vector $(A,X,S_{e})$ by
\begin{align*}
P_{A,S_{e},S_{d},X,Y}(a,s_{e},s_{d},x,y)
  & =P_{A}(a)P_{S_{e},S_{d}|A}(s_{e},s_{d}|a)P_{X|A,S_{e}}(x|a,s_{e})P_{Y|X,S_{e},S_{d}}(y|x,s_{e},s_{d})
\end{align*}

\textit{Codebook Generation}: Fix $P_{A}$ and $P_{X|A,S_{e}}$. Let $\mathcal{M}_{1}^{(n)}=\{1,2,\ldots,|\mathcal{M}_{1}^{(n)}|\}$, $\mathcal{M}_{2}^{(n)}=\{1,2,\ldots,|\mathcal{M}_{2}^{(n)}|\}$ and $\mathcal{J}^{(n)}=\{1,2,\ldots,|\mathcal{J}^{(n)}|\}$. For all $m_{1}\in \mathcal{M}_{1}^{(n)}$, randomly generate $a^{n}(m_{1})$ i.i.d. according to $\prod_{i=1}^{n}P_{A}(a_{i})$. For each $m_{1}\in \mathcal{M}_{1}^{(n)}$, generate $|\mathcal{M}_{2}^{(n)}|\cdot |\mathcal{J}^{(n)}|$ codewords, $\{x^{n}(m_{1},m_{2},j)\}_{m_{2} \in \mathcal{M}_{2}^{(n)}, j \in \mathcal{J}^{(n)}}$ i.i.d. each according to $\prod_{i=1}^{n}P_{X|A}(x_{i}|a_{i}(m_1))$. Then the codebooks are revealed to the action encoder, the channel encoder and the decoder. Let $0 < \epsilon_{0} <\epsilon_{1}<\epsilon < 1$.

\textit{Encoding}: Given the message $m=(m_{1},m_{2})\in \mathcal{M}^{(n)}$, the action codeword $a^{n}(m_{1})$ is chosen and the channel state information $(s_{e}^{n},s_{d}^{n})$ is generated as an output of the memoryless channel $P_{S_{e}^{n},S_{d}^{n}|A^{n}}(s_{e}^{n},s_{d}^{n}|a^{n})=\prod_{i=1}^{n}P_{S_{e},S_{d}|A}(s_{e,i},s_{d,i}|a_{i})$.
The encoder looks for the smallest value of $j \in \mathcal{J}^{(n)}$ such that $\big(s_{e}^{n},a^{n}(m_{1}),x^{n}(m_{1},m_{2},j)\big)\in T_{\epsilon_{1}}^{(n)}(S_{e},A,X)$. If no such $j$ exists, set $j=1$. The channel input sequence is then chosen to be $x^{n}(m_{1},m_{2},j)$.

\textit{Decoding}:  Upon receiving $y^{n}$ and $s_{d}^{n}$, the decoder in the first step looks for the smallest $\tilde{m}_{1} \in \mathcal{M}_{1}^{(n)}$ such that $\big(y^{n},s_{d}^{n},a^{n}(\tilde{m}_{1})\big)\in T_{\epsilon}^{(n)}(Y,S_{d},A)$. If successful, then set $\hat{m}_{1}=\tilde{m}_{1}$. Otherwise, set $\hat{m}_{1}=1$. Then, based on the known $a^{n}(\hat{m}_{1})$, the decoder looks for a pair $(\tilde{m}_{2},\tilde{j})$ with the smallest $\tilde{m}_{2} \in \mathcal{M}_{2}^{(n)}$ and $\tilde{j} \in \mathcal{J}^{(n)}$ such that $\big(y^{n},s_{d}^{n}, a^{n}(\hat{m}_{1}),x^{n}(\hat{m}_{1},\tilde{m}_{2},\tilde{j})\big)\in T_{\epsilon}^{(n)}(Y,S_{d},A,X)$.
If there exists such a pair, the decoded message is set to be $\hat{m} =(\hat{m}_{1},\tilde{m}_{2})$, and $\hat{x}^{n}=x^{n}(\hat{m}_{1},\tilde{m}_{2},\tilde{j})$. Otherwise, $\hat{m} =(1,1)$ and $\hat{x}^{n}=x^{n}(1,1,1)$.\footnotemark[3]

\footnotetext[3]{We note that although the simultaneous joint typicality decoding gives us different constraints on the individual rate as compared to the sequential two-stage decoding considered in this paper, it gives the same constraints on the total transmission rate in which we are interested.}

\textit{Analysis of Probability of Error}: Due to the symmetry of the random code construction, the error probability does not depend on which message was sent. Assuming that $M=(M_{1},M_{2})$ and $J$ were sent and chosen at the encoder. We define the error events as follows.
    \begin{align*}
      &\mathcal{E}_{1} = \{A^{n}(M_{1})\notin
    T_{\epsilon_{0}}^{(n)}(A)\} \\
      &\mathcal{E}_{2} = \big\{\big(S_{e}^{n},S_{d}^{n},A^{n}(M_{1})\big)\notin
    T_{\epsilon_{1}}^{(n)}(S_{e},S_{d},A)\big\} \\
      &\mathcal{E}_{3} = \big\{ \big(S_{e}^{n},A^{n}(M_{1}),X^{n}(M_{1},M_{2},j)\big)\notin
    T_{\epsilon_{1}}^{(n)}(S_{e},A,X)\ \mbox{for all}\ j \in \mathcal{J}^{(n)} \big\}\\
    &\mathcal{E}_{4a} = \big\{ \big(Y^{n},S_{d}^{n},A^{n}(M_{1})\big)\notin
    T_{\epsilon}^{(n)}(Y,S_{d},A)\big\}\\
     &\mathcal{E}_{4b} = \big\{ \big(Y^{n},S_{d}^{n},A^{n}(\tilde{m}_{1})\big)\in
    T_{\epsilon}^{(n)}(Y,S_{d},A)\ \mbox{for some} \ \tilde{m}_{1} \in \mathcal{M}_{1}^{(n)}, \tilde{m}_{1} \neq M_{1} \big\}\\
      &\mathcal{E}_{5a} = \big\{ \big(Y^{n},S_{d}^{n},A^{n}(M_{1}),X^{n}(M_{1},M_{2},J)\big)\notin
    T_{\epsilon}^{(n)}(Y,S_{d},A,X)\big\} \\
      &\mathcal{E}_{5b} = \big\{\big(Y^{n},S_{d}^{n},A^{n}(M_{1}),X^{n}(M_{1},\tilde{m}_{2},\tilde{j})\big)\in T_{\epsilon}^{(n)}(Y,S_{d},A,X)\ \mbox{for some}\ (\tilde{m}_{2},\tilde{j})\in \mathcal{M}_{2}^{(n)} \times  \mathcal{J}^{(n)},  (\tilde{m}_{2},\tilde{j}) \neq (M_{2},J)\big\}.
    \end{align*}

The probability of error events can be bounded by
\begin{align*}
    \mathrm{Pr}(\mathcal{E}) 
    &\leq \mathrm{Pr}(\mathcal{E}_{1}) + \mathrm{Pr}(\mathcal{E}_{2}\cap \mathcal{E}_{1}^{c}) + \mathrm{Pr}(\mathcal{E}_{3}\cap \mathcal{E}_{2}^{c}) + \mathrm{Pr}(\mathcal{E}_{4a}\cap \mathcal{E}_{3}^{c}) + \mathrm{Pr}(\mathcal{E}_{4b})+ \mathrm{Pr}(\mathcal{E}_{5a}\cap \mathcal{E}_{3}^{c})+\mathrm{Pr}(\mathcal{E}_{5b}),
\end{align*}
where $\mathcal{E}_{i}^{c}$ denotes the complement of event $\mathcal{E}_{i}$.

1) Since $A^{n}(M_{1})$ is i.i.d. according to $P_{A}$, by the LLN we have $\mathrm{Pr}(\mathcal{E}_{1})\rightarrow 0$ as $n\rightarrow \infty$.

2) Consider the event $\mathcal{E}_{1}^{c}$ where we have $A^{n}(M_{1})\in T_{\epsilon_{0}}^{(n)}(A)$.  Since $(S_{d}^{n},S_{e}^{n})$ is distributed according to $\prod_{i=1}^{n}P_{S_{e},S_{d}|A}\big(s_{e,i},s_{d,i}|a_{i}\big)$, by the conditional typicality lemma \cite{ElGamalKim}, we have that $\mathrm{Pr}\big(\mathcal{E}_{2} \cap \mathcal{E}_{1}^{c}\big)\rightarrow 0$ as $n\rightarrow \infty$.

3) By the covering lemma \cite{ElGamalKim} where $X^{n}$ is i.i.d. according to $\prod_{i=1}^{n}P_{X|A}(x_{i}|a_{i})$, we have $\mathrm{Pr}\big(\mathcal{E}_{3} \cap \mathcal{E}_{2}^{c}\big)\rightarrow 0$ as $n\rightarrow \infty$ if $\frac{1}{n}\log |\mathcal{J}^{(n)}| > I(X;S_{e}|A)+ \delta_{\epsilon_{1}}$, where $\delta_{\epsilon_{1}}\rightarrow 0$ as $\epsilon_{1}\rightarrow 0$.

4a) Consider the event $\mathcal{E}_{3}^{c}$ where we have $\big(S_{e}^{n},A^{n}(M_{1}),X^{n}(M_{1},M_{2},J)\big)\in T_{\epsilon_{1}}^{(n)}(S_{e},A,X)$. Since we have $S_{d}-(A,S_{e})-X$ forms a Markov chain and $S_{d}^{n}$ is distributed according to $\prod_{i=1}^{n}P_{S_{d}|A,S_{e}}\big(s_{d,i}|a_{i},s_{e,i})\big)$, we have that by the conditional typicality lemma \cite{ElGamalKim}, $\mathrm{Pr}\big((S_{d}^{n},S_{e}^{n},A^{n}(M_{1}),X^{n}(M_{1},M_{2},J))\in T_{\epsilon}^{(n)}(S_{d},S_{e},A,X) \big) \rightarrow 1$ as $n\rightarrow \infty$. And since we have the Markov chain $A-(X,S_{e},S_{d})-Y$ and $Y^{n}$ is distributed according to $\prod_{i=1}^{n}P_{Y|X,S_{e},S_{d}}(y_{i}|x_{i},s_{e,i},s_{d,i})$, by using once again the conditional typicality lemma, it follows that \[\mathrm{Pr}\big(\big(Y^{n}, S_{e}^{n},S_{d}^{n}, A^{n}(M_{1}),X^{n}(M_{1},M_{2},J)\big)\in T_{\epsilon}^{(n)}(Y,A,S_{e},S_{d},X)\big)\rightarrow 1 \ \mbox{as} \ n\rightarrow \infty.\]
This also implies that $\mathrm{Pr}(\mathcal{E}_{4a}\cap \mathcal{E}_{3}^{c})\rightarrow 0$ as $n\rightarrow \infty$.

4b) By the packing lemma \cite{ElGamalKim}, we have $\mathrm{Pr}\big(\mathcal{E}_{4b}\big)\rightarrow 0$ as $n\rightarrow \infty$ if $\frac{1}{n}\log |\mathcal{M}_{1}^{(n)}| < I(A;Y,S_{d})- \delta_{\epsilon}$, where $\delta_{\epsilon}\rightarrow 0$ as $\epsilon \rightarrow 0$.

5a) As in $E_{4a}$) we have $\mathrm{Pr}(E_{5a}\cap E_{3}^{c})\rightarrow 0$ as $n\rightarrow \infty$.

5b) Averaging over all $J=j$, by the packing lemma where $X^{n}$ is i.i.d. according to $\prod_{i=1}^{n}P_{X|A}(x_{i}|a_{i})$, we have $\mathrm{Pr}\big(E_{5b}\big)\rightarrow 0$ as $n\rightarrow \infty$ if $\frac{1}{n}\log |\mathcal{M}_{2}^{(n)}|+\frac{1}{n}\log |\mathcal{J}^{(n)}|< I(X;Y,S_{d}|A)- \delta_{\epsilon}$.

Finally, by combining the bounds on the code rates that make $\mathrm{Pr}\big(\mathcal{E}\big)\rightarrow 0$ as $n\rightarrow \infty$,
\begin{align*}
 \frac{1}{n}\log |\mathcal{J}^{(n)}|&> I(X;S_{e}|A)+ \delta_{\epsilon_{1}} \\
 \frac{1}{n}\log |\mathcal{M}_{1}^{(n)}|&< I(A;Y,S_{d})- \delta_{\epsilon} \\
 \frac{1}{n}\log |\mathcal{M}_{2}^{(n)}|+\frac{1}{n}\log |\mathcal{J}^{(n)}|&< I(X;Y,S_{d}|A)- \delta_{\epsilon},
\end{align*}
we have
\begin{align*}
 \frac{1}{n}\log |\mathcal{M}^{(n)}| =  \frac{1}{n}\log |\mathcal{M}_{1}^{(n)}||\mathcal{M}_{2}^{(n)}|&< I(A,X;Y,S_{d})-I(X;S_{e}|A)- \delta'_{\epsilon} \\
 \frac{1}{n}\log |\mathcal{M}_{2}^{(n)}| &< I(X;Y,S_{d}|A)-I(X;S_{e}|A)- \delta''_{\epsilon},
\end{align*}
where $\delta'_{\epsilon},\delta''_{\epsilon}\rightarrow 0$ as $\epsilon \rightarrow 0$.

Since, for any $\delta>0$, the achievable rate $R$ satisfies $\frac{1}{n}\log |\mathcal{M}^{(n)}| \geq R- \delta$, and we know that $\frac{1}{n}\log |\mathcal{M}_{2}^{(n)}| \geq 0$, then we get
\begin{align}
 R-\delta &\leq\frac{1}{n}\log |\mathcal{M}^{(n)}|< I(A,X;Y,S_{d})-I(X;S_{e}|A)- \delta'_{\epsilon} \nonumber \\
 \mbox{and}\quad 0 &\leq \frac{1}{n}\log |\mathcal{M}_{2}^{(n)}| < I(X;Y,S_{d}|A)-I(X;S_{e}|A)- \delta''_{\epsilon}.\label{eq:M2}
\end{align}

Since $\epsilon$ can be made arbitrarily small for increasing $n$, and by a standard random coding argument, we have that
\begin{align*}
 &R \leq I(A,X;Y,S_{d})-I(X;S_{e}|A)\\
 \mbox{and}\quad &0 < I(X;Y,S_{d}|A)-I(X;S_{e}|A),
\end{align*}
for some $P_{A}(a)P_{S_{e},S_{d}|A}(s_{e},s_{d}|a)P_{X|A,S_{e}}(x|a,s_{e})P_{Y|X,S_{e},S_{d}}(y|x,s_{e},s_{d})$ is achievable.

Note that the latter condition is for the two-stage coding to be successful, i.e., we can split the message into two parts with positive rates.
This concludes the achievability proof. \hfill$\blacksquare$

\subsection{Converse Proof of Theorem \ref{theoremCapa}}
We show that for any achievable rate $R$, it follows that $R \leq I(A,X;Y,S_{d})-I(X;S_{e}|A)$ and $0 \leq I(X;Y,S_{d}|A)-I(X;S_{e}|A)$ for some $P_{A}(a)P_{S_{e},S_{d}|A}(s_{e},s_{d}|a)P_{X|A,S_{e}}(x|a,s_{e})P_{Y|X,S_{e},S_{d}}(y|x,s_{e},s_{d})$. From the problem formulation, we can write the joint probability mass function,
\begin{align}
&P_{M,A^{n},S_{e}^{n},S_{d}^{n},X^{n},Y^{n},\hat{M},\hat{X}^{n}}(m,a^{n},s_{e}^{n},s_{d}^{n},x^{n},y^{n},\hat{m},\hat{x}^{n}) \nonumber \\
& = \frac{1_{\{f_{a}^{(n)}(m)=a^{n},f^{(n)}(m,s_{e}^{n})=x^{n},g_{x}^{(n)}(y^{n},s_{d}^{n})=\hat{x}^{n},g_{m}^{(n)}(y^{n},s_{d}^{n})=\hat{m}\}}}{|\mathcal{M}^{(n)}|}\cdot \prod_{i=1}^{n}P_{S_{e},S_{d}|A}(s_{e,i},s_{d,i}|a_{i})P_{Y|X,S_{e},S_{d}}(y_{i}|x_{i},s_{e,i},s_{d,i}),\label{eq:pmf1}
\end{align}
where $M$ is chosen uniformly at random from the set $\mathcal{M}^{(n)}=\{1,2,\ldots,|\mathcal{M}^{(n)}|\}$.
\begin{lemma}\label{lemma:markovchain1}
For the joint pmf in \eqref{eq:pmf1}$, S_{d,i}-(A_{i},S_{e,i},X_{i})-(M,A^{n\setminus i},S_{e,i+1}^{n},X^{n},Y^{i-1},S_{d}^{i-1})$ forms a Markov chain.
\end{lemma}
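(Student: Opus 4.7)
The plan is to compute $P(S_{d,i}\mid V, U)$ directly from the joint pmf \eqref{eq:pmf1}, where I write $V = (A_i, S_{e,i}, X_i)$ and $U = (M, A^{n\setminus i}, S_{e,i+1}^n, X^n, Y^{i-1}, S_d^{i-1})$, and to show that the result depends only on $(A_i, S_{e,i}) \subseteq V$, which is equivalent to the claimed Markov chain. The key structural fact to exploit is that both the state channel and the output channel are memoryless, so $P(s_e^n, s_d^n\mid a^n) = \prod_{j=1}^n P(s_{e,j}, s_{d,j}\mid a_j)$ and $P(y^n\mid x^n, s_e^n, s_d^n) = \prod_{j=1}^n P(y_j\mid x_j, s_{e,j}, s_{d,j})$, so that $S_{d,i}$ appears in a single product factor of the joint law.

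First, I would marginalize \eqref{eq:pmf1} over the variables not in $V \cup U \cup \{S_{d,i}\}$: the decoder outputs $\hat M$ and $\hat X^n$ (eliminated trivially by their indicator functions), the future channel outputs $Y_i^n$ (via $\sum_{y_j} P(y_j\mid x_j, s_{e,j}, s_{d,j}) = 1$), the future decoder states $S_{d,i+1}^n$ (via $\sum_{s_{d,j}} P(s_{e,j}, s_{d,j}\mid a_j) = P(s_{e,j}\mid a_j)$ for $j > i$), and finally the past encoder states $S_e^{i-1}$, where the sum is constrained by the deterministic encoding $x^n = f^{(n)}(m, s_e^n)$. After these reductions the joint factors as
\begin{equation*}
    P(S_{d,i} = s_{d,i}, V, U) = \frac{1}{|\mathcal{M}^{(n)}|}\cdot P(s_{e,i}, s_{d,i}\mid a_i)\cdot \Psi\bigl(m, a^{n\setminus i}, s_{e,i}^n, x^n, y^{i-1}, s_d^{i-1}\bigr),
\end{equation*}
where $\Psi$ bundles all remaining terms from indices $j \neq i$ together with the deterministic constraints on $a^n$ and $x^n$. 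The crucial property is that $\Psi$ does not depend on $s_{d,i}$.

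Second, normalizing to form the conditional yields
\begin{equation*}
    P(S_{d,i} = s_{d,i}\mid V, U) = \frac{P(s_{e,i}, s_{d,i}\mid a_i)}{\sum_{s'} P(s_{e,i}, s'\mid a_i)} = \frac{P(s_{e,i}, s_{d,i}\mid a_i)}{P(s_{e,i}\mid a_i)} = P(s_{d,i}\mid a_i, s_{e,i}),
\end{equation*}
which depends only on $(A_i, S_{e,i})\subseteq V$; marginalizing $U$ on both sides then gives $P(S_{d,i}\mid V, U) = P(S_{d,i}\mid V)$, establishing the Markov chain. The main obstacle I anticipate is handling the sum over $S_e^{i-1}$ subject to the deterministic encoder constraint, which has no clean closed form in general; however, since that contribution is $s_{d,i}$-free, it is absorbed into $\Psi$ and cancels in the normalization, so it need not be evaluated explicitly.
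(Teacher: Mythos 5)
Your proof is correct, but it takes a genuinely different route from the paper. The paper establishes the lemma via the undirected-graph (graphical model) separation criterion: it forms the undirected graph associated with the factorization of the joint pmf in \eqref{eq:pmf1}, groups the variables into sets $\mathcal{U}$, $\mathcal{V}$, $\mathcal{W}$ (with $\mathcal{W}=(S_{d,i},Y_i)$ and a conditioning set that also contains $S_e^{i-1}$ and $S_{d,i+1}^n$), and observes that every path from $\mathcal{U}$ to $\mathcal{W}$ passes through $\mathcal{V}$, which by the global Markov property of the factorization yields a statement slightly stronger than the lemma. You instead compute $P(S_{d,i}\mid V,U)$ directly by marginalizing $\hat M,\hat X^n, Y_i^n, S_{d,i+1}^n, S_e^{i-1}$ in the right order, isolating the single factor $P_{S_{e,i},S_{d,i}\mid A_i}$ that carries the $s_{d,i}$-dependence, and letting the $s_{d,i}$-free remainder $\Psi$ cancel in the normalization; your key observation that the awkward constrained sum over $S_e^{i-1}$ need not be evaluated because it sits entirely inside $\Psi$ is exactly what makes this work. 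Your argument is more elementary and self-contained (no appeal to the factorization-implies-global-Markov machinery of Pearl), at the cost of being longer and yielding only the stated chain rather than the stronger one with $Y_i$ included; the graphical route generalizes more cheaply when one needs several such chains from the same pmf, as the paper does elsewhere. One cosmetic point: your $\Psi$ also depends on $a_i$ (through the indicator $1_{\{f_a^{(n)}(m)=a^n\}}$) and on $s_{e,i}$, so its argument list should include them, but since $a_i$ and $s_{e,i}$ lie in the conditioning set this does not affect the cancellation or the conclusion.
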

\begin{proof}
From \eqref{eq:pmf1}, we use the undirected graph as a tool to derive the Markov chain \cite{PermSteinWeiss}, \cite{Pearl}. Let
\begin{align*}
\mathcal{U}& \triangleq (M,A^{n\setminus i},S^{n\setminus i}_{e},X^{n},Y^{i-1},S^{n\setminus i}_{d}) \\
\mathcal{V} & \triangleq (A_{i},S_{e,i},X_{i})\\
\mathcal{W} & \triangleq (S_{d,i},Y_{i})
\end{align*}
We can draw the undirected graph associated with the marginal pmf derived from the joint pmf in \eqref{eq:pmf1} in Fig. \ref{fig:markovchain}.

 \begin{figure}[]
    \centering
    \psfrag{M}[][][0.7]{$M$}
    \psfrag{A^i-1}[][][0.7]{$A^{i-1}$}
    \psfrag{Ai}[][][0.7]{$A_{i}$}
    \psfrag{A_i+1^n}[][][0.7]{$A_{i+1}^{n}$}
    \psfrag{Se^i-1}[][][0.7]{$S_{e}^{i-1}$}
    \psfrag{Sei}[][][0.7]{$S_{e,i}$}
    \psfrag{Se_i+1^n}[][][0.7]{$S_{e,i+1}^{n}$}
    \psfrag{Sd^i-1}[][][0.7]{$S_{d}^{i-1}$}
    \psfrag{Sdi}[][][0.7]{$S_{d,i}$}
    \psfrag{Sd_i+1^n}[][][0.7]{$S_{d,i+1}^{n}$}
    \psfrag{X^i-1}[][][0.7]{$X^{i-1}$}
    \psfrag{Xi}[][][0.7]{$X_{i}$}
    \psfrag{X_i+1^n}[][][0.7]{$X_{i+1}^{n}$}
    \psfrag{Yi}[][][0.7]{$Y_{i}$}
    \psfrag{Y^i-1}[][][0.7]{$Y^{i-1}$}
    \includegraphics[width=8.5cm]{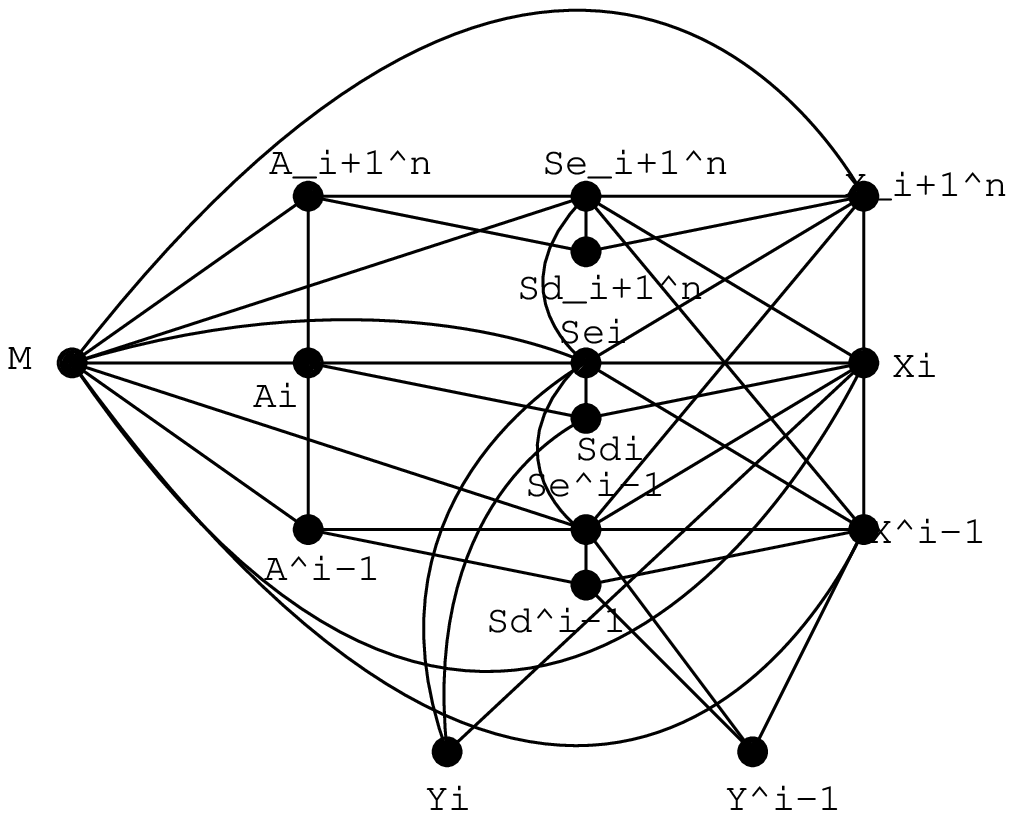}
    \caption{Graphical proof of the Markov chain $S_{d,i}-(A_{i},S_{e,i},X_{i})-(M,A^{n\setminus i},S_{e,i+1}^{n},X^{n},Y^{i-1},S_{d}^{i-1})$ with the marginal pmf derived from the joint pmf in \eqref{eq:pmf1} by summing out $(Y_{i+1}^{n},\hat{M},\hat{X}^{n}$).}\label{fig:markovchain}
\end{figure}

Since all paths in the graph from a node in $\mathcal{U}$ to a node in $\mathcal{W}$ pass through a node in $\mathcal{V}$, we have that $\mathcal{W}-\mathcal{V}-\mathcal{U}$ forms a Markov chain. Therefore, $S_{d,i}-(A_{i},S_{e,i},X_{i})-(M,A^{n\setminus i},S_{e,i+1}^{n},X^{n},Y^{i-1},S_{d}^{i-1})$ forms a Markov chain.
\end{proof}
Let us assume that a specific sequence of $(|\mathcal{M}^{(n)}|,n)$ codes exists such that the average error probabilities $P_{m,e}^{(n)}= \delta_{n}' \leq \delta_{n}$ and $P_{x,e}^{(n)}= \delta_{n}' \leq \delta_{n}$, and $ \log|\mathcal{M}^{(n)}| = n(R-\delta_{n}') \geq n(R-\delta_{n})$, with $ \lim_{n \rightarrow \infty} \delta_{n}= 0$.
Then standard properties of the entropy function give
\begin{align*}
  n(R-\delta_{n}) &\leq \log|\mathcal{M}^{(n)}|= H(M) \\
   &= H(M)-H(X^{n},M|Y^{n},S_{d}^{n})+ H(M|Y^{n},S_{d}^{n})+H(X^{n}|M,Y^{n},S_{d}^{n})\\
   &\leq H(M)-H(X^{n},M|Y^{n},S_{d}^{n})+ H(M|Y^{n},S_{d}^{n})+H(X^{n}|Y^{n},S_{d}^{n})
\end{align*}
Consider the last two terms in the above inequality. Similarly to \cite{Sumszyk2009}, by Fano's inequality, we get
\begin{align*}
H(M|Y^{n},S_{d}^{n}) &\leq h(\delta_{n}') + \delta_{n}'\cdot \log(2^{n(R-\delta_{n}')}-1) = n\epsilon^{(m)}_{n}, \\
H(X^{n}|Y^{n},S_{d}^{n}) &\leq h(\delta_{n}') + \delta_{n}'\cdot \log(|\mathcal{X}|^{n}-1) = n\epsilon^{(x)}_{n},
\end{align*}
where $h(\cdot)$ is the binary entropy function, and $\epsilon^{(m)}_{n}\rightarrow 0, \epsilon^{(x)}_{n} \rightarrow 0$ as $ \delta_{n} \rightarrow 0$.

Let $n\epsilon^{(m)}_{n} + n\epsilon^{(x)}_{n} \triangleq n\epsilon_{n}$, where $\epsilon_{n}$ satisfies $\lim_{n\rightarrow \infty}\epsilon_{n} =0$. Now we continue the chain of inequalities and get
\begin{align*}
n(R-\delta_{n})  &\leq H(M)-H(X^{n},M|Y^{n},S_{d}^{n})+n\epsilon_{n}\\
    &= H(M,S^{n}_{e})-H(S^{n}_{e}|M)-H(X^{n},M|Y^{n},S_{d}^{n})+n\epsilon_{n}\\
    &\overset{(a)}{=} H(M,S^{n}_{e},X^{n})-H(S^{n}_{e}|M,A^{n}) -H(X^{n},M|Y^{n},S_{d}^{n})+n\epsilon_{n}\\
    &\overset{(b)}{=} H(M,S^{n}_{e},X^{n})-H(S^{n}_{e}|A^{n})-H(X^{n},M|Y^{n},S_{d}^{n})+n\epsilon_{n}\\
    &= I(X^{n},M;Y^{n},S_{d}^{n})-H(S^{n}_{e}|A^{n})+H(S^{n}_{e}|X^{n},M)+n\epsilon_{n}\\
    &\overset{(c)}{=} I(X^{n},M;Y^{n},S_{d}^{n})-H(S^{n}_{e}|A^{n})+H(S^{n}_{e}|X^{n},M,A^{n})+n\epsilon_{n}\\
    &= I(X^{n},M;Y^{n},S_{d}^{n})- I(X^{n},M;S^{n}_{e}|A^{n})+n\epsilon_{n},
\end{align*}
where $(a)$ follows from the fact that $X^{n}=f^{(n)}(M,S^{n}_{e})$ and $A^{n}=f^{(n)}_{a}(M)$, $(b)$ holds since $S_{e}^{n}$ is independent of $M$ given $A^{n}$, and $(c)$ from $A^{n}=f_{a}^{(n)}(M)$.

Continuing the chain of inequalities, we get
\begin{align*}
& n(R-\delta_{n}-\epsilon_{n})\\
& \leq\sum_{i=1}^{n} I(X^{n},M;Y_{i},S_{d,i}|Y^{i-1},S_{d}^{i-1})-I(X^{n},M;S_{e,i}|S_{e,i+1}^{n},A^{n})\\
& = \sum_{i=1}^{n} [I(X^{n},M,S_{e,i+1}^{n},A^{n};Y_{i},S_{d,i}|Y^{i-1},S_{d}^{i-1})  -I(S_{e,i+1}^{n},A^{n};Y_{i},S_{d,i}|X^{n},M,Y^{i-1},S_{d}^{i-1})]\\
& \qquad -[I(X^{n},M,Y^{i-1},S_{d}^{i-1};S_{e,i}|S_{e,i+1}^{n},A^{n}) -I(Y^{i-1},S_{d}^{i-1};S_{e,i}|X^{n},M,S_{e,i+1}^{n},A^{n})]\\
& \overset{(a)}{=} \sum_{i=1}^{n} I(X^{n},M,S_{e,i+1}^{n},A^{n};Y_{i},S_{d,i}|Y^{i-1},S_{d}^{i-1})- I(X^{n},M,Y^{i-1},S_{d}^{i-1};S_{e,i}|S_{e,i+1}^{n},A^{n})\\
& = \sum_{i=1}^{n} [H(Y_{i},S_{d,i}|Y^{i-1},S_{d}^{i-1})-H(Y_{i},S_{d,i}|Y^{i-1},S_{d}^{i-1},X^{n},M,S_{e,i+1}^{n},A^{n})]\\
& \qquad -[H(S_{e,i}|S_{e,i+1}^{n},A^{n}) -H(S_{e,i}|S_{e,i+1}^{n},A^{n},X^{n},M,Y^{i-1},S_{d}^{i-1})]\\
& \overset{(b)}{\leq} \sum_{i=1}^{n}H(Y_{i},S_{d,i})-H(Y_{i},S_{d,i}|Z_{i},A_{i})-H(S_{e,i}|A_{i}) +H(S_{e,i}|Z_{i},A_{i})\\
& = \sum_{i=1}^{n}I(A_{i},Z_{i};Y_{i},S_{d,i})-I(Z_{i};S_{e,i}|A_{i})\\
& \overset{(c)}{=} \sum_{i=1}^{n}I(A_{i},Z_{i},X_{i};Y_{i},S_{d,i})-I(Z_{i},X_{i};S_{e,i}|A_{i})\\
& = \sum_{i=1}^{n}I(A_{i};Y_{i},S_{d,i})+I(Z_{i},X_{i};Y_{i},S_{d,i}|A_{i})-I(Z_{i},X_{i};S_{e,i}|A_{i}),
\end{align*}
where
$(a)$ follows from the Csisz\'{a}r's sum identity in \cite{CsiszarBook}, $\sum_{i=1}^{n} I(S_{e,i+1}^{n},A^{n};Y_{i},S_{d,i}|X^{n},M,Y^{i-1},S_{d}^{i-1})-I(Y^{i-1},S_{d}^{i-1};S_{e,i}|X^{n},M,S_{e,i+1}^{n},A^{n})=0$ and additionally using $A^{n}=f^{(n)}_{a}(M)$,  $(b)$ follows from the fact that $(S_{e,i+1}^{n},A^{n\setminus i})-A_{i}-S_{e,i}$ forms a Markov chain and by defining $Z_{i}\triangleq (M,A^{n\setminus i},S_{e,i+1}^{n},X^{n},Y^{i-1},S_{d}^{i-1})$, and $(c)$ follows from the definition of $Z_{i}$.

Consider the sum of the last two terms,
\begin{align}
&  \sum_{i=1}^{n}I(Z_{i},X_{i};Y_{i},S_{d,i}|A_{i})-I(Z_{i},X_{i};S_{e,i}|A_{i})\nonumber \\
& \overset{(a)}{=} \sum_{i=1}^{n}I(Z_{i},X_{i};Y_{i},S_{d,i}|A_{i},S_{e,i})-I(Z_{i},X_{i};S_{e,i}|A_{i},Y_{i},S_{d,i})\nonumber\\
& = \sum_{i=1}^{n} H(Y_{i},S_{d,i}|A_{i},S_{e,i})-H(Y_{i},S_{d,i}|A_{i},S_{e,i},Z_{i},X_{i}) -I(X_{i};S_{e,i}|A_{i},Y_{i},S_{d,i})-I(Z_{i};S_{e,i}|A_{i},Y_{i},S_{d,i},X_{i})\nonumber\\
& \leq \sum_{i=1}^{n} H(Y_{i},S_{d,i}|A_{i},S_{e,i})-H(S_{d,i}|A_{i},S_{e,i},Z_{i},X_{i}) -H(Y_{i}|A_{i},S_{e,i},Z_{i},X_{i},S_{d,i})-I(X_{i};S_{e,i}|A_{i},Y_{i},S_{d,i})\nonumber\\
& \overset{(b)}{=} \sum_{i=1}^{n} H(Y_{i},S_{d,i}|A_{i},S_{e,i})-H(S_{d,i}|A_{i},S_{e,i},X_{i})-H(Y_{i}|A_{i},S_{e,i},X_{i},S_{d,i}) -I(X_{i};S_{e,i}|A_{i},Y_{i},S_{d,i})\nonumber\\
& = \sum_{i=1}^{n} I(Y_{i},S_{d,i};X_{i}|A_{i},S_{e,i})-I(X_{i};S_{e,i}|A_{i},Y_{i},S_{d,i})\nonumber\\
& \overset{(c)}{=} \sum_{i=1}^{n} I(X_{i};Y_{i},S_{d,i}|A_{i})-I(X_{i};S_{e,i}|A_{i}), \label{eq:duplicate}
\end{align}
where $(a)$ follows by adding and subtracting the term $\sum_{i=1}^{n}I(Z_{i},X_{i};Y_{i},S_{d,i},S_{e,i}|A_{i})$, $(b)$ follows from the memoryless property of the channel where $(Z_{i},A_{i})-(X_{i},S_{e,i},S_{d,i})-Y_{i}$ forms the Markov chain and also the Markov chain $S_{d,i}-(A_{i},S_{e,i},X_{i})-Z_{i}$ (see Lemma \ref{lemma:markovchain1}), and $(c)$ follows by adding and subtracting the term $\sum_{i=1}^{n}I(Y_{i},S_{e,i},S_{d,i};X_{i}|A_{i})$.
Finally, we get
\begin{align}
n(R-\delta_{n}-\epsilon_{n}) &\leq \sum_{i=1}^{n} I(A_{i};Y_{i},S_{d,i})+I(X_{i};Y_{i},S_{d,i}|A_{i})-I(X_{i};S_{e,i}|A_{i}) \label{eq:converseR}
\end{align}

Next we prove the constraint which does not involve rate of the communication. It can be considered as the restriction imposed on the set of input distribution in a similar flavor as the dependence balance bound in \cite{Hekstra1989}. From the standard properties of the entropy function, we observe that
\begin{align*}
  0 &\leq H(M|A^{n}) \\
  &= H(M|A^{n})-H(X^{n},M|Y^{n},S_{d}^{n},A^{n})+ H(M|Y^{n},S_{d}^{n},A^{n}) +H(X^{n}|M,Y^{n},S_{d}^{n},A^{n})\\
  &\leq H(M|A^{n})-H(X^{n},M|Y^{n},S_{d}^{n},A^{n})+ H(M|Y^{n},S_{d}^{n}) +H(X^{n}|Y^{n},S_{d}^{n})
\end{align*}
Again, consider the last two terms in the above inequality. By Fano's inequality, we get
\begin{align*}
H(M|Y^{n},S_{d}^{n}) &\leq h(\delta_{n}') + \delta_{n}'\cdot \log(2^{n(R-\delta_{n}')}-1) = n\epsilon^{(m)}_{n}, \\
H(X^{n}|Y^{n},S_{d}^{n}) &\leq h(\delta_{n}') + \delta_{n}'\cdot \log(|\mathcal{X}|^{n}-1) = n\epsilon^{(x)}_{n}.
\end{align*}
Let $n\epsilon^{(m)}_{n} + n\epsilon^{(x)}_{n} \triangleq n\epsilon_{n}$, where $\epsilon_{n}$ satisfies $\lim_{n\rightarrow \infty}\epsilon_{n} =0$. Now we continue the chain of inequalities and get
\begin{align*}
-n\epsilon_{n}  &\leq H(M|A^{n})-H(X^{n},M|Y^{n},S_{d}^{n},A^{n}) \\
    &= H(M,S^{n}_{e}|A^{n})-H(S^{n}_{e}|M,A^{n})-H(X^{n},M|Y^{n},S_{d}^{n},A^{n})\\
    &\overset{(a)}{=} H(M,S^{n}_{e},X^{n}|A^{n})-H(S^{n}_{e}|M,A^{n}) -H(X^{n},M|Y^{n},S_{d}^{n},A^{n})\\
    &\overset{(b)}{=} H(M,S^{n}_{e},X^{n}|A^{n})-H(S^{n}_{e}|A^{n})-H(X^{n},M|Y^{n},S_{d}^{n},A^{n})\\
    &= I(X^{n},M;Y^{n},S_{d}^{n}|A^{n})-I(X^{n},M;S^{n}_{e}|A^{n})\\
    & = \sum_{i=1}^{n} I(X^{n},M;Y_{i},S_{d,i}|Y^{i-1},S_{d}^{i-1},A^{n})-I(X^{n},M;S_{e,i}|S_{e,i+1}^{n},A^{n})\\
& \overset{(c)}{\leq} \sum_{i=1}^{n} H(Y_{i},S_{d,i}|Y^{i-1},S_{d}^{i-1},A_{i})- H(Y_{i},S_{d,i}|Y^{i-1},S_{d}^{i-1},A_{i},X^{n},M)-I(X^{n},M;S_{e,i}|S_{e,i+1}^{n},A^{n})\\
& = \sum_{i=1}^{n} I(X^{n},M;Y_{i},S_{d,i}|Y^{i-1},S_{d}^{i-1},A_{i})-I(X^{n},M;S_{e,i}|S_{e,i+1}^{n},A^{n})\\
& \overset{(d)}{=}\sum_{i=1}^{n} [I(X^{n},M,S_{e,i+1}^{n},A^{n\setminus i};Y_{i},S_{d,i}|Y^{i-1},S_{d}^{i-1},A_{i}) -I(S_{e,i+1}^{n},A^{n\setminus i};Y_{i},S_{d,i}|X^{n},M,Y^{i-1},S_{d}^{i-1},A_{i})]\\
& \qquad -[I(X^{n},M,Y^{i-1},S_{d}^{i-1};S_{e,i}|S_{e,i+1}^{n},A^{n}) -I(Y^{i-1},S_{d}^{i-1};S_{e,i}|X^{n},M,S_{e,i+1}^{n},A^{n})] \\ & \qquad- I(A_{i};Y_{i},S_{d,i}|X^{n},M,Y^{i-1},S_{d}^{i-1})
\end{align*}
where $(a)$ follows from the fact that $X^{n}=f^{(n)}(M,S^{n}_{e})$, $(b)$ holds since $S_{e}^{n}$ is independent of $M$ given $A^{n}$, $(c)$ holds since $A^{n} = f_{a}(M)$, and $(d)$ follows from the fact that the last term is zero since $A^{n}=f_{a}^{(n)}(M)$.

Continuing the chain of inequalities, we have
\begin{align}
 -n\epsilon_{n}
 &\leq \sum_{i=1}^{n} [I(X^{n},M,S_{e,i+1}^{n},A^{n\setminus i};Y_{i},S_{d,i}|Y^{i-1},S_{d}^{i-1},A_{i}) -I(S_{e,i+1}^{n},A^{n};Y_{i},S_{d,i}|X^{n},M,Y^{i-1},S_{d}^{i-1})]\nonumber\\
& \qquad -[I(X^{n},M,Y^{i-1},S_{d}^{i-1};S_{e,i}|S_{e,i+1}^{n},A^{n})  -I(Y^{i-1},S_{d}^{i-1};S_{e,i}|X^{n},M,S_{e,i+1}^{n},A^{n})]\nonumber \\
& \overset{(a)}{=} \sum_{i=1}^{n} I(X^{n},M,S_{e,i+1}^{n},A^{n\setminus i};Y_{i},S_{d,i}|Y^{i-1},S_{d}^{i-1},A_{i})- I(X^{n},M,Y^{i-1},S_{d}^{i-1};S_{e,i}|S_{e,i+1}^{n},A^{n})\nonumber\\
& = \sum_{i=1}^{n} [H(Y_{i},S_{d,i}|Y^{i-1},S_{d}^{i-1},A_{i})-H(Y_{i},S_{d,i}|Y^{i-1},S_{d}^{i-1},X^{n},M,S_{e,i+1}^{n},A^{n})]\nonumber\\
& \qquad -[H(S_{e,i}|S_{e,i+1}^{n},A^{n}) -H(S_{e,i}|S_{e,i+1}^{n},A^{n},X^{n},M,Y^{i-1},S_{d}^{i-1})]\nonumber\\
& \overset{(b)}{\leq} \sum_{i=1}^{n}H(Y_{i},S_{d,i}|A_{i})-H(Y_{i},S_{d,i}|Z_{i},A_{i})-H(S_{e,i}|A_{i}) +H(S_{e,i}|Z_{i},A_{i})\nonumber\\
& \overset{(c)}{=} \sum_{i=1}^{n}I(Z_{i},X_{i};Y_{i},S_{d,i}|A_{i})-I(Z_{i},X_{i};S_{e,i}|A_{i})\nonumber\\
& \overset{(d)}{\leq} \sum_{i=1}^{n} I(X_{i};Y_{i},S_{d,i}|A_{i})-I(X_{i};S_{e,i}|A_{i}),\label{eq:converse0}
\end{align}
where 
$(a)$ follows by the Csisz\'{a}r's sum identity, $\sum_{i=1}^{n} I(S_{e,i+1}^{n},A^{n};Y_{i}|X^{n},M,Y^{i-1})-I(Y^{i-1};S_{e,i}|X^{n},M,S_{e,i+1}^{n},A^{n})=0$ and $A^{n}=f_{a}^{(n)}(M)$,
$(b)$ holds by using the Markov chain $(S_{e,i+1}^{n},A^{n\setminus i})-A_{i}-S_{e,i}$ and by defining $Z_{i} \triangleq (M,A^{n\setminus i},S_{e,i+1}^{n},X^{n},Y^{i-1},S_{d}^{i-1})$,
$(c)$ follows from the definition of $Z_{i}$,
and $(d)$ follows from the same steps as in obtaining \eqref{eq:duplicate}. 

Let $Q$ be a random variable uniformly distributed over $\{1, \ldots, n\}$ and independent of $(M,A^n,X^n,S_{e}^n,S_{d}^n,Y^{n})$,
we can rewrite \eqref{eq:converseR} and \eqref{eq:converse0} as
\begin{align*}
R &\leq \frac{1}{n}\sum_{i=1}^{n} I(A_{i},X_{i};Y_{i},S_{d,i}|Q=i)-I(X_{i};S_{e,i}|A_{i},Q=i)+\delta_{n}+\epsilon_{n}\\
  &=  I(A_{Q},X_{Q};Y_{Q},S_{d,Q}|Q)-I(X_{Q};S_{e,Q}|A_{Q},Q)+\delta_{n}+\epsilon_{n}
\end{align*}
and
\begin{align*}
0 &\leq \frac{1}{n}\sum_{i=1}^{n} I(X_{i};Y_{i},S_{d,i}|A_{i},Q=i)-I(X_{i};S_{e,i}|A_{i},Q=i)+\epsilon_{n}\\
  &= I(X_{Q};Y_{Q},S_{d,Q}|A_{Q},Q)-I(X_{Q};S_{e,Q}|A_{Q},Q)+\epsilon_{n}.
\end{align*}

Now since we have that $P_{S_{e,Q},S_{d,Q}|A_{Q}}=P_{S_{e},S_{d}|A}$, $P_{Y_{Q}|X_{Q},S_{e,Q},S_{d,Q}}=P_{Y|X,S_{e},S_{d}}$, and $A_{Q}-(X_{Q},S_{e,Q},S_{d,Q})-Y_{Q}$ forms a Markov chain, we identify $A \triangleq A_{Q}$, $S_{e} \triangleq S_{e,Q}$, $S_{d} \triangleq S_{d,Q}$, $X \triangleq X_{Q}$, and $Y \triangleq Y_{Q}$ to finally obtain
\begin{align*}
    R &\leq I(A,X;Y,S_{d}|Q)-I(X;S_{e}|A,Q)+\delta_{n}+\epsilon_{n}\\
    \mbox{and} \ \ 0 &\leq I(X;Y,S_{d}|A,Q)-I(X;S_{e}|A,Q)+\epsilon_{n},
\end{align*}
for some joint distribution
\begin{align}
&P_{Q}(q)P_{A|Q}(a|q)P_{S_{e},S_{d}|A}(s_{e},s_{d}|a)P_{X|A,S_{e},Q}(x|a,s_{e},q) P_{Y|X,S_{e},S_{d}}(y|x,s_{e},s_{d}). \label{eq:pdf2}
\end{align}
\begin{lemma}\label{lemma:markovchain2}
From the joint distribution in \eqref{eq:pdf2}, $(Y,S_{d})-(X,A,S_{e})-Q$ and $S_{e}-A-Q$ form Markov chains.
\end{lemma}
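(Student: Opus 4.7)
The plan is to verify both Markov chains directly from the given factorization in \eqref{eq:pdf2}, since each Markov statement is a purely probabilistic consequence of how the joint density factors.

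For the chain $S_{e}-A-Q$, I would first marginalize out $S_{d}$, $X$, and $Y$ from \eqref{eq:pdf2}. Using $\sum_{s_{d}}P_{S_{e},S_{d}|A}(s_{e},s_{d}|a)=P_{S_{e}|A}(s_{e}|a)$, and then summing out $X$ and $Y$ (which only changes a factor that does not involve $s_{e}$), I obtain $P_{Q,A,S_{e}}(q,a,s_{e})=P_{Q}(q)P_{A|Q}(a|q)P_{S_{e}|A}(s_{e}|a)$. Dividing by $P_{Q,A}(q,a)=P_{Q}(q)P_{A|Q}(a|q)$ gives $P_{S_{e}|A,Q}(s_{e}|a,q)=P_{S_{e}|A}(s_{e}|a)$, which is the desired Markov property.

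For the chain $(Y,S_{d})-(X,A,S_{e})-Q$, I would compute $P_{Y,S_{d}|X,A,S_{e},Q}(y,s_{d}|x,a,s_{e},q)$ directly from \eqref{eq:pdf2}. Writing the numerator $P_{Q,A,S_{e},S_{d},X,Y}$ as given and the denominator $P_{Q,A,S_{e},X}=P_{Q}(q)P_{A|Q}(a|q)P_{S_{e}|A}(s_{e}|a)P_{X|A,S_{e},Q}(x|a,s_{e},q)$, the ratio collapses to $P_{S_{d}|A,S_{e}}(s_{d}|a,s_{e})P_{Y|X,S_{e},S_{d}}(y|x,s_{e},s_{d})$, which is independent of $q$. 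Repeating the same marginalization-and-division to compute $P_{Y,S_{d}|X,A,S_{e}}$ (averaging over $q$ in both numerator and denominator) yields the same expression, so the two conditional distributions agree and the Markov property holds.

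Alternatively, a graph-based argument analogous to the one used in Lemma \ref{lemma:markovchain1} could replace the explicit algebra: drawing the undirected dependency graph implied by \eqref{eq:pdf2}, every path from $Q$ to $\{Y,S_{d}\}$ passes through the triple $(X,A,S_{e})$, and every path from $Q$ to $S_{e}$ passes through $A$, which by the Markov-on-a-graph criterion immediately yields both claimed chains. The argument involves no real obstacle; the only point requiring care is to notice that although $X$ depends on $Q$ through $P_{X|A,S_{e},Q}$, conditioning on $(X,A,S_{e})$ blocks this dependence from propagating to $(Y,S_{d})$ because $Y$ depends on $Q$ only through $(X,S_{e},S_{d})$ and $S_{d}$ depends on $Q$ only through $A$.
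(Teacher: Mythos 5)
Your proof is correct, but it takes a genuinely different route from the paper's. You verify both chains by direct computation: marginalizing \eqref{eq:pdf2} and forming the relevant conditional distributions, then observing that $P_{S_{e}|A,Q}=P_{S_{e}|A}$ and that $P_{Y,S_{d}|X,A,S_{e},Q}$ collapses to $P_{S_{d}|A,S_{e}}\,P_{Y|X,S_{e},S_{d}}$, which is free of $q$. The paper instead reads off three elementary chains from the factorization --- $Q-A-(S_{e},S_{d})$, $X-(A,S_{e},Q)-S_{d}$, and $(A,Q)-(X,S_{e},S_{d})-Y$ --- and combines them using the weak-union and contraction properties of conditional independence (the graphoid axioms from Pearl), never writing out a marginalization sum. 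Your direct calculation is more elementary and self-contained, and it makes visible exactly why $q$ drops out of the conditional; the paper's axiomatic route is shorter on algebra and reuses machinery that scales to messier factorizations (it is also the natural companion to the undirected-graph argument of Lemma~\ref{lemma:markovchain1}, which you correctly identify as a third viable option). One small presentational point: in your derivation of $S_{e}-A-Q$ you describe summing out $S_{d}$ before $X$ and $Y$, but the factor $P_{Y|X,S_{e},S_{d}}$ depends on $s_{d}$, so the clean order is to sum $y$, then $x$, then $s_{d}$; since the triple sum is order-independent this does not affect correctness, but the intermediate expressions you describe only simplify as stated in that order.
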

\begin{proof}
We use a (partial) list of properties satisfied by the Markov chain (the conditional independence relation) in \cite{Pearl}. As a quick reference, we restate it in the following. Let $W,X,Y,Z$ be the random variables, and $``\Longrightarrow"$ refer to ``imply",
 \begin{align*}
 \mbox{weak union}&: X-Z-(W,Y) \Longrightarrow  X-(Z,W)-Y\\
 \mbox{contraction}&: X-Z-Y \ \mbox{and}\ X-(Z,Y)-W \Longrightarrow  X-Z-(Y,W).
 \end{align*}

From \eqref{eq:pdf2}, the following Markov chains are readily derived.
 \begin{align}
 &Q-A-(S_{e},S_{d}) \label{eq:chain1} \\
 &X-(A,S_{e},Q)-S_{d}\label{eq:chain2} \\
 &(A,Q)-(X,S_{e},S_{d})-Y \label{eq:chain3}
 \end{align}
By the weak union property, we can derive from \eqref{eq:chain1} the Markov chain
$Q-(A,S_{e})-S_{d}$. Then combining it with \eqref{eq:chain2}, by using the contraction property, we get the Markov chain
\begin{equation}\label{eq:chain5}
(X,Q)-(A,S_{e})-S_{d}
\end{equation}

Again using the weak union in \eqref{eq:chain3} and \eqref{eq:chain5}, we get
 \begin{align*}
 &Q-(X,A,S_{e},S_{d})-Y \\
 \mbox{and} \ \ &Q-(X,A,S_{e})-S_{d}
 \end{align*}
Combining these two Markov chains using the contraction property, we finally get the Markov chain $Q-(X,A,S_{e})-(Y,S_{d})$.
\end{proof}

To this end, we note that under any distribution of the form above, we have
\begin{align*}
    I(A,X;Y,S_{d}|Q)-I(X;S_{e}|A,Q) &= I(A,X,S_{e};Y,S_{d}|Q)-I(X,Y,S_{d};S_{e}|A,Q)\\
                                &= H(Y,S_{d}|Q)-H(Y,S_{d}|A,X,S_{e},Q)-H(S_{e}|A,Q) +H(S_{e}|X,Y,S_{d},A,Q)\\
                                & \leq H(Y,S_{d})-H(Y,S_{d}|A,X,S_{e},Q)-H(S_{e}|A,Q)+H(S_{e}|X,Y,S_{d},A)\\
                                & \overset{(*)}{=} H(Y,S_{d})-H(Y,S_{d}|A,X,S_{e})-H(S_{e}|A)+H(S_{e}|X,Y,S_{d},A)\\
                                &= I(A,X,S_{e};Y,S_{d})-I(X,Y;S_{e}|A)\\
                                &=I(A,X;Y,S_{d})-I(X;S_{e}|A),
\end{align*}
and
\begin{align*}
I(X;Y,S_{d}|A,Q)-I(X;S_{e}|A,Q) &= I(X,S_{e};Y,S_{d}|A,Q)-I(X,Y;S_{e}|A,Q)\\
                                &= H(Y,S_{d}|A,Q)-H(Y,S_{d}|A,X,S_{e},Q)-H(S_{e}|A,Q) +H(S_{e}|X,Y,S_{d},A,Q)\\
                                & \leq H(Y,S_{d}|A)-H(Y,S_{d}|A,X,S_{e},Q)-H(S_{e}|A,Q)+H(S_{e}|X,Y,S_{d},A)\\
                                & \overset{(*)}{=} H(Y,S_{d}|A)-H(Y,S_{d}|A,X,S_{e})-H(S_{e}|A)+H(S_{e}|X,Y,S_{d},A)\\
                                &= I(X,S_{e};Y,S_{d}|A)-I(X,Y;S_{e}|A)\\
                                &=I(X;Y,S_{d}|A)-I(X;S_{e}|A),
\end{align*}
where both equalities $(*)$ follows from the Markov chains $(Y,S_{d})-(X,A,S_{e})-Q$ and $S_{e}-A-Q$ (derived from \eqref{eq:pdf2}, see Lemma \ref{lemma:markovchain2}), and the joint distribution of $(A,S_{e},S_{d},X,Y)$ is of the form
\begin{align*}
&\sum_{q \in \mathcal{Q}}P_{Q}(q)P_{A|Q}(a|q)P_{S_{e},S_{d}|A}(s_{e},s_{d}|a)P_{X|A,S_{e},Q}(x|a,s_{e},q)P_{Y|X,S_{e},S_{d}}(y|x,s_{e},s_{d})\\ & \qquad=P_{A}(a)P_{S_{e},S_{d}|A}(s_{e},s_{d}|a)P_{X|A,S_{e}}(x|a,s_{e})P_{Y|X,S_{e},S_{d}}(y|x,s_{e},s_{d}).
\end{align*}

The proof is concluded by taking the limit $n \rightarrow \infty$. \hfill$\blacksquare$
\section{Proof of convexity of the region $\mathcal{R}_{\text{mod}}$ with dummy variable $\tilde{R}$}\label{sec:proof_convex_dummy}
Consider the achievable rate $0 \leq R \leq I(A,X;Y,S_{d})-I(X;S_{e}|A)$ for some $P_{A}(a), P_{X|A,S_{e}}(x|a,s_{e})$ such that $0 \leq I(X;Y,S_{d}|A)-I(X;S_{e}|A)$. We modify it by introducing a dummy variable $\tilde{R}$ which can take either positive or negative value, and we obtain the modified ``region". The modified region $\mathcal{R}_{\text{mod}}$ is the set
\begin{align*}
\mathcal{R}_{\text{mod}} = \{ (R,\tilde{R}): 0 \leq &R \leq I(A,X;Y,S_{d})-I(X;S_{e}|A)\\
&\tilde{R} \leq I(X;Y,S_{d}|A)-I(X;S_{e}|A)\\
\mbox{for some}\ P_{A}(a)&P_{S_{e},S_{d}|A}(s_{e},s_{d}|a)P_{X|A,S_{e}}(x|a,s_{e})P_{Y|X,S_{e},S_{d}}(y|x,s_{e},s_{d}). \}
\end{align*}

We will show that the region above is convex.
Assuming that any two arbitrary points $(R^{1},\tilde{R}^{1})$ and $(R^{2},\tilde{R}^{2}) \in \mathcal{R}_{\text{mod}}$. This implies that there exist distributions
\begin{align*}
P^{(1)}_{A,S_{e},S_{d},X,Y}(a,s_{e},s_{d},x,y) &= P_{A}^{(1)}(a)P_{S_{e},S_{d}|A}(s_{e},s_{d}|a)P_{X|A,S_{e}}^{(1)}(x|a,s_{e})P_{Y|X,S_{e},S_{d}}(y|x,s_{e},s_{d})\\
\mbox{and}\ P^{(2)}_{A,S_{e},S_{d},X,Y}(a,s_{e},s_{d},x,y) &= P_{A}^{(2)}(a)P_{S_{e},S_{d}|A}(s_{e},s_{d}|a)P_{X|A,S_{e}}^{(2)}(x|a,s_{e})P_{Y|X,S_{e},S_{d}}(y|x,s_{e},s_{d})
\end{align*}
such that
\begin{align}
0 \leq &R^{(1)} \leq I^{(1)}(A,X;Y,S_{d})-I^{(1)}(X;S_{e}|A) \nonumber \\
&\tilde{R}^{(1)} \leq I^{(1)}(X;Y,S_{d}|A)-I^{(1)}(X;S_{e}|A)\nonumber \\
\mbox{and}\ 0 \leq &R^{(2)} \leq I^{(2)}(A,X;Y,S_{d})-I^{(2)}(X;S_{e}|A)\nonumber\\
&\tilde{R}^{(2)} \leq I^{(2)}(X;Y,S_{d}|A)-I^{(2)}(X;S_{e}|A), \label{eq:RandR'}
\end{align}
where $I^{(i)}(\cdot)$ denotes the mutual information associated with $P^{(i)}_{A,S_{e},S_{d},X,Y}(a,s_{e},s_{d},x,y)$, $i=1,2$.

Now let $Q$ be an independent random variable taking value from $\{1,2\}$, where $\mathrm{Pr}(Q=1)=1-\mathrm{Pr}(Q=2)=\alpha, 0 \leq \alpha \leq 1$. 
Then we have the joint distribution
\begin{align}
P_{Q,A,S_{e},S_{d},X,Y}(q,a,s_{e},s_{d},x,y) &= P_{Q}(q)P_{A|Q}(a|q)P_{S_{e},S_{d}|A}(s_{e},s_{d}|a)P_{X|A,S_{e},Q}(x|a,s_{e},q)P_{Y|X,S_{e},S_{d}}(y|x,s_{e},s_{d}) \label{eq:joint_pdfQ}
\end{align}
where $P_{A|Q}(a|q) \triangleq P^{(q)}_{A}(a)$ and $P_{X|A,S_{e},Q}(x|a,s_{e},q) \triangleq P^{(q)}_{X|A,S_{e}}(x|a,s_{e})$ for $q=1,2$.

Consider now the marginal distribution (averaged over $Q$)
\begin{align*}
P_{A,S_{e},S_{d},X,Y}(a,s_{e},s_{d},x,y) &= \sum_{q=1,2}P_{Q}(q)P_{A|Q}(a|q)P_{S_{e},S_{d}|A}(s_{e},s_{d}|a)P_{X|A,S_{e},Q}(x|a,s_{e},q)P_{Y|X,S_{e},S_{d}}(y|x,s_{e},s_{d})
\end{align*}
which is associated with the mutual information terms $I(A,X;Y,S_{d})-I(X;S_{e}|A)$ and $I(X;Y,S_{d}|A)-I(X;S_{e}|A)$. It follows that
\begin{align}
&I(A,X;Y,S_{d})-I(X;S_{e}|A) = I(A,X,S_{e};Y,S_{d})-I(X,Y,S_{d};S_{e}|A)\nonumber \\
& \qquad \geq  I(A,X,S_{e};Y,S_{d}|Q)-I(X,Y,S_{d};S_{e}|A,Q)\nonumber \\
& \qquad= \alpha [I^{(1)}(A,X,S_{e};Y,S_{d})-I^{(1)}(X,Y,S_{d};S_{e}|A)] + (1-\alpha)[I^{(2)}(A,X,S_{e};Y,S_{d})-I^{(2)}(X,Y,S_{d};S_{e}|A)]\nonumber \\
& \qquad= \alpha [I^{(1)}(A,X;Y,S_{d})-I^{(1)}(X;S_{e}|A)] + (1-\alpha)[I^{(2)}(A,X;Y,S_{d})-I^{(2)}(X;S_{e}|A)], \label{eq:convex_comb0}
\end{align}
and
\begin{align}
&I(X;Y,S_{d}|A)-I(X;S_{e}|A) = I(X,S_{e};Y,S_{d}|A)-I(X,Y,S_{d};S_{e}|A)\nonumber \\
&\qquad\geq I(X,S_{e};Y,S_{d}|A,Q)-I(X,Y,S_{d};S_{e}|A,Q)\nonumber \\
& \qquad= \alpha [I^{(1)}(X,S_{e};Y,S_{d}|A)-I^{(1)}(X,Y,S_{d};S_{e}|A)] + (1-\alpha)[I^{(2)}(X,S_{e};Y,S_{d}|A)-I^{(2)}(X,Y,S_{d};S_{e}|A)]\nonumber\\
& \qquad= \alpha [I^{(1)}(X;Y,S_{d}|A)-I^{(1)}(X;S_{e}|A)] + (1-\alpha)[I^{(2)}(X;Y,S_{d}|A)-I^{(2)}(X;S_{e}|A)], \label{eq:convex_comb1}
\end{align}
where both inequalities follow from $(Y,S_{d})-(X,A,S_{e})-Q$ and $S_{e}-A-Q$ obtained in Lemma \ref{lemma:markovchain2}.

From \eqref{eq:RandR'}, \eqref{eq:convex_comb0}, and \eqref{eq:convex_comb1}, it follows that there exists a distribution $P_{Q,A,S_{e},S_{d},X,Y}(q,a,s_{e},s_{d},x,y)$ as in \eqref{eq:joint_pdfQ} with marginal factorized as $P_{A}(a)P_{S_{e},S_{d}|A}(s_{e},s_{d}|a)P_{X|A,S_{e}}(x|a,s_{e})P_{Y|X,S_{e},S_{d}}(y|x,s_{e},s_{d})$ such that
\begin{align}
I(A,X;Y,S_{d})-I(X;S_{e}|A) &\geq \alpha R^{(1)} + (1-\alpha) R^{(2)} \geq 0 \nonumber \\
I(X;Y,S_{d}|A)-I(X;S_{e}|A) &\geq \alpha \tilde{R}^{(1)} + (1-\alpha) \tilde{R}^{(2)} \label{eq:convex_comb2}
\end{align}

By the definition of $\mathcal{R}_{\text{mod}}$ and \eqref{eq:convex_comb2}, we have that
\begin{align*}
(\alpha R^{1} + (1-\alpha) R^{2},\tilde{R}^{1} + (1-\alpha) \tilde{R}^{2}) \in \mathcal{R}_{\text{mod}}.
\end{align*}
This implies that any convex combination of points $(R,\tilde{R}) \in
\mathcal{R}_{\text{mod}}$ is also in the set
$\mathcal{R}_{\text{mod}}$, and thus $\mathcal{R}_{\text{mod}}$ is
convex. \hfill$\blacksquare$

\section{Proof of Proposition \ref{eq:C_Se}}\label{sec:proof_C_se}
\subsection{Achievability Proof of Proposition \ref{eq:C_Se}}
Similarly to the previous achievability proof, the proof follows from a standard random coding argument where we use the definition and properties of  $\epsilon$-typicality as in \cite{ElGamalKim}. We use the technique of rate splitting, i.e., the message $M$ of rate $R$ is split into two messages $M_{1}$ and $M_{2}$ of rates $R_{1}$ and $R_{2}$. Two-stage coding is then considered, i.e., a first stage for communicating the identity of the action sequence, and a second stage for communicating the identity of $S_{e}^{n}$ based on the known action sequence.

For given channels with transition probabilities
$P_{S_{e},S_{d}|A}(s_{e},s_{d}|a)$ and $P_{Y|X,S_{e},S_{d}}(y|x,s_{e},s_{d})$ we can assign the joint
probability to any random vector $(A,X,S_{e})$ by
\begin{align*}
P_{A,S_{e},S_{d},X,Y}(a,s_{e},s_{d},x,y)
  & =P_{A}(a)P_{S_{e},S_{d}|A}(s_{e},s_{d}|a)P_{X|A,S_{e}}(x|a,s_{e})P_{Y|X,S_{e},S_{d}}(y|x,s_{e},s_{d})
\end{align*}

\textit{Codebook Generation}: Fix $P_{A}$ and $P_{X|A,S_{e}}$. Let $\mathcal{M}_{1}^{(n)}=\{1,2,\ldots,|\mathcal{M}_{1}^{(n)}|\}$, $\mathcal{M}_{2}^{(n)}=\{1,2,\ldots,|\mathcal{M}_{2}^{(n)}|\}$ and $\mathcal{J}^{(n)}=\{1,2,\ldots,|\mathcal{J}^{(n)}|\}$. For all $m_{1}\in \mathcal{M}_{1}^{(n)}$, generate $a^{n}(m_{1})$ i.i.d. according to $\prod_{i=1}^{n}P_{A}(a_i)$. Then for each $m_{1}$, generate $|\mathcal{M}_{2}^{(n)}||\mathcal{J}^{(n)}|$ codewords $\{\check{s}_{e}^{n}(m_{1},m_{2},j)\}_{m_{2} \in \mathcal{M}_{2}^{(n)}, j \in \mathcal{J}^{(n)}}$ i.i.d. each according to $\prod_{i=1}^{n}P_{S_{e}|A}(\check{s}_{e,i}|a_{i}(m_1))$. Finally, for each  $(a^{n},\check{s}_{e}^{n})$ pair, generate $x^{n}$ i.i.d. according to $\prod_{i=1}^{n}P_{X|S_{e},A}(x_i|\check{s}_{e,i},a_i(m_1))$. Then the codebooks are revealed to the action encoder, the channel encoder, and the decoder. Let $0 < \epsilon_{0} <\epsilon_{1}<\epsilon < 1$.

\textit{Encoding}: Given the message $m=(m_{1},m_{2})\in \mathcal{M}^{(n)}$, the action codeword $a^{n}(m_{1})$ is chosen and the channel state information $(s_{e}^{n},s_{d}^{n})$ is generated as an output of the memoryless channel, $P_{S_{e}^{n},S_{d}^{n}|A^{n}}(s_{e}^{n},s_{d}^{n}|a^{n})=\prod_{i=1}^{n}P_{S_{e},S_{d}|A}(s_{e,i},s_{d,i}|a_{i})$.
The encoder looks for the smallest value of $j\in \mathcal{J}^{(n)}$ such that $\check{s}_{e}^{n}(m_{1},m_{2},j)=s_{e}^{n}$. The channel input sequence is then chosen to be $x^{n}(m_{1},m_{2},j)$. If no such $j$ exists, set $j=1$.

\textit{Decoding}:  Upon receiving $y^{n}$ and $s_{d}^{n}$, the decoder in the first step looks for the smallest $\tilde{m}_{1} \in \mathcal{M}_{1}^{(n)}$ such that $\big(y^{n},s_{d}^{n},a^{n}(\tilde{m}_{1})\big)\in T_{\epsilon}^{(n)}(Y,S_{d},A)$. If successful, then set $\hat{m}_{1}=\tilde{m}_{1}$. Otherwise, set $\hat{m}_{1}=1$. Then, based on the known $a^{n}(\hat{m}_{1})$, the decoder looks for a pair $(\tilde{m}_{2},\tilde{j})$ with the smallest $\tilde{m}_{2} \in \mathcal{M}_{2}^{(n)}$ and $\tilde{j} \in \mathcal{J}^{(n)}$ such that $\big(y^{n},s_{d}^{n}, a^{n}(\hat{m}_{1}),\check{s}_{e}^{n}(\hat{m}_{1},\tilde{m}_{2},\tilde{j}),x^{n}(\hat{m}_{1},\tilde{m}_{2},\tilde{j})\big)\in T_{\epsilon}^{(n)}(Y,S_{d},A,S_{e},X)$.
If there exists such a pair, the decoded message is set to be $\hat{m} =(\hat{m}_{1},\tilde{m}_{2})$, and the decoded state $\hat{s}_{e}^{n}=\check{s}_{e}^{n}(\hat{m}_{1},\tilde{m}_{2},\tilde{j})$. Otherwise, $\hat{m} =(1,1)$ and $\hat{s}_{e}^{n}=\check{s}_{e}^{n}(1,1,1)$.\footnotemark[4]

\footnotetext[4]{We note that although the simultaneous joint typicality decoding gives us different constraints on the individual rate as compared to the sequential two-stage decoding considered in this paper, it gives the same constraints on the total transmission rate in which we are interested.}

\textit{Analysis of Probability of Error}: Due to the symmetry of the random code construction, the error probability does not depend on which message was sent. Assuming that $M=(M_{1},M_{2})$ and $J$ were sent and chosen at the encoder. We define the error events as follows.
    \begin{align*}
      &\mathcal{E}_{1} = \{A^{n}(M_{1})\notin
    T_{\epsilon_{0}}^{(n)}(A)\} \\
      &\mathcal{E}_{2} = \big\{\big(S_{e}^{n},S_{d}^{n},A^{n}(M_{1})\big)\notin
    T_{\epsilon_{1}}^{(n)}(S_{e},S_{d},A)\big\} \\
      &\mathcal{E}_{3a} = \big\{S_{e}^{n} \neq  \check{S}_{e}^{n}(M_{1},M_{2},j) \ \mbox{for all}\ j \in \mathcal{J}^{(n)}  \big\}\\
      &\mathcal{E}_{3b} = \big\{ \big(S_{e}^{n},A^{n}(M_{1}),X^{n}(M_{1},M_{2},J)\big)\notin
    T_{\epsilon_{1}}^{(n)}(S_{e},A,X)\big\}\\
    &\mathcal{E}_{4a} = \big\{ \big(Y^{n},S_{d}^{n},A^{n}(M_{1})\big)\notin
    T_{\epsilon}^{(n)}(Y,S_{d},A)\big\}\\
     &\mathcal{E}_{4b} = \big\{ \big(Y^{n},S_{d}^{n},A^{n}(\tilde{m}_{1})\big)\in
    T_{\epsilon}^{(n)}(Y,S_{d},A)\ \mbox{for some} \ \tilde{m}_{1} \in \mathcal{M}_{1}^{(n)}, \tilde{m}_{1} \neq M_{1} \big\}\\
      &\mathcal{E}_{5a} = \big\{ \big(Y^{n},S_{d}^{n},A^{n}(M_{1}),\check{S}_{e}^{n}(M_{1},M_{2},J), X^{n}(M_{1},M_{2},J)\big)\notin
    T_{\epsilon}^{(n)}(Y,S_{d},A,S_{e},X)\big\} \\
      &\mathcal{E}_{5b} = \big\{\big(Y^{n},S_{d}^{n},A^{n}(M_{1}),\check{S}_{e}^{n}(M_{1},\tilde{m}_{2},\tilde{j}), X^{n}(M_{1},\tilde{m}_{2},\tilde{j})\big)\in T_{\epsilon}^{(n)}(Y,S_{d},A,S_{e},X)\\
      &\qquad \qquad \mbox{for some}\ (\tilde{m}_{2},\tilde{j}) \in \mathcal{M}_{2}^{(n)} \times \mathcal{J}^{(n)}, (\tilde{m}_{2},\tilde{j}) \neq (M_{2},J)\big\}.
    \end{align*}

The probability of error events can be bounded by
\begin{align*}
    \mathrm{Pr}(\mathcal{E}) 
    &\leq \mathrm{Pr}(\mathcal{E}_{1}) + \mathrm{Pr}(\mathcal{E}_{2}\cap \mathcal{E}_{1}^{c}) + \mathrm{Pr}(\mathcal{E}_{3a}\cap \mathcal{E}_{2}^{c}) + \mathrm{Pr}(\mathcal{E}_{3b}\cap \mathcal{E}_{3a}^{c}\cap \mathcal{E}_{2}^{c}) + \mathrm{Pr}(\mathcal{E}_{4a}\cap \mathcal{E}_{3}^{c}) + \mathrm{Pr}(\mathcal{E}_{4b})\\
    &\qquad + \mathrm{Pr}(\mathcal{E}_{5a}\cap \mathcal{E}_{3}^{c})+\mathrm{Pr}(\mathcal{E}_{5b}),
\end{align*}
where $\mathcal{E}_{3} \triangleq \mathcal{E}_{3a} \cup \mathcal{E}_{3b}$ and $\mathcal{E}_{i}^{c}$ denotes the complement of event $\mathcal{E}_{i}$.

$1)$ Since $A^{n}(M_{1})$ is i.i.d. according to $P_{A}$, by the LLN we have $\mathrm{Pr}(\mathcal{E}_{1})\rightarrow 0$ as $n\rightarrow \infty$.

$2$) Consider the event $E_{1}^{c}$ where we have $A^{n}(M_{1})\in T_{\epsilon_{0}}^{(n)}(A)$.  Since $(S_{d}^{n},S_{e}^{n})$ is distributed according to $\prod_{i=1}^{n}P_{S_{e},S_{d}|A}\big(s_{e,i},s_{d,i}|a_{i}\big)$, by the conditional typicality lemma \cite{ElGamalKim}, we have that $\mathrm{Pr}\big(\mathcal{E}_{2} \cap \mathcal{E}_{1}^{c}\big)\rightarrow 0$ as $n\rightarrow \infty$.

$3a$) Consider the event $\mathcal{E}_{2}^{c}$ where we have $\big(S_{e}^{n},S_{d}^{n},A^{n}(M_{1})\big)\in
    T_{\epsilon_{1}}^{(n)}(S_{e},S_{d},A)$. It follows from the property of typical sequences \cite{ElGamalKim} that $P_{S^{n}_{e}|A}(s^{n}_{e}|a^{n}) \geq 2^{-n[H(S_{e}|A)+ \delta{\epsilon_{1}}]}$.  Since both $S_{e}^{n}$ and $\check{S}_{e}^{n}$ are i.i.d. according to $P_{S_{e}|A}$, we have $\mathrm{Pr}\big(\mathcal{E}_{31} \cap \mathcal{E}_{2}^{c}\big)\rightarrow 0$ as $n\rightarrow \infty$ if $\frac{1}{n}\log |\mathcal{J}^{(n)}| > H(S_{e}|A)+ \delta_{\epsilon_{1}}$, where $\delta_{\epsilon_{1}}\rightarrow 0$ as $\epsilon_{1}\rightarrow 0$.

$3b$) Consider the event $\mathcal{E}_{3a}^{c}$ where $J$ is selected and $S_{e}^{n} = \check{S}_{e}^{n}(M_{1},M_{2},J)$. Since $X^{n}$ is i.i.d. according to $\prod_{i=1}^{n}P_{X|S_{e},A}(x_{i}|s_{e,i},a_i)$, by the conditional typicality lemma, we have that $\mathrm{Pr}\big(\mathcal{E}_{3b} \cap \mathcal{E}_{3a}^{c}\cap \mathcal{E}_{2}^{c}\big)\rightarrow 0$ as $n\rightarrow \infty$.

$4a$) Consider the event $\mathcal{E}_{3}^{c}$ where we have $\big(S_{e}^{n},A^{n}(M_{1}),X^{n}(M_{1},M_{2},J)\big)\in T_{\epsilon_{1}}^{(n)}(S_{e},A,X)$. Since we have $S_{d}-(A,S_{e})-X$ forms a Markov chain and $S_{d}^{n}$ is distributed according to $\prod_{i=1}^{n}P_{S_{d}|A,S_{e}}\big(s_{d,i}|a_{i},s_{e,i}\big)$, by the conditional typicality lemma, we have that $\mathrm{Pr}\big((S_{d}^{n},S_{e}^{n},A^{n},X^{n})\in T_{\epsilon}^{(n)}(S_{d},S_{e},A,X) \big) \rightarrow 1$ as $n\rightarrow \infty$. And since we have the Markov chain $A-(X,S_{e},S_{d})-Y$ and $Y^{n}$ is distributed according to $\prod_{i=1}^{n}P_{Y|X,S_{e},S_{d}}(y_i|x_i,s_{e,i},s_{d,i})$, by using once again the conditional typicality lemma, it follows that $\mathrm{Pr}\big(\big(Y^{n}, S_{e}^{n},S_{d}^{n}, A^{n}(M_{1}),X^{n}(M_{1},M_{2},J)\big)\in T_{\epsilon}^{(n)}(Y,A,S_{e},S_{d},X)\big)\rightarrow 1$ as $n\rightarrow \infty$. This also implies that $\mathrm{Pr}(\mathcal{E}_{4a}\cap \mathcal{E}_{3}^{c})\rightarrow 0$ as $n\rightarrow \infty$.

$4b$) By the packing lemma \cite{ElGamalKim}, we have $\mathrm{Pr}\big(\mathcal{E}_{4b} \big)\rightarrow 0$ as $n\rightarrow \infty$ if $\frac{1}{n}\log |\mathcal{M}_{1}^{(n)}|< I(A;Y,S_{d})- \delta_{\epsilon}$, where $\delta_{\epsilon}\rightarrow 0$ as $\epsilon \rightarrow 0$.

$5a$) As in $\mathcal{E}_{4a}$) we have $\mathrm{Pr}(\mathcal{E}_{5a}\cap \mathcal{E}_{3}^{c})\rightarrow 0$ as $n\rightarrow \infty$.

$5b$) Averaging over all $J=j$, by the packing lemma where $\check{S}_{e}^{n}$ is i.i.d. according to $\prod_{i=1}^{n}P_{S_{e}|A}(\check{s}_{e,i}|a_{i})$ and $X^{n}$ is i.i.d. according to $\prod_{i=1}^{n}P_{X|S_{e},A}(x_i|\check{s}_{e,i},a_{i})$, we have $\mathrm{Pr}\big(\mathcal{E}_{5b} \big)\rightarrow 0$ as $n\rightarrow \infty$ if $\frac{1}{n}\log |\mathcal{M}_{2}^{(n)}|+\frac{1}{n}\log |\mathcal{J}^{(n)}|< I(S_{e},X;Y,S_{d}|A)- \delta_{\epsilon}$.

Finally, by combining the bounds on the code rates,
\begin{align*}
 \frac{1}{n}\log |\mathcal{J}^{(n)}| &> H(S_{e}|A)+ \delta_{\epsilon_{1}} \\
 \frac{1}{n}\log |\mathcal{M}_{1}^{(n)}|&< I(A;Y,S_{d})- \delta_{\epsilon} \\
 \frac{1}{n}\log |\mathcal{M}_{2}^{(n)}|+\frac{1}{n}\log |\mathcal{J}^{(n)}|&< I(S_{e},X;Y,S_{d}|A)- \delta_{\epsilon},
\end{align*}
where $\epsilon >0$ can be made arbitrarily small with increasing block length $n$, we have shown that, for any $\delta>0$,
with $n$ sufficiently large, $\mathrm{Pr}(\mathcal{E})< \delta$ when $R \leq I(A;Y,S_{d})+I(S_{e},X;Y,S_{d}|A)-H(S_{e}|A)$ and $I(S_{e},X;Y,S_{d}|A)-H(S_{e}|A)>0$ for some $P_{A}(a)P_{S_{e},S_{d}|A}(s_{e},s_{d}|a)P_{X|A,S_{e}}(x|a,s_{e})P_{Y|X,S_{e},S_{d}}(y|x,s_{e},s_{d})$. Again, we note that the latter condition is for the successful two-stage coding, i.e., we can split the message into two parts with positive rates.
This together with a random coding argument concludes the achievability proof. \hfill$\blacksquare$

\subsection{Converse Proof of Proposition \ref{eq:C_Se}}
We show that, for any achievable rate $R$, it follows that $R \leq I(A,X,S_{e};Y,S_{d})-H(S_{e}|A)$ and $0 \leq I(S_{e},X;Y,S_{d}|A)-H(S_{e}|A)$ for some $P_{A}(a)P_{S_{e},S_{d}|A}(s_{e},s_{d}|a)P_{X|A,S_{e}}(x|a,s_{e})P_{Y|X,S_{e},S_{d}}(y|x,s_{e},s_{d})$. From the problem formulation, we can write the joint probability mass function,
\begin{align*}
&P_{M,A^{n},S_{e}^{n},S_{d}^{n},X^{n},Y^{n},\hat{M},\hat{S}_{e}^{n}}(m,a^{n},s_{e}^{n},s_{d}^{n},x^{n},y^{n},\hat{m},\hat{s}_{e}^{n}) \nonumber \\ & = \frac{1_{\{f_{a}^{(n)}(m)=a^{n},f^{(n)}(m,s_{e}^{n})=x^{n},g_{s_{e}}^{(n)}(y^{n},s_{d}^{n})=\hat{s}_{e}^{n},g_{m}^{(n)}(y^{n},s_{d}^{n})=\hat{m}\}}}{|\mathcal{M}^{(n)}|}\cdot \prod_{i=1}^{n}P_{S_{e},S_{d}|A}(s_{e,i},s_{d,i}|a_{i})P_{Y|X,S_{e},S_{d}}(y_{i}|x_{i},s_{e,i},s_{d,i}),
\end{align*}
where $M$ is chosen uniformly at random from the set $\mathcal{M}^{(n)}=\{1,2,\ldots,|\mathcal{M}^{(n)}|\}$.

Let us assume that a specific sequence of $(|\mathcal{M}^{(n)}|,n)$ codes exists such that the average error probabilities $P_{m,e}^{(n)}=\delta_{n}' \leq \delta_{n}$, $P_{s_{e},e}^{(n)}=\delta_{n}' \leq \delta_{n}$, and $ \log|\mathcal{M}^{(n)}| = n(R-\delta_{n}') \geq n(R-\delta_{n})$, with $\lim_{n \rightarrow \infty}\delta_{n} =0$.
Then standard properties of the entropy function give
\begin{align*}
  n(R-\delta_{n}) &\leq \log|\mathcal{M}^{(n)}|= H(M) \\
   &= H(M)-H(X^{n},S_{e}^{n},M|Y^{n},S_{d}^{n})+ H(M|Y^{n},S_{d}^{n})+H(S_{e}^{n}|M,Y^{n},S_{d}^{n})+ H(X^{n}|M,S_{e}^{n},Y^{n},S_{d}^{n})\\
   &\overset{(*)}{\leq} H(M)-H(X^{n},S_{e}^{n},M|Y^{n},S_{d}^{n})+ H(M|Y^{n},S_{d}^{n})+H(S_{e}^{n}|Y^{n},S_{d}^{n})
\end{align*}
where $(*)$ holds since $X^{n}=f^{(n)}(M,S^{n}_{e})$ and $f^{(n)}(\cdot)$ is a deterministic function.

Consider the last two terms in the above inequality. Similarly to \cite{Sumszyk2009}, by Fano's inequality, we get
\begin{align*}
H(M|Y^{n},S_{d}^{n}) &\leq h(\delta_{n}') + \delta_{n}'\cdot \log(2^{n(R-\delta_{n}')}-1) = n\epsilon^{(m)}_{n}, \\
H(S_{e}^{n}|Y^{n},S_{d}^{n}) &\leq h(\delta_{n}') + \delta_{n}'\cdot \log(|\mathcal{S}_{e}|^{n}-1) = n\epsilon^{(s_{e})}_{n},
\end{align*}
where $h(\cdot)$ is the binary entropy function, and $\epsilon^{(m)}_{n}\rightarrow 0, \epsilon^{(s_{e})}_{n} \rightarrow 0$ as $n \rightarrow \infty$.

Let $n\epsilon^{(m)}_{n} +n \epsilon^{(s_{e})}_{n} \triangleq n\epsilon_{n}$, where $\epsilon_{n}$ satisfies $\lim_{n\rightarrow \infty}\epsilon_{n} =0$. Now we continue the chain of inequalities and get
\begin{align*}
n(R-\delta_{n}) & \leq H(M)-H(X^{n},S_{e}^{n},M|Y^{n},S_{d}^{n})+n\epsilon_{n}\\
    &= H(M,S^{n}_{e})-H(S^{n}_{e}|M)-H(X^{n},S_{e}^{n},M|Y^{n},S_{d}^{n})+n\epsilon_{n}\\
    &\overset{(a)}{=} H(M,S^{n}_{e},X^{n})-H(S^{n}_{e}|M,A^{n}) -H(X^{n},S_{e}^{n},M|Y^{n},S_{d}^{n})+n\epsilon_{n}\\
    &\overset{(b)}{=} H(M,S^{n}_{e},X^{n})-H(S^{n}_{e}|A^{n})-H(X^{n},S_{e}^{n},M|Y^{n},S_{d}^{n})+n\epsilon_{n}\\
    &= I(X^{n},S_{e}^{n},M;Y^{n},S_{d}^{n})-H(S^{n}_{e}|A^{n})+n\epsilon_{n}
\end{align*}
where $(a)$ follows from the fact that $X^{n}=f^{(n)}(M,S^{n}_{e})$ and $A^{n}=f^{(n)}_{a}(M)$, $(b)$ follows since $S_{e}^{n}$ is independent of $M$ given $A^{n}$.

Continuing the chain of inequalities, we get
\begin{align}
& n(R-\delta_{n}-\epsilon_{n}) \nonumber\\
& \overset{(a)}{\leq} \sum_{i=1}^{n} I(X^{n},S_{e}^{n},M;Y_{i},S_{d,i}|Y^{i-1},S_{d}^{i-1})-H(S_{e,i}|A_{i}) \nonumber \\
& \overset{(b)}{=} \sum_{i=1}^{n} H(Y_{i},S_{d,i}|Y^{i-1},S_{d}^{i-1})-H(Y_{i},S_{d,i}|Y^{i-1},S_{d}^{i-1},X^{n},S_{e}^{n},M,A_{i}) -H(S_{e,i}|A_{i})\nonumber \\
& \overset{(c)}{\leq} \sum_{i=1}^{n}H(Y_{i},S_{d,i})-H(Y_{i},S_{d,i}|X_{i},S_{e,i},A_{i})-H(S_{e,i}|A_{i})\nonumber  \\
& = \sum_{i=1}^{n}I(A_{i},S_{e,i},X_{i};Y_{i},S_{d,i})-H(S_{e,i}|A_{i}), \label{eq:converseR_se}
\end{align}
where
$(a)$ follows from the memoryless property of the channel $P_{S_{e}|A}$, $(b)$ follows from the fact that $A^{n}=f^{(n)}_{a}(M)$, and $(c)$ follows from the Markov chain $(Y^{i-1},S_{d}^{i-1},X^{n\setminus i},S_{e}^{n\setminus i},M)-(X_{i},S_{e,i},A_{i})-(Y_{i},S_{d,i})$ and that conditioning reduces entropy.

Next we prove the constraint which does not involve rate of the communication. From the standard properties of the entropy function, we observe that
\begin{align*}
  &0 \leq H(M|A^{n}) \\
  &\overset{(*)}{=} H(M|A^{n})-H(X^{n},S_{e}^{n},M|Y^{n},S_{d}^{n},A^{n})+H(M|Y^{n},S_{d}^{n},A^{n})+ H(S_{e}^{n}|M,Y^{n},S_{d}^{n},A^{n})\\
  & \leq H(M|A^{n})-H(X^{n},S^{n}_{e},M|Y^{n},S_{d}^{n},A^{n})+ H(M|Y^{n},S_{d}^{n})+ H(S_{e}^{n}|Y^{n},S_{d}^{n}),
\end{align*}
where $(*)$ follows from the fact that $X^{n}=f^{(n)}(M,S^{n}_{e})$, and $f^{(n)}(\cdot)$ is a deterministic function.

Again, applying Fano's inequality to last two terms in the above inequality, we get
\begin{align}
-n\epsilon_{n}  &\leq H(M|A^{n})-H(X^{n},S^{n}_{e},M|Y^{n},S_{d}^{n},A^{n}) \nonumber \\
    &= H(M,S^{n}_{e}|A^{n})-H(S^{n}_{e}|M,A^{n})-H(X^{n},S^{n}_{e},M|Y^{n},S_{d}^{n},A^{n}) \nonumber \\
    &\overset{(a)}{=} H(M,S^{n}_{e},X^{n}|A^{n})-H(S^{n}_{e}|M,A^{n}) -H(X^{n},S^{n}_{e},M|Y^{n},S_{d}^{n},A^{n}) \nonumber \\
    &\overset{(b)}{=} H(M,S^{n}_{e},X^{n}|A^{n})-H(S^{n}_{e}|A^{n})-H(X^{n},S^{n}_{e},M|Y^{n},S_{d}^{n},A^{n}) \nonumber \\
    &= I(X^{n},S^{n}_{e},M;Y^{n},S_{d}^{n}|A^{n})-H(S^{n}_{e}|A^{n})\nonumber \\
    & \overset{(c)}{=} \sum_{i=1}^{n} I(X^{n},S^{n}_{e},M;Y_{i},S_{d,i}|Y^{i-1},S_{d}^{i-1},A^{n})-H(S_{e,i}|A_{i})\nonumber \\
    & \overset{(d)}{\leq} \sum_{i=1}^{n}H(Y_{i},S_{d,i}|A_{i})-H(Y_{i},S_{d,i}|X_{i},S_{e,i},A_{i})-H(S_{e,i}|A_{i}) \nonumber \\
    & = \sum_{i=1}^{n}I(S_{e,i},X_{i};Y_{i},S_{d,i}|A_{i})-H(S_{e,i}|A_{i}), \label{eq:converse0_se}
\end{align}
where $(a)$ follows from the fact that $X^{n}=f^{(n)}(M,S^{n}_{e})$, $(b)$ holds since $S_{e}^{n}$ is independent of $M$ given $A^{n}$, $(c)$ follows from the memoryless property of the channel $P_{S_{e}|A}$, and $(d)$ follows from the Markov chain\\ $(Y^{i-1},S_{d}^{i-1},X^{n\setminus i},S_{e}^{n\setminus i},A^{n\setminus i},M)-(X_{i},S_{e,i},A_{i})-(Y_{i},S_{d,i})$ and that conditioning reduces entropy.

Let $Q$ be a random variable uniformly distributed over $\{1, \ldots, n\}$ and independent of $(M,A^n,X^n,S_{e}^n,S_{d}^n,Y^{n})$,
we can rewrite \eqref{eq:converseR_se} and \eqref{eq:converse0_se} as
\begin{align*}
R &\leq \frac{1}{n}\sum_{i=1}^{n} I(A_{i},S_{e,i},X_{i};Y_{i},S_{d,i}|Q=i)-H(S_{e,i}|A_{i},Q=i)+\delta_{n}+\epsilon_{n}\\
  &=  I(A_{Q},S_{e,Q},X_{Q};Y_{Q},S_{d,Q}|Q)-H(S_{e,Q}|A_{Q},Q)+\delta_{n}+\epsilon_{n}
\end{align*}
and
\begin{align*}
0 &\leq \frac{1}{n}\sum_{i=1}^{n} I(S_{e,i},X_{i};Y_{i},S_{d,i}|A_{i},Q=i)-H(S_{e,i}|A_{i},Q=i)+\epsilon_{n}\\
  &= I(S_{e,Q},X_{Q};Y_{Q},S_{d,Q}|A_{Q},Q)-H(S_{e,Q}|A_{Q},Q)+\epsilon_{n}.
\end{align*}
Now since we have that $P_{S_{e,Q},S_{d,Q}|A_{Q}}=P_{S_{e},S_{d}|A}$, $P_{Y_{Q}|X_{Q},S_{e,Q},S_{d,Q}}=P_{Y|X,S_{e},S_{d}}$, and $A_{Q}-(X_{Q},S_{e,Q},S_{d,Q})-Y_{Q}$ forms a Markov chain, we identify $A \triangleq A_{Q}$, $S_{e} \triangleq S_{e,Q}$, $S_{d} \triangleq S_{d,Q}$, $X \triangleq X_{Q}$, and $Y \triangleq Y_{Q}$ to finally obtain
\begin{align*}
    R &\leq I(A,S_{e},X;Y,S_{d}|Q)-H(S_{e}|A,Q)+\delta_{n}+\epsilon_{n}\\
    \mbox{and} \ \ 0 &\leq I(S_{e},X;Y,S_{d}|A,Q)-H(S_{e}|A,Q)+\epsilon_{n},
\end{align*}
for some joint distribution
\begin{align}
&P_{Q}(q)P_{A|Q}(a|q)P_{S_{e},S_{d}|A}(s_{e},s_{d}|a)P_{X|A,S_{e},Q}(x|a,s_{e},q) P_{Y|X,S_{e},S_{d}}(y|x,s_{e},s_{d}). \label{eq:pdf3}
\end{align}
From the joint distribution in \eqref{eq:pdf3} and the derivation of \eqref{eq:chain1}-\eqref{eq:chain5} (see Lemma \ref{lemma:markovchain2}), we have the Markov chains $Q-A-(S_{e},S_{d})$ and $Q-(X,A,S_{e})-(Y,S_{d})$.

To this end, we note that under any distribution of the form above, we have
\begin{align*}
    I(A,S_{e},X;Y,S_{d}|Q)-H(S_{e}|A,Q)
                                &= H(Y,S_{d}|Q)-H(Y,S_{d}|A,X,S_{e},Q)-H(S_{e}|A,Q) \\
                                & \leq H(Y,S_{d})-H(Y,S_{d}|A,X,S_{e},Q)-H(S_{e}|A,Q)\\
                                & \overset{(*)}{=} H(Y,S_{d})-H(Y,S_{d}|A,X,S_{e})-H(S_{e}|A)\\
                                &=I(A,S_{e},X;Y,S_{d})-H(S_{e}|A),
\end{align*}
and
\begin{align*}
I(S_{e},X;Y,S_{d}|A,Q)-H(S_{e}|A,Q)
                                &= H(Y,S_{d}|A,Q)-H(Y,S_{d}|A,X,S_{e},Q)-H(S_{e}|A,Q) \\
                                & \leq H(Y,S_{d}|A)-H(Y,S_{d}|A,X,S_{e},Q)-H(S_{e}|A,Q)\\
                                & \overset{(*)}{=} H(Y,S_{d}|A)-H(Y,S_{d}|A,X,S_{e})-H(S_{e}|A)\\
                                &=I(S_{e},X;Y,S_{d}|A)-H(S_{e}|A),
\end{align*}
where both inequalities $(*)$ follows from the Markov chains $(Y,S_{d})-(X,A,S_{e})-Q$ and $S_{e}-A-Q$, and the joint distribution of $(A,S_{e},S_{d},X,Y)$ is of the form

\begin{align*}
&\sum_{q \in \mathcal{Q}}P_{Q}(q)P_{A|Q}(a|q)P_{S_{e},S_{d}|A}(s_{e},s_{d}|a)P_{X|A,S_{e},Q}(x|a,s_{e},q)P_{Y|X,S_{e},S_{d}}(y|x,s_{e},s_{d})\\ & \qquad=P_{A}(a)P_{S_{e},S_{d}|A}(s_{e},s_{d}|a)P_{X|A,S_{e}}(x|a,s_{e})P_{Y|X,S_{e},S_{d}}(y|x,s_{e},s_{d}).
\end{align*}
The proof is concluded by taking the limit $n \rightarrow \infty$. \hfill$\blacksquare$

\bibliographystyle{IEEEtran}
\bibliography{IEEEabrv,bibliography}

\end{document}